\newcommand{\cost}{{\mathit{cost}}}
\newcommand{\calR}{{{\mathcal{R}}}}
\newcommand{\dev}{\textit{solvent}}
\newcommand{\al}{\textit{liquid}}
\newcommand{\pvp}{\textit{PvP}}
\newcommand{\EES}{\mathit{EES}}
\newcommand{\bpb}{\textit{BpB}}
\newcommand{\BpB}{\textit{BpB}}
\newcommand{\LQ}{\textit{LQ}}
\newcommand{\SL}{\textit{SL}}
\DeclareMathOperator*{\argmax}{arg\,max}
\newtheorem{theorem}{Theorem}[section]
\newtheorem{definition}[theorem]{Definition}
\newtheorem{remark}{Remark}
\newtheorem{lemma}[theorem]{Lemma}
\newtheorem{example}[theorem]{Example}
\title{Streamlining Equal Shares}
\author{Sonja Kraiczy}
\author{Isaac Robinson} 
\author{Edith Elkind} 
\affil{ \small Department of Computer Science, University of Oxford, UK\protect\\ \vspace*{0.1cm} sonja.kraiczy, isaac.robinson@cs.ox.ac.uk}
\affil{ \small Department of Computer Science, Northwestern University, USA\protect\\ \vspace*{0.1cm} edith.elkind@northwestern.edu}
\date{}
\begin{document}


\maketitle

\begin{abstract}
Participatory budgeting (PB) is a form of citizen participation that allows citizens to decide how public funds are spent. 
Through an election, citizens express their preferences on various projects (spending proposals). A voting mechanism then determines which projects will be approved. The Method of Equal Shares (MES) is the state of the art algorithm for a proportional, voting based approach to participatory budgeting and has been implemented in cities across Poland and Switzerland. A significant drawback of MES is that it is not \textit{exhaustive} meaning that it often leaves a portion of the budget unspent that could be used to fund additional projects. To address this, in practice the algorithm is combined with a completion heuristic - most often the ``add-one" heuristic which artificially increases the budget until a heuristically chosen threshold. This heuristic is computationally inefficient and will become computationally impractical if PB is employed on a larger scale. We propose the more efficient \textsc{add-opt} heuristic for Exact Equal Shares (EES), a variation of MES that is known to retain many of its desirable properties. We solve the problem of identifying the next budget for which the outcome for EES changes in $O(mn)$ time for cardinal utilities and $O(m^2n)$ time for uniform utilities, where $m$ is the number of projects and $n$ is the number of voters. Our solution to this problem inspires the efficient \textsc{add-opt} heuristic which bypasses the need to search through each intermediary budget. We perform comprehensive experiments on real-word PB instances from Pabulib and show that completed EES outcomes usually match the proportion of budget spent by completed MES outcomes. Furthermore, the \textsc{add-opt} heuristic matches the proportion of budget spend by add-one for EES.

\end{abstract}
\newpage
\tableofcontents

\newpage
\section{Introduction}

Participatory budgeting is a democratic process that enables citizens to decide how public funds should be spent.
A natural tool for this task is voting: the governing body (e.g., a city) runs an election to gather citizens' preferences on various projects and uses a voting mechanism to decide how to allocate the funds.
The projects offered for a public vote are specific proposals that come with predetermined scope and cost, made publicly known before the election, e.g., 
 ``Build a cycle lane on Main Street.".
So, rather than electing representatives who make budgetary decisions, citizens vote directly on specific proposals. Today, participatory budgeting is used worldwide in hundreds of cities \citep{de2022international, wampler2021participatory} as well as 
 in decentralized autonomous organizations (DAOs) \citep{wang2019decentralized, chohan2017decentralized}, which are blockchain-based self-governed communities. For example, in community funding, members can contribute funds to the DAO and then vote on which projects the funds should be granted to.

A commonly used, naive approach to determine which projects should be funded is \textit{Greedy Approval (GrA)}: projects are selected in descending order of votes, skipping those that exceed the remaining budget.
However, GrA can be unfair: if 51\% of voters support enough projects to exhaust the entire budget, they effectively control 100\% of the budget, leaving the remaining 49\% of voters unrepresented, whereas ideally, if 10\% of citizens vote for, say,  cycling-related projects, approximately 10\% of the budget should be allocated to the cycling infrastructure.
This issue is addressed by the Method of Equal Shares (MES), which is a state-of-the-art voting rule for participatory budgeting 
this provides representation guarantees to groups of voters with shared preferences \citep{MES}. In 
a behavioral experiment by \citet{yang2024designing} voters consistently found the outcomes of MES to be fairer than those of GrA. MES has been successfully implemented in Świecie and Wieliczka in Poland, in Aarau and Winterthur in Switzerland, as well as Assen in the Netherlands \citep{mesweb}.

To overcome the limitations of GrA, MES adopts a market-based approach: The budget is 
split evenly\footnote{MES can also operate if starting with unequal budget distribution, which may be appropriate in some settings, such as DAOs.} among all voters, and, for a project to be selected, it must be funded by voters who support it, with all voters paying the same amount (with a caveat that if a voter cannot afford to pay their share, they can contribute their entire remaining budget instead). This limits how much of the budget any group of voters can control. Like GrA, MES selects the projects sequentially; however, in contrast to GrA, the projects are ranked based on the utility per unit of money paid by each fully contributing voter.
This twist turns out to capture proportionality as defined by the axiom of Extended Justified Representation (EJR) \citep{aziz2017justified,MES2}, which ensures that each group of voters with shared preferences holds voting power proportional to its size.

A shortcoming of MES is that it often terminates when the leftover budget is large enough to pay for projects that remained unfunded \citep{mesweb}. Consequently, MES is said to be \textit{non-exhaustive}. This underspending is undesirable: indeed, 
citizens are unhappy when leftover funds could have been used to finance projects they voted for, and many governments have a ``use it or lose it'' policy, where underspending results in subsequent budgets being cut. This often leads to low-value projects being funded when excess budget is available \cite{liebman2017expiring}.
 To mitigate the issue of underspending, the base MES algorithm is supplemented with a heuristic, known as a \textit{completion method}, to complete the MES outcome. 

The strategy currently used in practice is to run MES with a virtual budget that is larger than the actual budget; the size of the virtual budget is selected so that the method spends a larger fraction of the actual budget (but does not exceed it). 
This strategy is appealing because it does not increase the conceptual complexity of the method: all voters still have equal voting power and, as before, the most cost-efficient projects are selected to be funded, until voters run out of money.
In contrast, combining MES with a different algorithm (such as, e.g., GrA) would be harder to explain to the stakeholders, and therefore less appealing in practice.

The challenge, then, is how to determine the ``correct'' virtual budget. The commonly used \textsc{add-one} heuristic iteratively increases each voter's budget by one
until either (1) the solution becomes exhaustive or (2) the true budget is exceeded, and then returns the last feasible solution.
However, this approach is computationally expensive: it often produces identical outcomes across most budget increments \citep{kraiczy2023adaptive}, thereby wasting computational resources. This is especially relevant in the context of research that simulates the method on random instances, as achieving statistical significance requires many repetitions.
Importantly, MES has only been employed in smaller communities so far; the \textsc{add-one} heuristic may be infeasible for large cities or DAOs.
Another difficulty with the \textsc{add-one} heuristic is the non-monotonicity of MES:
MES may overspend at a virtual budget of $b$, but then produce a feasible solution at $b'>b$. This means that the \textsc{add-one} method could terminate early and miss the virtual budget that would spend the highest fraction of the budget. \Cref{fig:polish_example} shows a real-life example of this phenomenon. 
Perhaps even more importantly, even if the costs of all projects are integer, the optimal virtual budget may be non-integer, so {\sc add-opt} may skip over the optimal virtual budget. Indeed, our analysis in Section~\ref{sec:emp} shows that for a non-trivial number of real-life instances there are outcomes that can only be achieved by a fractional budget.

Relatedly, the complexity of finding the optimal virtual budget under MES remains unknown. It is not even clear if the associated decision problem (given a value $b'\le b$, determining if there is a virtual budget $b^*$ such that MES with budget $b^*$ spends at least $b'$) is in NP: while $b^*$ can be assumed to be rational: due to its sequential nature, MES may potentially  produce numbers with super-polynomial bit complexity.

\begin{figure*}[t]
\centering
    \begin{minipage}[t]{0.98\textwidth}
        \includegraphics[width=0.99\textwidth]{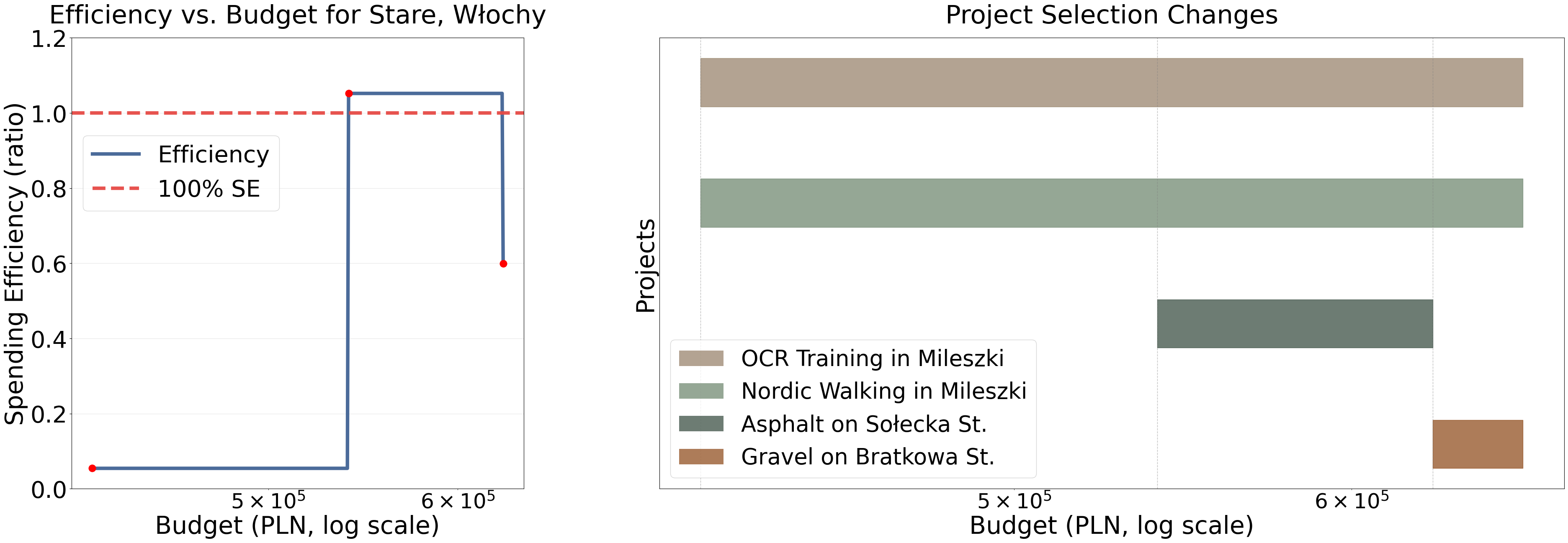}
        \captionsetup{width=.8\linewidth}
        \caption{A real-world example where considering all budgets significantly increases spending efficiency, i.e., the fraction of the actual budget spent. The graph on the left shows the spending efficiency of the winning project set for a given virtual budget. The graph on the right shows which projects are selected for a given virtual budget. Here, the \textsc{add-one} heuristic stops as soon as the budget is exceeded, which happens when the Asphalt project is selected. This leads to less than 15\% spending efficiency. Increasing the budget further leads to the Asphalt project being dropped in favor of the Gravel project, as depicted on the right, increasing the spending efficiency to 60\%}
        \label{fig:polish_example}
    \end{minipage}
\end{figure*}

\subsection{Contribution}
We consider a simplified variant of MES, which we call the {\em Exact Equal Shares (EES)} method. Under this rule, all voters who contribute to a project pay exactly the same amount (eliminating the caveat that the voters who are about to run out of money are allowed to contribute their entire budget); we will explain the differences between the two methods in Section~\ref{sec:ees}.
This rule was implicit in the work of \citet{peters2021market} and \citet{kraiczy2023adaptive}, whose results imply that it retains desirable proportionality properties of MES, but neither paper explicitly defines it.

The simplicity of EES enables us to propose a more principled and efficient approach to finding a good virtual budget. Our main theoretical contribution is a completion method \textsc{add-opt} for EES, which finds the minimum per-voter budget increase that results in changing the set of selected projects or the set of voters paying for a project. The runtime of \textsc{add-opt} is linear in the number of voters $n$. More specifically, for cardinal utilities (i.e., when we evaluate the projects assuming that each voter derives one unit of utility from each selected project they approve), its runtime is $O(mn)$, whereas for the more general model of uniform utilities defined in Section~\ref{sec:ees} (which subsumes, e.g., cost utilities \cite{MES2}), its runtime is $O(m^2n)$. 

By using {\sc add-opt}, we can iterate through all outcomes that can be accomplished by running EES with a virtual budget, and thereby ensure that we do not miss the optimal virtual budget; on the other hand, in contrast to {\sc add-one}, {\sc add-opt} avoids redundant computation. 
Another advantage of \textsc{add-opt} over {\sc add-one} is that it is \textit{currency agnostic}, i.e., the results remain consistent across currencies. In contrast, when using {\sc add-one}, the practitioners need to decide what is an appropriate unit of currency: this choice is non-trivial as, e.g., one US dollar is approximately equal to 16,000 Indonesian rupiahs.

The {\sc add-opt} heuristic goes through all projects (including ones currently selected), and checks if, by increasing the virtual budget, it can increase the number of voters contributing to that project. While considering all projects is important for ensuring that the optimal virtual budget is not missed, we can achieve faster runtime by only considering projects that are not currently selected; we refer to the resulting heuristic as {\sc add-opt-skip}.

In order to evaluate the performance of EES with {\sc add-opt} and {\sc add-opt-skip},  
we perform extensive experiments on real-life participatory budgeting instances. Our results indicate that, on average, EES with {\sc add-opt-skip} spends the same proportion of the budget as MES with {\sc add-one} while enjoying far higher computational efficiency and eliminating counterintuitive phenomena such as the one illustrated in \Cref{fig:polish_example}. 

Complete proofs of results marked by $\spadesuit$ are deferred to \Cref{app:delproofs}.
\subsection{Related Work}
Much of the progress in participatory budgeting has built on prior work in multiwinner voting, i.e.,
a special case of participatory budgeting with unit costs \citep{lackner2023approval}.
The Extended Justified Representation (EJR) axiom was first introduced in this context by \citet{aziz2017justified}.

Besides MES, Phragm\'en's method \citep{phragmen:p1} and Proportional Approval Voting (PAV) \citep{thiele} are well-established proportional rules in the multiwinner voting setting; \citet{janson2016phragmen} provides
an excellent overview.
PAV satisfies EJR \citep{aziz2017justified} and has optimal proportionality degree, but is NP-hard to compute. Its threshold-based local search variant 
is polynomial-time computable  \citep{ejr,kraiczy2023properties}, but may be hard to explain to voters. 
Phragm\'en's sequential method is also market-based.
Unlike MES, it is exhaustive, but it does not satisfy EJR. 
Moreover, while \citet{MES2} extended
MES to participatory budgeting with general additive utilities, neither PAV nor Phragm\'en have been adapted to this general setting.

Exact Equal Shares for cardinal utilities was implicitly studied by \citet{peters2021market} and \citet{kraiczy2023adaptive}.
\citet{peters2021market} introduce a stability notion for participatory budgeting that is satisfied by the outcome of Exact Equal Shares. \citet{kraiczy2023adaptive} propose an adaptive version of EES for cardinal utilities, which uses the outcome of EES for a smaller budget to compute the outcome of EES for a larger budget more efficiently, an alternative approach that complements our work.
They also consider the problem of finding the minimum budget increment that changes the election outcome, and propose an $O(n^2m)$ algorithm for this problem in the context of cardinal utilities. However, in practice, both in cities and in DAOs, the number of voters ($n$) is usually large, while the number of projects ($m$) is relatively small. As EES itself has a linear dependency on $n$, a completion method with quadratic dependency on $n$ is undesirable.


\section{Preliminaries}
For each $t \in \mathbb{N}$, we write $[t] = \{1,2,\ldots, t\}$. 
\paragraph{Participatory Budgeting (PB)}{ We first introduce the model of participatory budgeting with cardinal ballots. 
An {\em election} is a tuple $E(b) = (N,P,\{A_i\}_{i\in N}, \cost,b)$, where:
\begin{enumerate}
\item $b \in \mathbb{Q}_{\geq 0}$ is the available {\em budget};
\item $P = \{p_1, \ldots, p_m\}$ is the set of {\em projects}; for the purposes of tie-breaking, we fix a total order $\lhd$ on $P$, and write $p=\max Q$ for $Q\subseteq P$ whenever $p'\lhd p$ for all $p'\in Q\setminus\{p\}$.
\item $N = [n]$ is the set of {\em voters}, and for each $i\in N$ the set $A_i\subseteq P$ is the {\em ballot} of voter $i$;
\item $\cost:P \rightarrow \mathbb{Q}_{\geq 0}$ is a function that for each $p \in P$ indicates the cost of selecting $p$. For each $Q \subseteq P$, we denote the total cost of $Q$ by $\cost(Q)=\sum_{p \in Q} \cost(p)$.
\end{enumerate}
Given a project $p\in P$, we write $N_p=\{i\in N\mid p\in A_i\}$ for the set of voters who approve $p$.
An {\em \textcolor{Maroon} {outcome}} for an election $E(b)$ is a set of projects $W\subseteq P$ that is {\em \textcolor{Maroon}{feasible}}, i.e., satisfies  $\cost(W) \leq b$.  
Our goal is to select an outcome based on voters' ballots.
An {\em aggregation rule} (or, in short, a {\em rule}) is a function $\calR$ that for each election $E$ selects a feasible outcome $\calR(E)=W$.}
\paragraph{Utility Models}{
We assume that voters' utilities are induced by a \textit{uniform utility function}
$u:P\rightarrow \mathbb{Q}_{\geq 0}$ so that the utility of a voter $i\in N$ for a project $p\in P$ is given by $u_i(p)=u(p)\cdot\mathbb{I}[p\in A_i]$, where $\mathbb{I}$ is the indicator function. Important special cases are $u(p)\equiv 1$ ({\em cardinal utilities}) and $u(p)= \cost(p)$ for all $p\in P$ ({\em cost utilities}).
For each $T\subseteq P$
we write $u(T)=\sum_{p\in T}u(p)$ and $u_i(T)=\sum_{p\in T}u_i(p)$.
}


\smallskip
\paragraph{Price System
}{A {\em price system} for an outcome $W$ in an election $E(b) = (N,P,\{A_i\}_{i\in N}, \cost, b)$
 is a collection of nonnegative rational numbers 
$X = (x_{i, p})_{i\in N, p\in P}$, where $x_{i,p}$ denotes voter $i$'s payment for project $p$, satisfying the following three conditions:
\begin{itemize}
\item Every voter spends at most her share of the budget: $\sum_{p\in W} x_{i,p}\leq \frac{b}{n}$.
\item For each project $p\in W$, the sum of payments towards $p$ equals its cost: $\sum_{i\in N}x_{i, p} = \cost(p)$ for all $p\in W$.
\item Voters can only pay for projects they approve: for each $i\in N$ if $x_{i,p}>0$ then $p\in W\cap A_i$.
\end{itemize}
Note that our definition of a price system is less demanding than that of \citet{MES}, who additionally require that for each project in $P\setminus W$ its supporters do not have enough money left to pay for it.

If $W$ is an outcome for $E(b)$ and $X$ is a price system for $W$,  we call the pair $(W,X)$
a {\em \textcolor{Maroon}{solution}} for $E(b)$. 
Let $N_p(X)=\{i\in N\mid x_{i, p}>0\}$ be the set of voters who pay for $p$ in $X$. We define $O_p(X)=N_p\setminus N_p(X)$ to be the subset of voters who approve $p$, but do not pay for it in $X$.
Also, let $r_i=\frac{b}{n}-\sum_{p\in P}x_{i, p}$ denote voter $i$'s {\em \textcolor{Maroon}{leftover budget}}. We say that $X$ is {\em\textcolor{Maroon}{equal-shares}} if  for each $p\in W$ and all $i, j\in N_p(X)$ we have $x_{i,p}=x_{j,p}$: that is, the voters who pay for $p$ share the cost of $p$ exactly equally. In this case, we also say that the solution $(W,X)$ is {\em\textcolor{Maroon}{equal-shares}}.

\section{Exact Equal Shares Method}\label{sec:ees}


\citet{peters2021market} and \citet{kraiczy2023adaptive} study a variant of MES, which we will call Exact Equal Shares (EES).\footnote{In both of these papers, this rule and the analysis of its properties are a byproduct of the framework developed for other purposes, and neither paper coins a name for it.} While they only define EES for cardinal utilities, we will now extend their definition to uniform utilities.

\paragraph{Description}
Given an election $E(b)$, EES starts by allocating each voter a budget of $\frac{b}{n}$ and setting $W=\varnothing$. It then iteratively identifies 
the set of all projects $p\in P\setminus W$
such that a subset of voters $V\subset N_p$ have enough leftover budget to split the cost of $p$ equally (i.e., by paying $\cost(p)/|V|$ each), selects a project in this set
 with the maximum {\em\textcolor{Maroon}{bang per buck}} $\frac{u(p)}{\nicefrac{\cost(p)}{|V|}}$, adds it to $W$ and updates the budgets. 
 We assume that ties are broken according to the order $\lhd$ on $P$ (see \Cref{alg:zero} for the pseudocode). We write $\EES(E(b),u)$ for the solution $(W,X)$ returned by EES when run on the election $E(b)$ with uniform utility function $u$. For cardinal utilities we will simply write $\EES(E(b))$, and for cost utilities we will write $\EES(E(b),\cost)$.
 
The standard Method of Equal Shares (which we refer to as MES) operates similarly, with one exception: if a voter $i$ approves a project, but cannot afford to pay as much as the other contributors, she is allowed to help by contributing her entire leftover budget $r_i$. The order in which projects are selected is nevertheless computed based on the contributions of fully paying voters. A detailed description of MES is provided by \citet{MES2}.

\begin{algorithm}
{
\LinesNumbered
	\SetAlgoNoLine
	\KwIn{$E(b) = (N,P,\{A_i\}_{i\in N}, \cost, b)$, $u:P\rightarrow \mathbb{Q}_{\geq 0}$}
	\KwOut{Solution $(W,X)$}
            $W$ = $\varnothing$, 
            $X$ = $\textbf{0}^{n\cdot m}$, 
            $r_i$ = $\frac{b}{n}$ for all $i\in N$
            \;	
	\While{true}{
$\Phi$ = $\left\{(p\in P\!\setminus\!W, V\subseteq N_p) \mid
r_i\ge \frac{\cost(p)}{|V|}\quad\forall i\in V\right\}$\;
\If{$\Phi=\varnothing$}{\Return $(W,X)$}
\Else{
Let $\Phi^*=\argmax_{(p, V)\in \Phi} \frac{|V|\cdot u(p)}{\cost(p)}$\; 
Choose $(p^*, V^*)$ from $\Phi^*$ so that $p^*\lhd p$ for all $(p, V)\in \Phi^*\setminus\{(p^*, V^*)\}$\;
$W$ = $W \cup \{p^*\}$\;
$x_{i,p^*}$ = $\frac{\cost(p^*)}{|V^*|}$,  $r_i$ = $r_i-\frac{\cost(p^*)}{|V^*|}\quad\forall i\in V^*$\;}}
}
\caption{EES for uniform utilities}
	\label{alg:zero}	
\end{algorithm}

\paragraph{Proportionality Guarantees} A key feature of MES is that for cardinal utilities is satisfies a strong proportionality axiom known as Extended Justified Representation (EJR). We will now formulate this axiom and its relaxation EJR1, and argue that EES is just as attractive as MES from this perspective.

\begin{definition}
[Extended Justified Representation]
Given an election $E(b) = (N, P, \{A_i\}_{i\in N}, \cost, b)$ 
and a subset of projects $T\subseteq P$, 
we say that a group of voters $V$ is {\em $T$-cohesive} if $T \subseteq \cap_{i\in V} A_i$ and
$\frac{|V|}{n}\cdot b \geq \cost(T)$. 

An outcome $W$ for $E(b)$ is said to provide {\em Extended Justified Representation (EJR)} (respectively, {\em Extended Justified Representation up to one project (EJR1)}) for uniform utilities $u$ if for each $T \subseteq P$ and each $T$-cohesive group $V$ of voters there exists a voter $i \in V$ such that $u_i(W) \geq u(T)$ (respectively, $u_i(W\cup \{p\}) \geq u(T)$ for some $p\in T\setminus W$).

A rule $\calR$ satisfies EJR  (respectively, EJR1)
if for each election $E(b)$ the outcome $\calR(E(b))$ provides EJR (respectively, EJR1). 
\end{definition}

\citet{MES} show that MES satisfies EJR 
for cardinal utilities, 
and the results of 
\citet{peters2021market} and \citet{kraiczy2023adaptive} imply that
the same is true for EES.
In contrast, \citet{MES2} show that finding an outcome that satisfies EJR is NP-hard for uniform utilities (they reduce from {\sc Knapsack}, and construct an election with a single voter, so the utilities are clearly uniform), but MES satisfies EJR1 in an even more general model, which encompasses uniform utilities (namely, arbitrary additive utilities). For completeness, we give a simple proof that EES, too,  satisfies EJR1 for uniform utilities (\Cref{app:delproofs}).

\begin{restatable}{theorem}{ejr}\label{thm:ejrone} 
EES satisfies EJR1 for uniform utilities.
\end{restatable}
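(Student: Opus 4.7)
The plan is a proof by contradiction. Suppose there is a $T$-cohesive group $V$ witnessing that the output $W$ of EES violates EJR1. Because $T \subseteq A_i$ for every $i \in V$, the case $T \subseteq W$ gives $u_i(W) \geq u(T)$ immediately, so I may assume $T \setminus W \neq \varnothing$ and $u_i(W) + u(p) < u(T)$ for every $i \in V$ and every $p \in T \setminus W$. The argument exploits two facts: EES selects projects with non-increasing bang-per-buck, since the set of affordable pairs $(p,V')$ shrinks over time as the remaining budgets $r_i$ only decrease; and $T$-cohesiveness together with the uniform share $b/n$ guarantees that every voter in $V$ initially has enough budget to pay the equal share $\cost(p)/|V|$ of every $p \in T$, because $b/n \geq \cost(T)/|V| \geq \cost(p)/|V|$.

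Fix $p^* \in T \setminus W$ and let $t^*$ be the first round of EES at which some voter $i^* \in V$ has leftover budget drop below $\cost(p^*)/|V|$. Such a round must exist: initially every voter in $V$ has $r_i = b/n \geq \cost(p^*)/|V|$, while at termination $p^*$ must be unaffordable by $V \subseteq N_{p^*}$, for otherwise $(p^*, V) \in \Phi$ and EES would not have stopped. Through round $t^*$, every voter in $V$ still has $r_i \geq \cost(p^*)/|V|$ at the start of the round, so $(p^*, V) \in \Phi$ with bang-per-buck $\alpha^* := |V| u(p^*)/\cost(p^*)$; by the $\argmax$ selection rule, every project selected in rounds $1, \ldots, t^*$ therefore has bang-per-buck at least $\alpha^*$. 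Summing the per-project inequality $\cost(p)/|V_p^*| = u(p)/\alpha_p \leq u(p)/\alpha^*$ over all projects $i^*$ has paid for through round $t^*$ yields $u_{i^*}(W^{(t^*)}) \geq \alpha^* \cdot \pi_{i^*}^{(t^*)}$, and together with $\pi_{i^*}^{(t^*)} > b/n - \cost(p^*)/|V|$ and $|V|\,b/n \geq \cost(T)$ this simplifies to
\[
u_{i^*}(W) \;\geq\; u_{i^*}(W^{(t^*)}) \;>\; u(p^*) \cdot \frac{\cost(T) - \cost(p^*)}{\cost(p^*)}.
\]
Substituting into the EJR1-violation bound $u_{i^*}(W) < u(T) - u(p^*)$ and simplifying produces the structural constraint $u(p^*)/\cost(p^*) < u(T)/\cost(T)$, which must hold for every $p^* \in T \setminus W$.

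The main obstacle is converting this structural constraint into a direct contradiction: on its own it only says that projects missing from $W$ have below-average utility-per-cost, which is consistent with EES behavior because the selected projects in $T \cap W$ can compensate. To close the argument I would combine the bang-per-buck lower bound with the trivial bound $u_{i^*}(W) \geq u(T \cap W)$, which follows from $T \cap W \subseteq A_{i^*}$. In the base case $|T \setminus W|=1$, writing $T \setminus W = \{p^*\}$ gives $u_{i^*}(W) \geq u(T \cap W) = u(T) - u(p^*)$, immediately contradicting the EJR1 violation. For $|T \setminus W| \geq 2$ the combination is more delicate because projects in $T \cap W$ that $i^*$ paid for are potentially double-counted between the two lower bounds; I expect to resolve this either by sharpening the critical-moment accounting to subtract the overlap $W^{\mathrm{paid},\leq t^*}_{i^*} \cap T \cap W$ explicitly, or by applying the critical-moment argument simultaneously to every $p \in T \setminus W$ and summing the resulting per-project inequalities to obtain the contradiction $u(T \setminus W) < u(T \setminus W)$ using $|V|\,b/n \geq \cost(T)$.
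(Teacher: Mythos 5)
Your critical-round engine is sound and is essentially the same mechanism as the paper's proof: while every voter in $V$ can still afford the share $\cost(p^*)/|V|$, the pair $(p^*,V)$ sits in $\Phi$, so everything bought converts money into utility at rate at least $\alpha^*=|V|u(p^*)/\cost(p^*)$, and the critical voter has spent more than $b/n-\cost(p^*)/|V|$. But the proof as proposed has a genuine gap at the decisive step, and neither of your fallback strategies closes it. Summing the density constraints over $p\in T\setminus W$ yields $u(T\setminus W)<\frac{u(T)}{\cost(T)}\cdot\cost(T\setminus W)$, which is \emph{not} of the form $u(T\setminus W)<u(T\setminus W)$: it merely restates that $T\setminus W$ has below-average utility-per-cost, which is exactly what your per-project constraints already imply, so the aggregate is self-consistent rather than contradictory. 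The overlap-subtraction route also fails, and for a structural reason: a project $q\in T\cap W$ may be paid for by \emph{fewer} than $|V|$ voters, in which case the critical voter's payment toward $q$ can be close to $\cost(q)$ rather than $\cost(q)/|V|$; after subtracting such payments too little money remains to charge at rate $\alpha^*$, and the additive bound $u_{i^*}(W)\geq u(T\cap W)$ does not make up the difference. (Your base case $|T\setminus W|=1$ is correct but trivial --- any voter in $V$ witnesses it without the machinery --- and no inductive step reducing $|T\setminus W|$ is actually set up.)

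The two ideas you are missing are precisely what the paper's proof supplies. First, the comparison project must be chosen with \emph{maximum} ratio $u(y)/\cost(y)$, not arbitrarily: already in your clean case $T\cap W=\varnothing$, taking $p^*$ of maximum density in $T$ turns your own bound into $u_{i^*}(W)>\frac{u(p^*)}{\cost(p^*)}\cdot\cost(T\setminus\{p^*\})\geq u(T\setminus\{p^*\})=u(T)-u(p^*)$, directly contradicting the violation --- the max-density choice is what converts the rate bound into the additive EJR1 bound. Second, to handle $T\cap W$, the paper partitions $T$ not by membership in $W$ but by support size: $Y\subseteq T$ is the set of projects paid for by fewer than $|S|$ voters (so $T\setminus W\subseteq Y$, but $Y$ may meet $W$), and $y^*$ is the max-density project of $Y$. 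Every well-supported project $p\in T\setminus Y$ costs the critical voter at most $\cost(p)/|S|$ while its utility is collected in full via $u_i(T\setminus Y)$; the remaining money (up to a shortfall $z_i<\cost(y^*)/|S|$) is charged at rate at least $u(y^*)|S|/\cost(y^*)$, and max-density gives $\frac{u(y^*)}{\cost(y^*)}\bigl(\cost(Y)-\cost(y^*)\bigr)\geq u(Y)-u(y^*)$, whence $u_i(W)>u(T)-u(y^*)$; if $y^*\notin W$ this is the EJR1 witness, and if $y^*\in W$ its utility is already included, so the bound holds outright. Without the set $Y$ and the max-density selection, the under-supported projects of $T\cap W$ poison both of your proposed repairs.
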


Thus, in the model we consider, EES offers the same proportionality guarantees as MES.
\paragraph{Fast Implementation}
\citet{kraiczy2023adaptive} show how to implement EES with cardinal utilities in time $O(m^2n)$. We will now argue that their approach can be used to guarantee the same runtime for EES with uniform utilities.

Let $R_t$, $t\ge 0$, be the list of pairs $(r^t_i, i)_{i\in N}$, where $r^t_i$
is the leftover budget of voter $i$ after $t$ projects have been bought, 
sorted in non-decreasing order of first components.   
As every voter starts off with a budget of $\frac{b}{n}$, we can set $R_0=\left((\frac{b}{n}, i)_{i\in N}\right)$. 

To choose the $(t+1)$-st project, i.e., an as yet unselected project with the highest bang per buck, for each unselected project $p$ we make a single pass through the sorted list $R_t$ in order to identify the smallest index $j$ such that the $j$-th entry $(r, i)$ in $R_t[N_p]$ satisfies $r\geq \frac{\cost(p)}{|N_p|-j+1}$.
The value of $j$ determines the bang per buck offered by $p$; we let $p_{t+1}$ be the project with the highest bang per buck, and
let $V_{t+1}$ be the set of voters who pay for $p_{t+1}$.

We will now explain how to quickly compute $R_{t+1}$ given $R_t$.
Suppose each voter in $V_{t+1}$ pays $\delta_{t+1}$.
Then, at step $t+1$ the budgets of voters in $V_{t+1}$ are reduced by $\delta_{t+1}$, while the budgets of voters in $N\setminus V_{t+1}$ remain unchanged. Given the list $R_t$ and the set $V_{t+1}$, in a singe pass over $R_t$ we can create sorted lists $R_t[N\setminus V_{t+1}]$ and $R_t[V_{t+1}]-\delta_{t+1}$. These two sorted lists can then be merged into $R_{t+1}$ in time $O(n)$. Since this computation has to be done at most $m$ times, it only contributes $O(mn)$ to the overall runtime of the algorithm. 

Since the algorithm keeps track of the leftover budgets list, it can easily return this as auxiliary information; this will be relevant in Sections~\ref{sec:cu} and~\ref{sec:iu}.


\section {Towards an Efficient Completion Method for Cardinal Utilities}\label{sec:cu}
Our new completion method for EES relies on solving the following computational problem, which we call \textsc{add-opt}: Given the outcome of EES for budget $b$, compute the minimum value of $d$ such that if every voter gets additional budget $d$, EES returns a different outcome, in the sense that the set of selected projects changes or some project is paid for by more voters (or both). 

A key to our approach is solving a subproblem concerning a notion of stability for equal-shares solutions, where stability is understood as resistance to deviations by groups of voters. Here we define stability for cardinal utilities in the spirit of \citet{peters2021market} and \citet{kraiczy2023adaptive};
we give a generalization to uniform utilities in \Cref{sec:iu}. 

For cardinal utilities, the intuition is as follows. Given a solution $(W, X)$, 
voter $i$ can deviate from it by contributing her leftover budget $r_i$  to support further projects in $A_i$.
Moreover, even if $i$ does not have enough budget left to contribute to new projects, she may still deviate by withdrawing her support from a project in $W$ and reallocating it to a more cost-efficient project.

Formally, the {\em \textcolor{Maroon}{leximax payment}} of a voter $i\in N$ in a solution $(W, X)$
is the pair $c_i=(x_i,p_i)$, where $x_i=\max\{x_{i, p} \mid p\in P\}$  and $p_i=\max\{p\mid x_{i,p}=x_i\}$.
We say that $x_i$ and $p_i$ are
the {\em leximax budget} and
the {\em leximax project} of voter $i$, respectively. 
Given two pairs $(x, p), (x', p')\in {\mathbb Q}\times P$, 
we write $(x, p)<_\textit{lex}(x', p')$ if $x<x'$ or $x=x'$ and $p\lhd p'$.
Given a solution $(W,X)$, we say that voter $i$ is {\em \textcolor{Maroon}{willing to contribute $x$ to $p$}} if $x\le r_i$ or
$(x,p)<_{\textit{lex}} c_i$.
Note that voter $i$'s leftover budget $r_i$ and her leximax budget $x_i$ serve as two distinct sources of funds that $i$ can use to deviate.

\begin{definition}\label{def:stable}
A pair $(p, V)$ with $p\in P$, $V\subseteq N_p$
{\em \textcolor{Maroon}{certifies the instability}} of an equal-shares solution $(W,X)$ for an election $E(b)$ if 
$|V| > |N_p(X)|$ and each voter $i\in V\setminus N_p(X)$ is willing to contribute $\nicefrac{\cost(p)}{|V|}$ to $p$.
A project $p\in P$ {\em \textcolor{Maroon}{certifies the instability}} of an equal-shares solution $(W,X)$ for election $E(b)$ if there exists a set of voters $V_p\subseteq N_p$ such that $(p,V_p)$ certifies the instability of $(W,X)$ for $E(b)$. An equal-shares solution $(W, X)$ for $E(b)$ is {\em\textcolor{Maroon}{stable}} if there is no project 
 $p\in P$ that certifies the instability of $(W,X)$.
\end{definition}

This concept of stability
captures the behavior of EES, as formalized by the following proposition.
\begin{restatable}[$\spadesuit$]{proposition}{propstablecard}\label{prop:ees-stable}EES returns a stable outcome.
\end{restatable}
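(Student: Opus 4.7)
The plan is to argue by contradiction. Assume the output $(W,X)$ of EES on $E(b)$ is unstable, certified by some pair $(p,V)$ with $|V| > |N_p(X)|$. Partition $V$ as $V_0 = V \cap N_p(X)$, $V_r = \{i \in V\setminus V_0 \mid r_i \ge \cost(p)/|V|\}$, and $V_d = V \setminus (V_0 \cup V_r)$; by the definition of the certificate, every $i \in V_d$ must satisfy the leximax inequality $(\cost(p)/|V|, p) <_{\textit{lex}} (x_i, p_i)$. The aim is to locate a single step $t^{\dagger}$ of EES at which the pair $(p,V)$ belongs to $\Phi$ and its bang-per-buck $|V|/\cost(p)$ matches or beats that of EES's actual choice at step $t^{\dagger}$, contradicting either the argmax rule or the $\lhd$-tie-break in Algorithm~\ref{alg:zero}.

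The delicate ingredient is the choice of $t^{\dagger}$. If $V_d = \emptyset$ and $p \notin W$, then $(p,V) \in \Phi$ already at termination, directly contradicting $\Phi = \emptyset$. If $V_d = \emptyset$ and $p \in W$, I would take $t^{\dagger} = t(p)$, the step at which EES bought $p$. If $V_d \ne \emptyset$, let $t^{\star} = \min_{i \in V_d} t(p_i)$ be the earliest step at which the leximax project of any voter in $V_d$ was purchased, and set $t^{\dagger} = t^{\star}$ when $p \notin W$ and $t^{\dagger} = \min(t^{\star}, t(p))$ when $p \in W$. The point of this definition is to guarantee that $p$ has not yet been purchased by time $t^{\dagger}$ and that no voter in $V_d$ has yet paid her leximax.

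Verifying feasibility at $t^{\dagger}$ is then routine. Voters in $V_0$ have not yet paid their share $\cost(p)/|N_p(X)|$ for $p$ at time $t^{\dagger}-1$, so their remaining budget is at least $\cost(p)/|N_p(X)| > \cost(p)/|V|$, where the strict inequality uses $|V| > |N_p(X)|$. Voters in $V_r$ have $r_i^{(t^{\dagger}-1)} \ge r_i \ge \cost(p)/|V|$ because budgets are monotonically non-increasing in EES. Voters in $V_d$ have, by the choice $t^{\dagger} \le t^{\star} \le t(p_i)$, not yet contributed $x_i$ to their leximax project, so their leftover at $t^{\dagger}-1$ is at least $x_i \ge \cost(p)/|V|$ (using the leximax inequality). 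Hence $(p,V) \in \Phi$ at step $t^{\dagger}$.

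The contradiction is closed by bang-per-buck comparison. If $t^{\dagger} = t(p)$, EES chose $(p, N_p(X))$ at this step, but $(p,V)$ with $|V| > |N_p(X)|$ offers strictly larger bang-per-buck $|V|/\cost(p)$, violating the argmax rule. If $t^{\dagger} = t^{\star}$, EES chose some pair $(p_{i^{\star}}, V^{*}_{t^{\star}})$ in which an $i^{\star} \in V_d$ paid exactly $x_{i^{\star}}$, giving bang-per-buck $1/x_{i^{\star}}$; from $(\cost(p)/|V|, p) <_{\textit{lex}} (x_{i^{\star}}, p_{i^{\star}})$, either $x_{i^{\star}} > \cost(p)/|V|$ (so $(p,V)$ strictly beats EES's choice) or $x_{i^{\star}} = \cost(p)/|V|$ with $p \lhd p_{i^{\star}}$ (so $(p,V)$ matches in bang-per-buck but is preferred by the $\lhd$-tie-break). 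The main obstacle I anticipate is handling the case split on $t^{\dagger}$ cleanly and tracking exactly how much of each voter's initial budget $b/n$ is still unspent at $t^{\dagger}-1$, keeping the accounting separate for voters in $V_0$, $V_r$, and $V_d$; the leximax condition must be used precisely both to guarantee sufficient budget for $V_d$ at $t^{\dagger}$ and to force the contradiction with EES's tie-break in the equality case.
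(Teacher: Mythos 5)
Your proof is correct, and it reaches the same kind of contradiction as the paper — exhibiting an iteration of \Cref{alg:zero} at which the certifying pair $(p,V)$ belongs to $\Phi$ and weakly beats EES's actual choice in bang per buck, with the $\lhd$ tie-break closing the equality case — but it gets there by a genuinely different route. The paper never argues the cardinal case directly: it proves stability once for uniform utilities (\Cref{prop:uniformstable}), by ordering the selected projects $p_1,\ldots,p_w$ in non-increasing bang-per-buck order, aggregating for each voter the quantity $r_i(v)$ (leftover budget plus all payments to projects of priority below $p_i$), and examining the iteration in which the first project of lower priority than the certificate was bought; the cardinal statement then follows by specializing through \Cref{lem:willing}, which shows that for $u\equiv 1$ the uniform notion of willingness coincides with the leximax-based one. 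You instead stay inside the cardinal model, partition $V$ into $V_0$, $V_r$, $V_d$, and pick $t^{\dagger}=\min(t^{\star},t(p))$, exploiting the cardinal-specific fact that a deviating voter withdraws from at most one project, namely her leximax project. What each buys: your version is self-contained and more elementary, makes the per-voter budget accounting explicit, and cleanly handles the boundary case $V_d=\varnothing$, $p\notin W$ (contradiction with termination, $\Phi=\varnothing$), which the paper's uniform proof passes over with the remark that its index $i$ is well-defined; the paper's version buys generality, since the one argument covers all uniform (hence cost) utilities, where a deviation may require withdrawing from several projects and your single-leximax-time device would no longer suffice. Two small points to make explicit in a polished write-up: when $V_d\neq\varnothing$ and $\cost(p)>0$, every $i\in V_d$ has $x_i>0$ (otherwise $(\cost(p)/|V|,p)<_{\textit{lex}}(0,p_i)$ is impossible), so $p_i\in W$ and $t(p_i)$ is well-defined; and $p\neq p_{i^{\star}}$ because $i^{\star}\notin N_p(X)$ — both facts are silently used in your feasibility and tie-break steps.
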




We are now ready to state our key subproblem, \textsc{GreedyProjectChange}:
Given an election $E(b) = (N,P,\{A_i\}_{i\in N}, \cost, b)$, a solution $(W,X)$ for election $E(b)$ (along with some auxiliary information), and a project $p$, compute the minimum budget increase $d$ such that project $p$ certifies the instability of $(W,X)$ for $E(b+dn)$.



\subsection{\textsc{GreedyProjectChange} for Cardinal Utilities}\label{subsec:overview}
 A simple $O(n^2)$ solution to \textsc{GreedyProjectChange} \citep{kraiczy2023adaptive} proceeds by iterating over all values $t=|N_p(X)|+1,\ldots, |N_p|$ and, for each voter $i\in N_p$, calculating the additional budget $d$ that would enable $i$ to contribute $\frac{\cost(p)}{t}$ towards project $p$. However, we aim for a solution that is linear in $n$, as in practice the number of voters is large, while the number of projects is small.


\paragraph{Intuition} As a a warm-up, consider a set of customers $[n]$ with 
budgets $\beta_1\le \dots\le \beta_n$ interested in jointly purchasing a service that costs $c$; the costs of the service have to be shared equally by all participating customers. It is well-known how to identify the largest group of customers that can share the cost of the service: it suffices to find the smallest value of $i$ such that $\beta_i\cdot(n-i+1)\ge c$, so that each of $i, i+1, \dots, n$ can afford to pay $c/(n-i+1)$ \cite{jain2007cost}.

Now, suppose $\beta_i\cdot(n-i+1)< c$ for all $i\in [n]$, so no subset of $[n]$ can afford the service, but we can offer a subsidy of $d$ to each customer; what is the smallest value of $d$ such that some subset of the customers can purchase the service while sharing its cost equally? We can approach this question in a similar manner: if the service is to be shared by customers $i, \dots, n$, 
it suffices to set $d=\max\{0, c/(n-i+1)-\beta_i\}$.
Thus, we can compute the minimum subsidy in linear time by setting $d=\max\{0, \min_{i\in [n]}(c/(n-i+1)-\beta_i)\}$.

\textsc{GreedyProjectChange} can be seen as a variant of this problem, with leftover budgets $r_i$ playing the role of $\beta_i$ and $c=\cost(p)$. However, it has two additional features: first, there may be some voters who are already paying for $p$ (i.e., $N_p(X)\neq\varnothing$), and second, the voters may choose to fund $p$ from their leximax budgets rather than their leftover budgets. We will now argue that, despite these complications, 
\textsc{GreedyProjectChange} admits a linear-time algorithm.

\paragraph{Description of the algorithm} Our linear-time solution to \textsc{GreedyProjectChange}, \Cref{alg:one}, uses two pointers, $i$ and $j$, to iterate over the two lists containing the two different sources of money voters in $O_p(X)=N_p\setminus N_p(X)$ can use to pay for $p$: their leftover budgets in $(W, X)$ and their leximax payments in $(W, X)$. Both lists are sorted in non-decreasing order. We will refer to these lists as {\em \textcolor{Maroon} {leftover budgets list}} and {\em \textcolor{Maroon}{leximax payments list}}, respectively.
 
 The key local variables in our algorithm are the {\em \textcolor{Maroon}{per-voter price}} ($\pvp$), the set of {\em \textcolor{Maroon}\al} voters $\LQ$, and the set of {\em \textcolor{Maroon}\dev} voters $\SL$.
 The liquid voters are expected to pay for $p$ by using their leftover budgets, while the solvent voters are expected to pay for $p$ by deviating from another project. \Cref{alg:one} starts by placing all voters in $\LQ$, i.e., it sets $\LQ=O_p(X)$. Subsequently, a voter may be moved from $\LQ$ to $\SL$ or discarded altogether. 
 
 The set of {\em \textcolor{Maroon} {buyers} $B$} 
 consists of the voters in $N_p(X)$ (who already pay for $p$ in $(W,X)$), 
 the liquid voters, and the solvent voters.
 Throughout the algorithm, we maintain the property that each buyer in $B$ is willing to pay the \textit{per-voter price} $\pvp
=\nicefrac{\cost(p)}{|B|}$ towards $p$. If at some iteration $B$
contains a voter who cannot afford this payment, they are removed, 
 thereby increasing $\pvp$ in the next iteration. That is, we iterate through the values $\pvp=\nicefrac{\cost(p)}{|N_p|},\ldots, \nicefrac{\cost(p)}{(|N_p(X)|+1)}$, and update the value of $d$ whenever it holds that every voter in $B$ is willing to contribute $\nicefrac{\cost(p)}{|B|}$ towards $p$.
 \begin{algorithm}[h]
{
\LinesNumbered
	\SetAlgoNoLine
	\KwIn{Election $E(b)=(N,P,\{A_i\}_{i\in N},\cost,b)$, stable equal-shares solution $(W,X)$,
 project $p$\;    
 leftover budgets of voters $O_p(X)$: $r_{v_1},\ldots,r_{v_{k}}$ where $k=|O_p(X)|$ and $r_{v_i}\leq r_{v_{i+1}}$ for all $i\in[k-1]$\;
    leximax payments of voters $O_p(X)$ in $(W,X)$: $c_{w_1},\ldots, c_{w_{k}}$ where $c_{w_j}\leq_{\textit{lex}}
    c_{w_{j+1}}$ for all $j\in[k-1]$;}
	\KwOut{Minimum value $d>0$ such that $p$ certifies the instability of $(W,X)$ for $E(b+nd)$}		
            $i,j \leftarrow 1,1$\;\tikzmark{ph1}
           $\SL \leftarrow \varnothing$\;
            $\LQ \leftarrow O_p(X)$\;
            
            $d \leftarrow +\infty$\;
	\While{$\LQ \cup \SL \neq \varnothing $}{
            $\pvp \leftarrow \frac{cost(p)}{|N_p(X)\,\cup\, \LQ\,\cup\,\SL|}$\; 
            \uIf {$j \leq |O_p(X)|\textbf{ and }c_{w_j}<_{\textit{lex}}(\pvp,p)$\label{if1} }{
            $\SL \leftarrow \SL \setminus \{w_j\}$ \label{rem1}\;
            $j\leftarrow j +1$\label{inc1}\;
            }
            \uElseIf{$c_{v_i}>_{\textit{lex}} (\pvp,p)$\label{if2}}{
            $\LQ \leftarrow \LQ\setminus\{v_i\}$\;
            $\SL\leftarrow \SL\cup \{v_i\}$\label{relabel}\;
            $i \leftarrow i+1$\label{inc2}\;
            }
            \Else{
            $d \leftarrow \min\{d, \pvp-r_{v_i}\}$\label{update}\;
            $\LQ \leftarrow \LQ \setminus \{v_i\}$\label{rem2}\;
            $i \leftarrow i+1$\label{inc3}\;
            }
	}
        \Return $d$
        }
	\caption{\textsc{GreedyProjectChange} (\textsc{GPC})}
	\label{alg:one}
	
	\begin{tikzpicture}[remember picture, overlay]

\draw[thick, decoration={brace,raise=14pt},decorate, color=Maroon]
  ([yshift=2.5ex,xshift = .12\textwidth]pic cs:ph1) -- node[right=20pt] {\small ${B} := N_p(X)\cup \al\cup \dev$} ([yshift=-3.5ex,xshift = .12\textwidth]pic cs:ph1);
  
\end{tikzpicture}
\end{algorithm}
 
 We will now explain how the sets $\LQ$ and $\SL$ evolve during the execution of the algorithm.
 Initially, all voters in $O_p(X)=N_p\setminus N_p(X)$, i.e., all voters who approve $p$, but do not contribute towards it in $(W,X)$, are placed in $\LQ$ and no voter is placed in $\SL$.
 The algorithm has three means to update these sets of voters: (1) A voter who is liquid may be relabelled as solvent (\Cref{relabel}), 
 or a voter may be removed from the set of buyers either by ceasing to be solvent (\Cref{rem1}) or by ceasing to be liquid (\Cref{rem2}).

 It each iteration, \Cref{alg:one} recomputes the per-voter price by sharing the cost of $p$ among all voters in $B$. Pointer $j$ keeps track of the solvent voter with the smallest leximax payment; if this voter cannot afford the current price from her leximax budget, she is removed and $j$ is increased by $1$. 

Then \Cref{alg:one} uses a pointer $i$ to the leftover budgets list to identify a voter $v_i\in\LQ$ with the smallest leftover budget. It checks if this voter can afford the current per-voter price from her leximax budget, i.e., if she is willing to deviate from her leximax project to $p$; if yes, she is moved to $\SL$. Thus, we maintain the property that the leftover budgets of solvent voters do not exceed those of the liquid voters.

On the other hand, if $v_i$ is not willing to deviate from her leximax project, 
then, for her to contribute $\pvp$ towards $p$, her leftover budget should be increased by
at least $\pvp-r_{v_i}$. In this case, we update $d$ as $d=\min\{d,\pvp-r_{v_i}\}$. We then increase the pointer $i$ by one and remove $v_i$ from $\LQ$, and thereby from the set of buyers, as including $v_i$ cannot reduce $d$ below its current value. Thus, each iteration weakly decreases $d$.


\begin{example}
To illustrate \Cref{alg:one}, we consider an instance with five voters $v_1,v_2,v_3,v_4,v_5$ and three projects $p_1,p_2$ and $p_3$. The project costs are $\cost(p_1)=2$, $\cost(p_2)=3.2$ and $\cost(p_3)=6$, and the total budget is $b=10$. The project approvals are given by 
$N_{p_1}=\{v_1,v_2\}$, $N_{p_2}=\{v_3, v_4\}$, $N_{p_3}=\{v_2, v_3, v_4, v_5\}$.
 It is easy to check that EES selects $W=\{p_1,p_2\}$ on this instance, with voters in $N_{p_1}$ sharing the cost of $p_1$ and voters in $N_{p_2}$ sharing the cost of $p_2$. We will now execute \Cref{alg:one} on $(W, X)$ with $p=p_3$. Note that $|N_{p_3}|= 4$ and so $\pvp=\frac{6}{4}=1.5$. All voters in $N_{p_3}$ are initially placed in $\LQ$.
The first two entries of the leximax payments list correspond to voters $v_5$ and $v_2$, whose leximax budgets are $0$ and $1$, respectively.  
Hence, in the first two iterations of \Cref{alg:one}, pointer $j$ is increased to $3$. Note that $v_5$ and $v_2$ will never be placed in $\SL$, because $\pvp$ will never drop below $1.5$.
For voters $v_3$ and $v_4$ their leximax budgets are $1.6>\pvp$, so they remain on the list (but they are not placed in $\SL$ at that point).
In the third iteration we consider $v_3$ at position $i=1$ in the leftover budgets list. Since $v_3$ qualifies as solvent (her leximax budget is $1.6$), we move her from $\LQ$ to $\SL$. 
Similarly, in the fourth iteration we consider $v_4$ at position $i=2$ in the leftover budgets list and move her from $\LQ$ to $\SL$.
In the fifth iteration, we consider $v_2$ at position $i=3$ in the leftover budgets list. Since $v_2$ does not qualify as solvent, we update $d=\min\{+\infty,\pvp-r_{v_2}\}=1.5-1=0.5$. 
We then remove $v_2$ from the set of buyers (in the pseudocode this is done by removing her from $\LQ$) and increase $i$ to $4$.

As a result, \pvp{} is increased to $2$. In the next two iterations $v_3$ and $v_4$ are removed from $\SL$ since $\pvp>1.6$, and $j$ is increased to $5$.
The last remaining buyer $v_5$ is liquid (but does not qualify as solvent, as we know from the first iteration) and requires a subsidy of $4$ to pay for $p_3$ on her own; since this is more than $0.5$, the algorithm terminates returning $0.5$. Indeed, $\EES$ with budget $10+0.5\cdot 5=12.5$ will select $p_1$ and $p_3$.
\end{example}

\subsection{Time Complexity and Correctness}\label{sec:timecor}
We will now argue that our algorithm is correct and its running time scales linearly with the number of voters.
We start by making an observation about the set of solvent voters.

\begin{lemma}\label{lem:remove} 
If \Cref{alg:one} attempts to remove $v$ from $\SL$ 
in some iteration, then $v$ will never be placed in $\SL$ in subsequent iterations.
\end{lemma}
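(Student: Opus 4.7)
The plan is to derive a contradiction from a monotonicity property of the per-voter price $\pvp$ together with the conditions under which the two branches modifying $\SL$ fire.

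First, I would establish that the set of buyers $B := N_p(X) \cup \LQ \cup \SL$ is non-increasing in size across iterations, so that $\pvp = \cost(p)/|B|$ is non-decreasing. Inspecting the three branches: the first if-branch only removes an element from $\SL$, so $|B|$ weakly decreases; the else-if branch moves $v_i$ from $\LQ$ to $\SL$, leaving $|B|$ unchanged; the else-branch removes $v_i$ from $\LQ$, so $|B|$ strictly decreases. Thus if $\pvp^t$ denotes the value of $\pvp$ at iteration $t$, then $\pvp^{t'} \geq \pvp^t$ for all $t' \geq t$, which in turn implies $(\pvp^t, p) \leq_{\textit{lex}} (\pvp^{t'}, p)$ since the second coordinate is the same project $p$.

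Next, I would pin down the triggering conditions. A voter $v$ is attempted to be removed from $\SL$ in iteration $t$ only through the first if-branch, which requires $v = w_j$ for the current pointer $j$ and $c_{w_j} <_{\textit{lex}} (\pvp^t, p)$; in particular $c_v <_{\textit{lex}} (\pvp^t, p)$. On the other hand, the only way $v$ can subsequently be added to $\SL$ is via the else-if branch (Line~\ref{relabel}), which requires $v = v_i$ for the current pointer $i$ at that later iteration $t' > t$, together with $c_{v_i} >_{\textit{lex}} (\pvp^{t'}, p)$.

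The main step is then the contradiction. Combining the monotonicity statement with the removal condition gives
\[
c_v \;<_{\textit{lex}}\; (\pvp^t, p) \;\leq_{\textit{lex}}\; (\pvp^{t'}, p),
\]
which directly contradicts the placement condition $c_v >_{\textit{lex}} (\pvp^{t'}, p)$. I do not anticipate any significant obstacle; the only subtlety is handling the lex comparison through the monotonicity chain, and this is immediate since the second coordinate of the pair is fixed to $p$ throughout. A minor bookkeeping point I would note is that the proof holds regardless of whether $v$ was actually a member of $\SL$ at the time of the attempted removal, because the argument depends only on the leximax comparison $c_v <_{\textit{lex}} (\pvp^t, p)$ that gated the first branch.
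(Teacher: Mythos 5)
Your proof is correct and takes essentially the same route as the paper's: removal from $\SL$ requires $c_v <_{\textit{lex}} (\pvp, p)$, re-insertion requires $c_v >_{\textit{lex}} (\pvp, p)$, and these are incompatible because $\pvp$ is non-decreasing. The only difference is that you explicitly verify the monotonicity of $\pvp$ via a branch-by-branch check on $|B|$, which the paper asserts without proof; this is a harmless (indeed helpful) elaboration, not a different argument.
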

\begin{proof}
If the algorithm attempts to remove $v$ from $\SL$, this means that $v$ satisfies the condition of the {\bf if} in \Cref{if1}, i.e., $c_{v}<_{\textit{lex}}(PvP,p)$. In order for $v$ to be added to $\SL$, \Cref{if2} must be executed, i.e. it must hold that $c_v>_{\textit{lex}}(PvP,p)$. But this is impossible since \pvp{} is nondecreasing.
\end{proof}

We are now ready to establish that {\sc GreedyProjectChange}
runs in linear time.
\begin{restatable}{proposition}{lemterm}\label{lem:term} 
Algorithm \ref{alg:one} runs in time $O(n)$.
\end{restatable}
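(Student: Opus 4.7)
The plan is to bound the number of iterations of the main while loop by $O(n)$, and then argue that each iteration costs only $O(1)$ with appropriate bookkeeping; combining the two gives the claim.

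First I would verify that every iteration of the loop strictly advances exactly one of the two pointers: the conditional branches are mutually exclusive and exhaustive, branch~1 (lines \ref{if1}--\ref{inc1}) increments $j$, branch~2 (lines \ref{if2}--\ref{inc2}) and the $\textbf{else}$ branch (lines \ref{update}--\ref{inc3}) each increment $i$. Because $i$ and $j$ are both initialized to $1$, each is monotonically nondecreasing, and each is bounded above by $|O_p(X)| + 1 \le |N_p| + 1 \le n+1$, the loop can execute at most $2(n+1)$ times before both pointers have exhausted their respective sorted lists. I would then need to argue that the loop condition $\LQ \cup \SL \neq \varnothing$ indeed fails once the pointers have advanced past $|O_p(X)|$: once $i > |O_p(X)|$, every voter in $O_p(X)$ has been removed from $\LQ$ (since branches~2 and~3 together process each voter in $\LQ$ exactly once), so $\LQ = \varnothing$, and \Cref{lem:remove} guarantees that no voter ever re-enters $\SL$, so once $j > |O_p(X)|$ and no further branch~1 removals are pending, $\SL$ also becomes empty.

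For the per-iteration cost, I would describe the data structures that make each branch $O(1)$. The leftover budgets list $(r_{v_i})$ and leximax payments list $(c_{w_j})$ are given sorted as input and can be stored as arrays, so the lookups $r_{v_i}$, $c_{v_i}$, and $c_{w_j}$ are constant-time. Maintaining an integer counter for $|\LQ| + |\SL|$, together with the constant $|N_p(X)|$, yields $|B| = |N_p(X) \cup \LQ \cup \SL|$ in $O(1)$ and hence $\pvp = \cost(p)/|B|$ in $O(1)$. Membership in $\LQ$ and $\SL$ is tracked by a pair of boolean flags per voter, so insertions, removals, and tests are $O(1)$; in particular the set difference $\SL \setminus \{w_j\}$ in branch~1 is a flag update accompanied by a conditional decrement of the counter (only when $w_j$ was actually in $\SL$). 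The leximax comparison $<_{\textit{lex}}$ is a constant-time lexicographic test on a $(\mathbb{Q} \times P)$ pair.

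Combining the two observations yields a total running time of $O(n)$. The only real subtlety I anticipate is making the termination argument airtight in the mixed regime where $\LQ$ has emptied but $\SL$ still contains voters (or vice versa): one has to check that the algorithm does not attempt out-of-range indexing into $(v_i)$ or $(w_j)$, which is where the invariant from \Cref{lem:remove}, combined with the observation that $i$ and $j$ each advance through their list exactly once, does the work.
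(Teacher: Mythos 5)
Your skeleton is the same as the paper's: each iteration advances exactly one of the two pointers, each iteration costs $O(1)$ with flag arrays and a buyer counter (the paper's ``$k$-bit arrays''), so everything reduces to bounding each pointer by $|O_p(X)|+1$. The per-iteration accounting and the data structures are fine. But there is a genuine gap at exactly the point you flag as ``the only real subtlety'' and then wave off: the bound on $i$ is \emph{not} a consequence of pointer monotonicity plus \Cref{lem:remove}; in one of the two mixed regimes it requires the \emph{stability} of the input solution $(W,X)$, which your proposal never invokes. Consider the regime where $\LQ$ has emptied (so $i=|O_p(X)|+1$) while $\SL\neq\varnothing$ and $j\le|O_p(X)|$. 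The loop condition $\LQ\cup\SL\neq\varnothing$ still holds, so an iteration runs; if the guard in \Cref{if1} fails, i.e., $c_{w_j}\not<_{\textit{lex}}(\pvp,p)$, control falls to \Cref{if2}, which evaluates $c_{v_i}$ with $i$ out of range --- precisely the fault you worried about --- and \Cref{lem:remove} is powerless here, since it only constrains re-entry into $\SL$, not whether the \Cref{if1} guard holds. The paper closes this case by contradiction with stability: equality $c_{w_j}=(\pvp,p)$ is impossible (voters in $O_p(X)$ have leximax project $\neq p$), and by \Cref{lem:remove} together with sortedness every voter of $\SL$ has leximax payment $\ge_{\textit{lex}} c_{w_j}>_{\textit{lex}}(\pvp,p)$; so if the guard failed, the pair $(p, N_p(X)\cup\SL)$ would certify the instability of $(W,X)$ for $E(b)$ itself, contradicting the input specification. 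Hence once $\LQ=\varnothing$, \Cref{if1} fires in every remaining iteration, $j$ advances, and $\SL$ drains.

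Your machinery does correctly dispose of the opposite regime, where $j$ reaches $|O_p(X)|+1$ first: there \Cref{lem:remove} genuinely does the work, because any subsequent $v_i$ has already had a removal attempt, so $c_{v_i}<_{\textit{lex}}(\pvp,p)$, only the final branch can fire, and $\LQ$ drains with no out-of-range access. The asymmetry between the two regimes --- one resolved by \Cref{lem:remove}, the other only by the stability hypothesis --- is the actual content of the paper's proof; without the stability step, your claimed bound of $2(|O_p(X)|+1)$ iterations is unproven, and indeed on an unstable input the algorithm as written can fault or fail to advance a pointer. To repair the proposal, add the stability argument above for the regime $\LQ=\varnothing$, $\SL\neq\varnothing$.
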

\begin{proof}
Each iteration of the {\bf while} loop takes a constant time (we can represent the sets $\LQ$ and $\SL$ as $k$-bit arrays). In each iteration, we increase either $i$ or $j$ by $1$. More precisely, if the {\bf if} condition in \Cref{if1} is satisfied then $j$ will be incremented, while if the {\bf else if} condition in \Cref{if2} or the {\bf else} condition in Line 14 is satisfied, then $i$ will be incremented. Further, by design, $j$ can not exceed $|O_p(X)|+1$.  
To complete the proof, we will now argue that $i$ cannot exceed $|O_p(X)|+1$ either, and hence
the {\bf while} loop terminates after at most $2n$ iterations.

Suppose the algorithm sets $i=|O_p(X)|+1$ while $j\le |O_p(X)|$. 
Then it has iterated through the
entire leftover budgets list, so the set $\LQ$ must be empty at this point. For the next iteration of the {\bf while} loop to proceed, it must be the case that $\SL\neq\varnothing$. We claim that in this iteration, and in all subsequent iterations, the condition in Line~7 is satisfied and hence the size of $\SL$ is reduced by $1$. Indeed, suppose the condition in Line~7 is not satisfied. Line~7 is only executed when $\LQ\cup\SL\neq\varnothing$, so we have $\SL\neq\varnothing$ at this point. But then $(p, N_p(X)\cup\SL)$ witnesses the instability of $(W, X)$, a contradiction with $(W, X)$ being stable. Thus, each subsequent iteration increments $j$ and hence the total number of iterations does not exceed $2\cdot|O_p(X)|$.

On the other hand, suppose $j=|O_p(X)|+1$ occurs first. Then the algorithm has iterated through the entire leximax payments list, and by \Cref{lem:remove} the set $\SL$ will remain empty throughout the remainder of the algorithm. Therefore, every subsequent iteration will execute \Cref{rem2} and \Cref{inc3} and remove a voter from $\LQ$. Again, we terminate after at most $2\cdot |O_p(X)|\leq 2n$ iterations.
\end{proof}

We are not ready to characterize the value computed by \Cref{alg:one}.
\begin{restatable}{theorem}{thmopt}\label{thm:opt}
Given an election $E(b)$, a stable equal-shares outcome $(W, X)$ of $E(b)$, and a project $p$,  
let $d^*$ be the smallest value such that there exists a $B^*\subseteq N_p$ 
with the property that $(p, B^*)$ 
certifies the instability of $(W,X)$ in $E(b+nd^*)$. 
Then \Cref{alg:one} returns $d^*$.
\end{restatable}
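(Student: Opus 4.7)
My plan is to prove \Cref{thm:opt} by establishing both $d^* \leq d$ and $d \leq d^*$, where $d$ is the output of \Cref{alg:one}. Both directions rest on an inductive loop invariant. First I would establish, by induction on iterations, that at the start of each iteration (after $\pvp$ is computed in line 6):
(a)~$B := N_p(X) \cup \LQ \cup \SL$ satisfies $|B| = \cost(p)/\pvp$;
(b)~$\LQ = \{v_i, v_{i+1}, \dots, v_{|O_p(X)|}\}$ is a suffix of the leftover-budget-sorted list;
(c)~every voter in $\SL$ satisfies $c_v >_{\textit{lex}}(\pvp, p)$, so is currently able to deviate at the price $\pvp$;
(d)~for every $j' < j$, $c_{w_{j'}} <_{\textit{lex}}(\pvp, p)$ and $w_{j'} \notin \SL$.
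The invariant follows from monotonicity of $\pvp$ together with the pointer-advancement logic. Since $(W,X)$ is equal-shares, each voter in $N_p(X)$ already pays $\cost(p)/|N_p(X)| \geq \pvp$, so is trivially willing to pay $\pvp$ without any subsidy.

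For the direction $d^* \leq d$, consider any iteration where line 15 fires, and set $d' = \pvp - r_{v_i}$. I would show that $(p, B)$ certifies the instability of $(W,X)$ in $E(b + nd')$: voters in $N_p(X)$ are exempt from the willingness requirement; voters in $\SL$ are willing via deviation by (c); and each voter $v \in \LQ$ satisfies $r_v \geq r_{v_i}$ by (b), hence $r_v + d' \geq \pvp$, so $v$ is willing via the augmented leftover budget. Since $v_i \in \LQ$ guarantees $|B| > |N_p(X)|$, $B$ is a legitimate witness, and taking the minimum over all line-15 events yields $d^* \leq d$.

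For the direction $d \leq d^*$, consider any optimal witness $(p, B^*)$ with $|B^*| = k^*$; without loss of generality, $N_p(X) \subseteq B^*$, since including already-paying voters can only reduce the required subsidy. Because $|B|$ decreases stepwise from $|N_p|$, the algorithm reaches $|B| = k^*$ at some iteration, and I focus on the final one, which decrements $|B|$ to $k^*-1$. If this decrement occurs via line 16, I would argue by a swap argument that the algorithm's $B$ is itself an optimal set of size $k^*$, so $\pvp_{k^*} - r_{v_i} \leq d^*$: $\LQ$ contains the unprocessed voters with the largest leftover budgets, $\SL$ contains every processed voter who can still deviate, and any substitution within $B^*$ yields a maximum subsidy requirement at least $\pvp_{k^*} - r_{v_i}$. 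If instead the decrement occurs via line 8 (dropping $w$ from $\SL$ because $c_w <_{\textit{lex}}(\pvp_{k^*}, p)$), I would argue that $d^*$ is matched at an adjacent size: if $w \notin B^*$, then $B^*$ is valid at size $k^*-1$ too, where an eventual line-16 decrement captures it; if $w \in B^*$, then $d^* \geq \pvp_{k^*} - r_w$, and an earlier iteration (where $w$ was still in $\SL$ with smaller $\pvp$) already produced a line-15 update no larger than $d^*$.

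The main obstacle I expect is formalizing the swap argument and the line-8 reduction. Both require careful accounting of how the pointers $i$ and $j$ partition $O_p(X)$ over time, tying the leximax order (via (c), (d)) to the leftover-budget order (via (b)). The delicate part is ensuring that a voter the algorithm removes is never part of a strictly better $B^*$; the stability of $(W,X)$ together with the invariants forces each local choice to dominate, so the algorithm's sequence of line-15 updates collectively covers every candidate optimum.
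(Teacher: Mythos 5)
Your overall skeleton matches the paper's proof: the paper also splits into two lemmas, one showing that every value written by Line 15 comes with a certificate $(p,B)$ where $B$ is the current buyer set (giving $d^*\le d$), and one anchoring the analysis at the last iteration in which $\pvp=\cost(p)/|B^*|$ (giving $d\le d^*$). Your direction $d^*\le d$ is essentially identical to the paper's \Cref{exists}, including the observation that voters in $N_p(X)$ need not be re-checked. However, your lower-bound direction contains a genuine error in the Line-8 sub-case with $w\notin B^*$: you claim that ``$B^*$ is valid at size $k^*-1$ too, where an eventual line-16 decrement captures it.'' This is false. A certificate is tied to its cardinality: shrinking the buyer set strictly raises the per-voter price from $\cost(p)/k^*$ to $\cost(p)/(k^*-1)$, so a size-$k^*$ witness does not transfer to size $k^*-1$, and once $|B|<k^*$ every subsequent Line-15 update has the form $\pvp'-r$ with $\pvp'>\cost(p)/k^*$, so nothing guarantees a later update $\le d^*$. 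The paper closes exactly this hole differently: it defines $i^*$ as the index of the voter in $B^*$ with smallest leftover budget among those with $c_{v}\leq_{\textit{lex}}(\cost(p)/|B^*|,p)$, and case-splits on the pointer $i$ versus $i^*$ at iteration $q$. When $i\le i^*$ it proves the containment $B^*\subseteq B$, which together with $|B|=|B^*|$ forces $B=B^*$; then an SL-removal of any $w\in B=B^*$ is impossible (such $w$ still satisfies $c_w>_{\textit{lex}}(\pvp,p)$), so the decrement must go through Line 15 with value $\le d^*$. When $i>i^*$ it derives a contradiction from an earlier Line-15 update strictly below $d^*$ (legitimate because, as in \Cref{exists}, every update value carries a certificate, contradicting minimality of $d^*$). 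Your proposal invokes the containment/swap argument only in the Line-16 case, leaving the problematic branch supported by the false size-transfer claim; you would need to establish $B^*\subseteq B$ at the size-$k^*$ phase (or the paper's pointer dichotomy) to render that branch vacuous.

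A smaller repair: your invariant (c) --- every voter in $\SL$ satisfies $c_v>_{\textit{lex}}(\pvp,p)$ at the start of each iteration --- is too strong as stated. After $\pvp$ jumps, several $\SL$ members may violate the condition while the pointer $j$ catches up one entry per iteration (Line 7 also fires on leximax-list entries not currently in $\SL$). The property you need holds exactly at the moments you use it, namely at Line-15 firings: there the conditions of Lines 7 and 10 both fail, and by \Cref{lem:remove} plus sortedness of the leximax list every member of $\SL$ sits at index $\ge j$, so $c_w\geq_{\textit{lex}}c_{w_j}>_{\textit{lex}}(\pvp,p)$. This is precisely how the paper argues point (2) in the proof of \Cref{exists}, and restricting your invariant to those iterations makes your first direction sound.
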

\begin{proof}
Let $d$ denote the value returned by \Cref{alg:one}.
First, we will argue that $d^*\le d$.

\begin{restatable}{lemma}{lemexist}\label{exists}
There exists a $B\subseteq N_p$ such that $(p,B)$ certifies the instability of $(W,X)$ for $E(b+nd)$.
\end{restatable}
\begin{proof} 
Consider the iteration of the {\bf while} loop in which $d$ is set to its final value.
Let $B=N_p(X)\cup\LQ\cup\SL$ be the set of buyers at this point, let $\pi=\nicefrac{\cost(p)}{|B|}$ be the per-voter price, and let $i^*$ and $j^*$ be the values of $i$ and $j$, respectively. Then $d=\pi-v_{i^*}$. 
We will show that $(p, B)$ certifies 
the instability of $(W, X)$ for budget $b+nd$. 
In particular, we will show that (1) for each $v\in\LQ$ we have $r_v+d\geq \pi$, (2) for each $v\in SL$ we have $c_v>_{\textit{lex}} (\pi,p)$, and (3) for each $v\in N_p(x)$ we have
$x_{v,p}>\pi$.

For~(1), recall that \Cref{alg:one} increments $i$ right after removing $v_i$ from $\LQ$, so when $d$ is set to $\pvp-r_{v_{i^*}}$, it holds that $\LQ=\{v_{i^*}, \dots, v_k\}$. Since the leftover budgets list is sorted in non-decreasing order, this implies $r_v\geq r_{v_{i^*}}$ and hence $r_v+d\ge \pvp=\pi$ for each $v\in \LQ$.

For (2), if $j^*>|O_p(X)|$ we have $\SL=\varnothing$, so the claim trivially holds. Now, suppose that $j^*\le |O_p(X)|$.
As $j^*$ did not trigger the condition in Line~7, and it can not be the case that $c_{w_{j^*}}=(\pi, p)$ (the voter $w_{j^*}$ is in $O_p(X)$, so her leximax project is not $p$), it follows that $c_{w_{j^*}}>_\textit{lex}(\pi, p)$.
Further, whenever $j$ is incremented, voter $w_j$ is removed from $\SL$. By \Cref{lem:remove} it follows that every voter in $\SL$ must appear at index $j^*$ or greater in the leximax payments list. 
Since the leximax payments list is sorted in nondecreasing order, we have 
$c_{w}\ge_\textit{lex}c_{w^*}>_\textit{lex}(\pi, p)$
for each $w\in\SL$.

For (3), the condition of the {\bf while} loop 
implies $\LQ\cup\SL\neq\varnothing$ and hence $|B|>|N_p(X)|$. Thus, the leximax payment of each $v\in N_p(X)$ satisfies $x_v=\nicefrac{\cost(p)}{|N_p(X)|}>\pi$. This completes the proof.
\end{proof}

Our second lemma establishes that $d^*\ge d$.

\begin{restatable}{lemma}{lemlower}\label{lemlower}
Suppose \Cref{alg:one} sets $d$ to $\pvp-r_{v_i}$
for some $v_i\in O_p(X)$. Then $d^*\geq \pvp-r_{v_i}$.
\end{restatable}
\begin{proof} 
Note that we can assume that 
$N_p(X)\subseteq B^*$: otherwise, sharing the cost of $p$ among the voters in $B^{**} = B^* \cup N_p(X)$ would lower the per-voter price and hence 
$(p, B^{**})$ 
would also certify the instability of $(W,X)$ with budget $b+nd^*$.
On the other hand, \Cref{alg:one} always includes $N_p(X)$ in the set of buyers $B$.

At the start of \Cref{alg:one} we have $\LQ=O_p(X)$ and hence $B=N_p(X)\cup\LQ\cup\SL=N_p$. In every iteration \Cref{alg:one} eliminates at most one buyer, and, when it terminates (which it does by \Cref{lem:term}), we have $B = N_p(X)$. Hence, throughout the execution $\pvp$ ranges over $\frac{\cost(p)}{|N_p|},\frac{\cost(p)}{|N_p|-1},\ldots, \frac{\cost(p)}{|N_p(X)|+1}$. Since $|N_p|\geq |B^*|>|N_p(X)|$, there exists a last iteration $q$ for which $\pvp=\frac{\cost(p)}{|B^*|}$. At the beginning of this iteration we  have $|B| = |B^*|$.

Let $i^*$ be the index in the leftover budgets list of the voter $v_{i^*}\in B^*$ with the lowest remaining budget whose leximax payment satisfies $c_{v_{i^*}}\leq_{\textit{lex}} (\nicefrac{\cost(p)}{|B^*|},p)$. We then have 
 $d^*=\nicefrac{\cost(p)}{|B^*|}-r_{v_{i^*}}$. 
 
 Suppose the value of the pointer $i$ during iteration $q$ satisfies $i>i^*$.
If $v_{i^*}$ was removed from the set of buyers by ceasing to be liquid in iteration $\ell<q$, then the value of \pvp{} was strictly smaller before the removal than in iteration $q$. Hence, the set of buyers $B$ during iteration $\ell$
was larger than $|B^*|$, implying that after execution of \Cref{update} it was the case that $d=\frac{\cost(p)}{|B|}-r_{v_{i^*}}< \frac{\cost(p)}{|B^*|}-r_{v_{i^*}}=d^*$, a contradiction.
Otherwise, $v_{i^*}$ was moved from $\LQ$ to $\SL$ in an iteration $\ell<q$. In iteration $\ell$ voter $v_{i^*}$ satisfied $c_{v_{i^*}}>_{\textit{lex}}(\pvp,p)$, while in iteration $q$ it holds that $c_{v_{i^*}}\leq_{\textit{lex}}(\pvp,p)$. This increase in per-voter price implies that in the meantime, i.e., in some iteration $\ell_2$ with $\ell<\ell_2<q$ a voter distinct from $v_{i^*}$ was removed from $B$.
For a voter in $\SL$ to be removed from $B$ after iteration $\ell$, \pvp{} must increase first, as otherwise a voter in $\SL$ would have been removed in iteration $\ell$ already after \Cref{if1} was executed.  But \pvp{} only changes when we remove voters from $B$. Thus, at the first iteration $\ell_1$ during which a voter $v$ is removed from $B$, she is removed from the set $\LQ$, and so $v=v_{i'}$ for some $i'>i^*$. Since $r_{v_{i'}}\geq r_{v_{i^*}}$ and the set of buyers before $v_{i'}$'s removal is strictly larger than $B^*$, we obtain a contradiction, as $d=\frac{\cost(p)}{|B|}-r_{v_{i'}}\leq \frac{\cost(p)}{|B|}-r_{v_{i^*}}<\frac{\cost(p)}{|B^*|}-r_{v_{i^*}}=d^*$.

Now suppose $i\leq i^*$. Every voter $v\in B^*\setminus N_p(X)$ who has leximax payment $c_{v}>_{\textit{lex}}(\pvp,p)$ is still in $\SL$ or $\LQ$ by the end of iteration $q$.
 Every voter $v\in B^*\setminus N_p(X)$ who has $c_v<_{\textit{lex}}\pvp$ must have leftover budget $r_v\geq r_{v_{i^*}}$ by the definition of $i^*$ and so has index $i'\geq i^*\geq i$ in the leftover budgets list. This implies that at the beginning of iteration $q$, $v=v_{i'}$ is in $\LQ$.
In other words, $B^*$ is a subset of the set of buyers $B$ at the beginning of iteration $q$. But then $|B|=|B^*|$ implies $B^*=B$.
Since we chose $q$ to be the last iteration in which $\pvp=\frac{\cost(p)}{|B^*|}$, a voter is removed from set $B$ in this iteration, and by the previous argument no voter $v\in B^*=B$ is removed from $\SL$. We conclude that in iteration $q$ \Cref{alg:one} executes Line~15, and so $d=\frac{cost(p)}{|B|}-r_{v_{i}}\leq \frac{cost(p)}{|B^*|}-r_{v_{i^*}} =d^*$, completing the proof.
\end{proof}
By combining Lemmas~\ref{exists} and~\ref{lemlower}, 
we obtain the desired result.
\end{proof}


\begin{algorithm}
\LinesNumbered
	\SetAlgoNoLine
	\KwIn{$E(b) = (N,P,\{A_i\}_{i\in N}, \cost,b)$, a stable equal-shares solution $(W,X)$ for $E(b)$\;    $A$ = $[r_{v_1},\ldots,r_{v_{n}}]$ where $\{v_1, \dots, v_n\}=[n]$ and $r_{v_i}\leq r_{v_{i+1}}$ for $i\in [n-1]$\;
    $B$ = $[c_{w_1},\ldots, c_{w_{n}}]$ where
    $\{w_1, \dots, w_n\}=[n]$ and $c_{w_j}\le_{\textit{lex}} c_{w_{j+1}}$ for $j\in [n-1]$\; }
	\KwOut{Minimum $d>0$ such that $(W,X)$ is unstable for $E(b+nd)$}
            $d$ = $+\infty$\;
	\For{$p\in P$}{
	$A' \leftarrow $ subarray of $A$ restricted to voters $O_p(X)$\;
	$B' \leftarrow$ subarray of $B$ restricted to voters $O_p(X)$\;
	$d=\min\{d,\,${\sc GreedyProjectChange}$(E,(W,X),p,A',B')\}$\;
    }
        \Return $d$
	\caption{\textsc{add-opt}}
	\label{alg:two}
\end{algorithm}

\noindent Our {\sc{add-opt}} algorithm for cardinal utilities (\Cref{alg:two}) iterates over all projects, runs \textsc{Greedy\-ProjectChange} for each project, and returns the minimum value of $d$ over all such runs. 

\begin{restatable}[$\spadesuit$]{theorem}{gnbtheorem}\label{gnbtheorem}
Let $(W,X)=\EES(E(b))$, where $E(b) = (N,P,\{A_i\}_{i\in N}, \cost, b)$. 
Given $(W, X)$ and $E(b)$, {\sc add-opt} computes the minimum value $d^*$ such that $d^*>0$ and $\EES(E(b+nd^*))\neq (W,X)$, and runs in time $O(mn)$.
\end{restatable}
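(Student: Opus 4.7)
The argument splits into a runtime bound and a correctness argument. For the runtime, \Cref{alg:two} calls {\sc GreedyProjectChange} once per project, preceded by an $O(n)$ linear pass to extract the restricted sorted subarrays $A'$ and $B'$ for that project. By \Cref{lem:term} each {\sc GreedyProjectChange} invocation is $O(n)$, and the two global sorted lists can be produced alongside EES itself (the leftover-budget list is maintained along the execution, as noted at the end of \Cref{sec:ees}, and the leximax-payment list is obtained with one additional $O(n\log n)$ sort). Summing over the $m$ projects yields the claimed $O(mn)$ bound.

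For correctness, let $d^{\mathrm{opt}}$ denote the output of {\sc add-opt}. By \Cref{thm:opt} applied to each project in turn, $d^{\mathrm{opt}}$ is the smallest $d>0$ for which some project certifies the instability of $(W, X)$ in $E(b+nd)$, equivalently the smallest $d>0$ for which $(W, X)$ fails to be stable for $E(b+nd)$. Writing $d^{\mathrm{EES}}$ for the smallest $d>0$ with $\EES(E(b+nd))\neq(W,X)$, the goal is to establish $d^{\mathrm{opt}}=d^{\mathrm{EES}}$. The inequality $d^{\mathrm{opt}}\ge d^{\mathrm{EES}}$ is immediate: $(W, X)$ is unstable for $E(b+nd^{\mathrm{opt}})$ by construction, while $\EES(E(b+nd^{\mathrm{opt}}))$ is stable by \Cref{prop:ees-stable}, so the two solutions must differ.

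The reverse inequality $d^{\mathrm{opt}}\le d^{\mathrm{EES}}$ is the contrapositive claim: if $(W, X)$ is stable for $E(b+nd)$ then $\EES(E(b+nd))=(W,X)$. I would prove this by induction on the step $t$ of EES on $E(b+nd)$, with the invariant that its first $t$ picks coincide with those of EES on $E(b)$. For the inductive step, suppose EES on $E(b+nd)$ selects some $(p', V')\neq(p_{t+1},V_{t+1})$ at step $t{+}1$. Since affordability only improves with a larger budget, the bang-per-buck maximization and $\lhd$-tie-break force either $|V'|\cdot u(p')/\cost(p')>|V_{t+1}|\cdot u(p_{t+1})/\cost(p_{t+1})$, or equality with $p'\lhd p_{t+1}$. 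I would then show that $(p', V')$ itself certifies the instability of $(W, X)$ in $E(b+nd)$, contradicting the stability assumption.

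The main obstacle is verifying the willingness condition for every voter $i\in V'\setminus N_{p'}(X)$. If $i$ contributes to no project picked after step $t$ in $(W, X)$, her final leftover equals her leftover at step $t$, so with the extra $d$ she can afford $\cost(p')/|V'|$ (willingness via leftover). Otherwise she pays $\cost(p_s)/|V_s|$ to some $p_s$ with $s>t$; the nonincreasing bang-per-buck along EES's trajectory yields $\cost(p_s)/|V_s|\ge\cost(p')/|V'|$, and when equality holds the $\lhd$-tie-break among equal-bang-per-buck picks gives $p'\lhd p_{t+1}\lhd p_s$, while $p_i$ is no earlier than $p_s$ in the $\lhd$-order since $p_s$ attains $i$'s maximum payment, so $p'\lhd p_i$. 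Thus $(\cost(p')/|V'|,p')<_{\textit{lex}} c_i$, giving willingness via leximax. Together with the edge case where EES on $E(b+nd)$ continues past the termination of EES on $E(b)$ (its extra pick is affordable under the final leftovers in $(W, X)$ augmented by $d$, and hence certifies instability directly via leftover), these cases assemble into the required contradiction.
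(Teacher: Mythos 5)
Your skeleton coincides with the paper's: the runtime count ($m$ calls to \Cref{alg:one} at $O(n)$ each, via \Cref{lem:term}), the direction $d^{\mathrm{opt}}\ge d^{\mathrm{EES}}$ via stability of EES outcomes (\Cref{prop:ees-stable}), and the direction $d^{\mathrm{opt}}\le d^{\mathrm{EES}}$ by locating the first iteration at which the executions on $E(b)$ and $E(b+nd)$ diverge and arguing that the pair selected there by the larger-budget run certifies the instability of $(W,X)$. The paper states this last step directly at $d=d^{\mathrm{EES}}$ rather than as your contrapositive induction, but that repackaging is immaterial, and your willingness analysis (leftover case when $i$ pays for nothing after the divergence step; leximax case via non-increasing bang per buck along the run together with the $\lhd$ tie-break chain $p'\lhd p_{t+1}\lhd p_s\trianglelefteq p_i$) is sound and matches the paper's.

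However, there is a genuine gap: certifying instability per \Cref{def:stable} requires \emph{two} conditions, $|V'|>|N_{p'}(X)|$ and willingness of every $i\in V'\setminus N_{p'}(X)$, and you verify only the second, explicitly calling willingness ``the main obstacle.'' The cardinality condition is not automatic. If $p'$ is later selected by $\EES(E(b))$ with payer set $N_{p'}(X)$, nothing in your argument excludes $|N_{p'}(X)|\ge|V'|$, in which case $(p',V')$ certifies nothing and your contradiction never fires; the paper spends the entire first paragraph of its second direction on precisely this point, proving $N_{p'}(X)\subsetneq V'$ in two sub-arguments (a voter outside $V'$ who could afford the later price could have joined $V'$ at the divergence step, contradicting the argmax choice; and strictness because the two runs make opposite choices at that step, so some voter in $V'$ cannot afford the price in $E(b)$). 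The omission is repairable with tools you already invoke: for cardinal utilities, bang-per-buck monotonicity along the small run plus the big run's argmax give $|N_{p'}(X)|\le|V'|$, and equality would place $(p',N_{p'}(X))$ in $\Phi^*$ at step $t+1$ of the small run, forcing $p_{t+1}\lhd p'$ while the big run's tie-break forces $p'\lhd p_{t+1}$, a contradiction---but as written your proof is incomplete without it. A minor further wrinkle: your one-off $O(n\log n)$ sort for the leximax list is not dominated by $O(mn)$ when $m=o(\log n)$; the paper instead treats both sorted lists as auxiliary input maintained during the EES run itself.
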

\begin{proof}
By construction, {\sc add-opt} returns a value $d$ such that $(W, X)$ is unstable for $E(b+nd)$. By \Cref{prop:ees-stable}, the outcomes of EES are stable, so $(W, X)\neq\EES(E(b+nd))$.
Hence, $d\ge d^*$.

We will now prove that $d\le d^*$. Let $(W^*,X^*)=EES(E(b+nd^*))$.
Compare the execution of EES on $E(b+nd^*)$ and $E(b)$, and let $\ell$ be the first iteration in which the two executions differ (they may differ by selecting different projects, or they may select the same project, but have it funded by different groups of voters; it may also be the case that $\EES(E(b))$ terminates after $\ell-1$ iterations, while $\EES(E(b+nd^*))$ does not, but not the other way around). Let $p$ be the project selected by EES on $E(b+nd^*)$ in iteration $\ell$, let $V_p=N_p(X^*)$, and set $\pi=\cost(p)/|V_p|$. We will show that $(p,V_p)$ certifies the instability of $(W,X)$ for budget $b+nd^*$; this implies $d\le d^*$. 

First, we will show that if $\EES(E(b))$ selects $p$ in some iteration $\ell'>\ell$ then the set of voters $V'_p$ who share the cost of $p$ in $\EES(E(b))$ is a strict subset of $V_p$.
Indeed, suppose that $V'_p\setminus V_p\neq\varnothing$. Each voter in $V'_p\setminus V_p$ can afford to pay $\nicefrac{\cost(p)}{|V'_p|}$ in iteration $\ell'$ in $\EES(E(b))$, so each of them can afford to pay $\nicefrac{\cost(p)}{|V_p\cup V'_p|}$ in iteration $\ell$ in $\EES(E(b+nd^*))$, a contradiction with the choice of $V_p$. Thus, $V'_p\subseteq V_p$.
We will now argue that $V_p\setminus V'_p\neq\varnothing$.
Let $p'$ be the project chosen by $\EES(E(b))$ in iteration $\ell$. Since $\EES(E(b+nd^*))$ favors $p$ over $p'$ in iteration $\ell$, while $\EES(E(b))$ makes the opposite choice, and the two executions are identical up to that point, it has to be the case that in $E(b)$ some voters in $V_p$ cannot afford to pay $\pi$ in iteration $\ell$; this will still be the case in iteration $\ell'>\ell$. Thus, $V_p\setminus V'_p\neq\varnothing$. This means that each $v\in V'_p$ contributes more than $\pi$ towards $p$ in $\EES(E(b))$.
 
Now, consider a voter $v\in V_p$ who does not pay for $p$ in $\EES(E(b))$ 
(either because $p$ is not selected in $\EES(E(b))$ or because $p$ is selected, but $v\not\in V'_p$).
Let $r_v$ be her leftover budget after EES has been executed on $E(b)$. We know that after iteration $\ell-1$ in the execution of $\EES(E(b+nd^*))$ this voter was able to pay
$\pi$ for $p$, so her remaining budget at that point in $E(b+nd^*)$ was at least $\pi$. Consequently, her remaining budget in $E(b)$ after $\ell-1$ iterations was at least 
$\pi-d^*$. If she did not contribute to any projects after the first $\ell-1$ iterations of $\EES(E(b))$, we have $r_v\ge \pi-d^*$. Otherwise, she contributed to some project $p'$ in a subsequent iteration. If voters in $E(b)$ could afford $p'$ in iteration $\ell+1$ or later, the voters in $E(b+nd^*)$ could afford $p'$ in iteration $\ell$.
Since $\EES(E(b+nd^*))$ chose $p$ over $p'$ in iteration $\ell$, every supporter of $p'$ at that point (in both executions) would have to contribute at least $\pi$ towards $p'$, 
and that would also be the case in all subsequent iterations. Thus, $X_{v, p'}\ge \pi$, and if $X_{v, p'} = \pi$, then $p\lhd p'$ (because $\EES(E(b+nd^*))$ chose $p$ over $p'$). Thus, we conclude that in this case $c_v>_\textit{lex} (\pi, p)$. 

Therefore, for each $v\in V_p$ with $X_{v, p}=0$ we have $r_v\ge \pi -d^*$ or $c_v>_\textit{lex} (\pi, p)$, and for each $v\in V_p$ with $X_{v, p}>0$ we have $X_{v, p}>\pi$.
Hence, $(p, V_p)$ witnesses the instability of $(W, X)$ for budget $b+nd^*$.
This concludes the proof.
\end{proof}	

\begin{remark}
{\em
\Cref{alg:one} does not necessarily return the minimum value of $d$ such that if each voter were given additional budget $d$, project $p$ would be included in the outcome selected by EES. \
Indeed, if there is another  project $p'\neq p$ such that
\Cref{alg:one} returns $d'<d$ on $p'$, 
 then if the budget is increased to $b+nd'$, EES may select $p'$, and this will enable it to select $p$ at a later step.
 
  Concretely, consider four projects $p_1,p_2,p_3$ and $p_4$
with costs $\cost(p_1)=2$, $\cost(p_2)=98$, $\cost(p_3)=100$, $\cost(p_4)=51$ and budget $b=150$. The set of voters is $\{1, 2, 3\}$, where $A_1= \{p_1, p_2\}$, 
$A_2= \{p_2, p_3\}$, and $A_3= \{p_3, p_4\}$. EES selects $\{p_1,p_3\}$. For project $p_4$, \Cref{alg:one} returns $d = \cost(p_4) = 51$. However, project $p_2$
certifies the instability of $\{p_1,p_3\}$ for budget $b'=b+3d'$, where $d'=1 < d$, and the outcome selected by EES with budget $153$ is $\{p_1,p_2,p_4\}$.
}
\end{remark}


\section {From Cardinal to Uniform Utilities}\label{sec:iu}
In practice, MES is typically used under the assumption of cost utilities. In this section, we extend our approach to handle uniform utilities, which encompass cardinal utilities and cost utilities as special cases.
The distinctive feature of cardinal utilities is the inverse relationship of the bang per buck of a project $p$ for a voter $i$ and the voter's payment $x_{i,p}$. Specifically, if one project can be bought at a higher bang per buck than another, that precisely means it is cheaper for the voter. Consequently, if the voter is willing to contribute some amount to $p$, she can do so by deviating from \textit{at most one} project. In contrast, for more general utilities it is no longer sufficient to simply reallocate support from a single project with a lower bang per buck. Instead, it may require withdrawing support from \textit{multiple} projects, thereby significantly increasing the combinatorial complexity of the problem. This suggests that extending the algorithm to handle general uniform utilities may be less efficient.
As discussed in \Cref{sec:ees}, EES can be implemented so that it returns auxiliary information, such as  voters' leftover budgets in non-decreasing order, without increasing its runtime of $O(m^2n)$.
Similarly, we can trivially modify EES to return the selected projects in $W$ in the order they were selected, i.e. in order of non-increasing bang per buck (with lexicographic tie-breaking). We will denote this sequence as $p_1,p_2,\ldots, p_w$ where $w=|W|$. Using this auxiliary input,  \textsc{GreedyProjectChange} for uniform utilities can be implemented in time $O(m+n)$.The overall solution can be computed in time $O(m^2n)$, as we show in \Cref{app:delproofs}.
To achieve this, we generalize our definition of stability. For this section we define the relation $<_{t}$ for $(x,p)>_{t}(y,p')$, $x,y\geq 0$ and $p,p'\in P$ to mean $x>y$ or $x=y$ and $p\lhd p'$\footnote{$x$ and $y$ represent potential values of $\bpb$, where larger values are preferred unlike for $\pvp$ in \Cref{sec:cu} and so we cannot use $(x,p)>_{\textit{lex}}(y,p')$.}.
Consider again the cardinal utility case where $u(p)=1$ for every project $p\in P$.

\begin{restatable}{lemma}{lemwilling}\label{lem:willing}Given outcome $(W,X)$ and $u(p)=1$ for every $p\in P$, voter $v$ is willing to contribute $\frac{\cost(p)}{t}$ if and only if the sum of her leftover budget and the total amount she spends on less preferred projects $p'$ in the set
$\{p'\mid (\bpb(p'),p')<_t (\frac{u(p)t}{cost(p)},p)\}$ exceeds $\frac{\cost(p)}{t}$.
\end{restatable}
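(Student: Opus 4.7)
The plan is to rewrite the set $S$ in terms of voter $v$'s per-project payments, translating the bang-per-buck condition into a leximax condition, after which both directions follow from the definition of the leximax payment $c_v=(x_v,p_v)$.

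Since $u(p')=1$ for every $p'$, whenever $v$ contributes to $p'$ one has $\bpb(p')=1/x_{v,p'}$. I would unwind the definitions of $<_t$ and $<_{\text{lex}}$ and note that taking reciprocals flips the inequality and also swaps the tie-breaking direction, which is exactly why the two relations are set up with opposite conventions. The conclusion is that $(\bpb(p'),p') <_t (t/\cost(p),p)$ is equivalent to $(x_{v,p'}, p') >_{\text{lex}} (\cost(p)/t, p)$, so
\[
S \;=\; \{p'\in W : (x_{v,p'}, p') >_{\text{lex}} (\cost(p)/t, p)\}.
\]
Under this reformulation, the definition of ``willing to contribute $\cost(p)/t$ to $p$''---namely $\cost(p)/t \le r_v$ or $(\cost(p)/t, p) <_{\text{lex}} c_v$---becomes: $r_v \ge \cost(p)/t$ or $p_v \in S$.

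For the $(\Leftarrow)$ direction, if $r_v \ge \cost(p)/t$ the inequality is trivial; if instead $p_v \in S$, then $\sum_{p'\in S} x_{v,p'} \ge x_{v,p_v} = x_v$, and $(x_v,p_v) >_{\text{lex}} (\cost(p)/t, p)$ gives $x_v \ge \cost(p)/t$ (the only borderline case being $x_v = \cost(p)/t$ with $p \lhd p_v$, which still yields $\ge$). For the $(\Rightarrow)$ direction, assume $r_v < \cost(p)/t$; then the sum must be strictly positive, so there is some $p^\dagger\in S$ with $x_{v,p^\dagger}>0$. Because $p_v$ is the $\lhd$-largest project attaining the maximum payment $x_v$, one has $(x_v,p_v) \ge_{\text{lex}} (x_{v,p^\dagger}, p^\dagger)$: either $x_v > x_{v,p^\dagger}$ outright, or $x_v = x_{v,p^\dagger}$, in which case $p^\dagger = p_v$ or $p^\dagger \lhd p_v$ by the $\lhd$-maximality of $p_v$. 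Combined with $(x_{v,p^\dagger},p^\dagger) >_{\text{lex}} (\cost(p)/t, p)$, this places $p_v$ in $S$, and hence $v$ is willing.

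The main delicacy is the careful bookkeeping of the two tie-breaking conventions---the $<_t$ order ranks projects by bang-per-buck (higher is better, ties broken by $\lhd$-smaller), while $<_{\text{lex}}$ ranks payments (higher is higher, ties broken by $\lhd$-larger)---and checking that the reciprocal correctly inverts both. Once the translation $S = \{p' : (x_{v,p'}, p') >_{\text{lex}} (\cost(p)/t, p)\}$ is established, the lemma reduces to a short manipulation of the definition of the leximax payment.
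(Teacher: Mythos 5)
Your proof is correct and takes essentially the same route as the paper's: both rest on the reciprocal identity $\bpb(p')=\nicefrac{1}{x_{v,p'}}$ (valid for equal-shares payments with $u\equiv 1$), which converts the $<_t$ condition on bang per buck into the $<_{\textit{lex}}$ condition on payments, followed by an appeal to the leximax-maximality of $c_v$. If anything, your write-up is slightly more careful than the paper's, which elides the tie-break case in its chain of equivalences (and contains a typo, $\frac{\cost(p)}{|N_{p'}(X)|}$ where $\frac{\cost(p')}{|N_{p'}(X)|}$ is meant); the only imprecision on your side is the claimed set identity for $S$, which can differ from the paper's set on projects with $x_{v,p'}=0$, though these contribute nothing to the sum, so the argument is unaffected.
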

\begin{proof}
Let $v\in V\setminus N_p(X)$ has $r_v\geq \frac{\cost(p)}{|V|}$ or $(\frac{\cost(p)}{|V|},p)<_{\textit{lex}}c_v$. In the latter case, the voter spends at least $\frac{\cost(p)}{|V|}$ on a less preferred project $p'$.  So the sum of her leftover budget and the budget she spends on less preferred projects is at least $\frac{\cost(p)}{|V|}$, as desired.

For the other direction of the claim, suppose  now voter $v$ has $t_v\geq \frac{\cost(p)}{|V|}$ where $t_v$ is the combined total of $r_v$ and the money spent on projects $p'$ in
$\{p'\mid (\bpb(p'),p')<_t (\frac{u(p)t}{cost(p)},p)\}$. If the latter set is empty, then $r_v\geq \frac{\cost(p)}{|V|}$.
If it is non-empty, then such a project $p'$ has \begin{align*}&BpB(p')\leq \frac{u(p)|V|}{\cost(p)} \iff \frac{u(p')|N_{p'}(X)|}{\cost(p')}\leq \frac{u(p)|V|}{\cost(p)} \iff \\&\frac{|N_{p'}(X)|}{\cost(p')}\leq \frac{|V|}{\cost(p)}\iff \frac{\cost(p)}{|V|}\leq \frac{\cost(p)}{|N_{p'}(X)|}.
\end{align*}
So it follows that $(\frac{cost(p)}{|V|},p)<_{\textit{lex}}(\frac{\cost(p)}{|N_{p'}(X)|},p')$ implying in particular that $(\frac{cost(p)}{|V|},p)<_{\textit{lex}}c_v$. So $r_v\geq \frac{cost(p)}{|V|}$ or $(\frac{cost(p)}{|V|},p)<_{\textit{lex}}c_v$ hold, implying that $v$ is willing to contribute $\frac{\cost(p)}{|V|}$ to $p$.\end{proof}
With this result in hand, we can now overload the definition of willingness to contribute in the case of uniform utilities. Given a pair $(W,X)$, we now say that voter $i$ is {\em \textcolor{Maroon}{willing to contribute}} $\frac{\cost(p)}{t}$ to $p$ if the second condition in \Cref{lem:willing} holds.
With this updated definition, \Cref{def:stable} of what it means for $(p,V)$ to certify the instability of $(W,X)$ applies to uniform utilities. Note that this definition is equivalent to \Cref{def:stable} if $u(p)=1$ by \Cref{lem:willing}.
\begin{restatable}[$\spadesuit$]{proposition}{propstable}\label{prop:uniformstable}
EES returns a stable (for uniform utilities) outcome.
\end{restatable}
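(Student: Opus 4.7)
The plan is to argue by contradiction, mirroring the spirit of Proposition~4.3 but replacing the ``leximax project'' bookkeeping of the cardinal case with the richer accounting of payments to strictly-less-preferred projects furnished by the generalized willingness condition. Suppose $(p, V)$ certifies the instability of $(W, X) = \EES(E(b), u)$; set $\pi = \cost(p)/|V|$ and $\beta^* = u(p)|V|/\cost(p)$, so that, by the uniform-utility willingness condition, every $v \in V \setminus N_p(X)$ satisfies
\[
r_v + \sum_{\substack{p' \in W \\ (\bpb(p'), p') <_t (\beta^*, p)}} X_{v, p'} \;\ge\; \pi .
\]
The aim is to pinpoint a moment during Algorithm~1's execution at which $(p, V)$ is eligible with bang-per-buck no worse than the next pair selected by EES, contradicting the argmax-with-$\lhd$-tiebreak rule.

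The first step is the observation that the sequence of pairs $(p_k, V_k)$ chosen by EES is non-increasing under $>_t$: any strictly $>_t$-better later pair would already have been affordable earlier (leftover budgets never grow), violating the earlier argmax choice. Let $k^*$ be the last iteration whose pair is strictly $>_t$-better than $(\beta^*, p)$, or $k^* = 0$ if no such iteration exists. Every post-$k^*$ purchase then satisfies $(\bpb(p_k), p_k) \le_t (\beta^*, p)$, and for $v \in V \setminus N_p(X)$ any nonzero post-$k^*$ payment $X_{v, p_k}$ forces $p_k \ne p$ (since $v$ does not pay for $p$ in $X$), ruling out the equality case and keeping every such payment inside the willingness-condition sum. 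Hence the leftover of $v$ at the end of iteration $k^*$ is at least $r_v + \sum_{k > k^*} X_{v, p_k} \ge \pi$. Voters $v \in V \cap N_p(X)$ (which requires $p \in W$) are easier: $p$ was purchased at bang-per-buck $u(p)|N_p(X)|/\cost(p) < \beta^*$, strictly after $k^*$, so each such $v$ still holds at least $\cost(p)/|N_p(X)| > \pi$ at the end of iteration $k^*$.

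These two facts make $(p, V)$ an eligible candidate at iteration $k^*+1$ at bang-per-buck exactly $\beta^*$, so EES cannot terminate at $k^*$. The argmax rule forces the chosen pair at iteration $k^*+1$ to be $\ge_t (\beta^*, p)$; maximality of $k^*$ forces it to be $\le_t (\beta^*, p)$. Equality pins down $p_{k^*+1} = p$ and $|V_{k^*+1}| = |V|$, whence $|N_p(X)| = |V_{k^*+1}| = |V|$, contradicting $|V| > |N_p(X)|$. The main obstacle I expect is the lexicographic bookkeeping in the second step: showing that the ``$=$'' case in the $\le_t$ comparison cannot absorb any contribution by a voter in $V \setminus N_p(X)$, which is what lets the generalized willingness condition transfer cleanly from the final outcome $(W, X)$ to the intermediate state at the end of iteration $k^*$. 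This is precisely where the simple observation that such a voter has not paid for $p$ becomes essential, since it rules out $p' = p$ at equal bang-per-buck and keeps every one of her post-$k^*$ payments strictly below $(\beta^*, p)$ in the $<_t$ order.
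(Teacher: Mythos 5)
Your proof is correct and takes essentially the same route as the paper's: both argue by contradiction, use the observation that EES selects pairs in non-increasing $>_t$ order of bang per buck, and rewind to the boundary iteration (your iteration $k^*+1$ is exactly the paper's index $i$, the first selection that is $<_t$-worse than $(\beta^*,p)$), where the generalized willingness condition shows $(p,V)$ is affordable and outranks the pair EES actually chose. You are merely more explicit at a few points the paper treats as immediate — proving the monotonicity of the selected bang-per-buck sequence, handling the voters in $V\cap N_p(X)$ separately, and closing via the equality-pinning $|V_{k^*+1}|=|V|$ instead of the paper's direct priority contradiction — so the argument is the same in substance.
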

For readability, from now on we will refer to outcomes that are stable for uniform utilities as simply stable.

\subsection{Time Complexity and Correctness}\label{subsec:genproofs}

\Cref{alg:changegen}, {\sc GreedyProjectChange} for uniform utilities solves the problem of finding the minimum per voter budget increase $d$ such that project $p$ certifies the instability for $E(b+nd)$ for uniform utilities. The key insight is that, under uniform utilities,  project costs being shared exactly equally combined with uniform utilities give rise to the uniform bang per buck, given by $\mathit{\bpb}(p)=\frac{u(p)N_p(X)}{\cost(p)}$, for each contributing voter in $N_p(X)$.
Similar to \Cref{alg:one}, for given $p\in P$ we compute the minimum budget increase so that $p$ will certify the instability of $(W,X)$: For each project $p$ we calculate the additional budget per voter required so exactly $t$ voters can afford to pay for $p$ for the first time for each $t=|N_p|,\ldots ,|N_p(X)|+1$. 
Leveraging the fact that we only need to consider the $t-|N_p(X)|$ richest voters in $O_p(X)$ as measured by how much they are willing to contribute with $t$ voters then yields a computationally efficient solution for uniform utilities. 

Key input data to \Cref{alg:changegen} consists of the $w+1$ lists $L_1,\ldots,L_{w+1}$, where each list $L_i$, $i\leq w$ corresponds to a distinct project $p_i\in W$ and $L_{w+1}$ contains voters' leftover budgets sorted in nondecreasing order. These lists can be computed in time $O(mn)$ a preprocessing step in \textsc{add-opt} for uniform utilities (given as \Cref{alg:genutil} in \Cref{app:delproofs}). Define $r_{w+1}(v)=r_v$ and for $i\in [w]$ $r_{i}(v)=r_{i+1}(v)+x_{v,p_i}$, representing the total amount $v$ spends on projects $p_{i},\ldots,p_{w}$ and her leftover budget. For $i\geq 1$, the $i$-th list $L_i$ contains, in non-decreasing order, for each voter the total budget $r_i(v)$ that each voter $v$ contributes to projects "no better than" project $p_i$ combined with the voter's leftover budget. Specifically, this includes precisely those projects $p\in W$ with $(\bpb(p),p)<_t(\bpb(p_i),p_i)$.
To identify voters willing to pay, the lists $L_1,
\ldots, L_{w+1}$ are particularly useful due to the following simple observation. 
\begin{lemma}\label{lem:obs}Voter $v$ is willing to contribute $\frac{cost(p)}{t}$ to project $p$ if and only if for some $i\geq 1$ satisfying $(\frac{u(p) t}{cost(p)},p)>_t(\bpb(p_i),p_i)$ it holds that $r_i(v)\geq \frac{\cost(p)}{t}$ or else $r_v\geq  \frac{\cost(p)}{t}$.\end{lemma}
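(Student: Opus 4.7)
The plan is to prove the biconditional directly, unrolling the uniform-utility definition of willingness (obtained by generalizing the second condition of \Cref{lem:willing}) and exploiting the fact that the sequence $p_1,\ldots,p_w$ is sorted in strictly decreasing order under $>_t$ (distinct projects cannot tie in the $(\bpb,p)$ comparison once the lex tie-break is applied). Let $S=\{p'\in W : (\bpb(p'),p')<_t(\frac{u(p)t}{\cost(p)},p)\}$ denote the set of selected projects that are strictly less preferred than $p$ at the candidate per-voter payment $\tfrac{\cost(p)}{t}$. The key structural observation I would record first is that $S$ is a suffix of $p_1,\ldots,p_w$: if $p_j\in S$ and $j'>j$, then $(\bpb(p_{j'}),p_{j'})<_t(\bpb(p_j),p_j)<_t(\tfrac{u(p)t}{\cost(p)},p)$, so $p_{j'}\in S$ too. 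Let $i^*$ be the smallest index with $p_{i^*}\in S$ (if $S\neq\varnothing$); then $S=\{p_{i^*},\ldots,p_w\}$ and $r_v+\sum_{p'\in S}x_{v,p'}=r_{i^*}(v)$ by the recursive definition $r_{i}(v)=r_{i+1}(v)+x_{v,p_i}$.

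For the forward direction, suppose $v$ is willing to contribute $\tfrac{\cost(p)}{t}$, i.e.\ $r_v+\sum_{p'\in S}x_{v,p'}\ge\tfrac{\cost(p)}{t}$. If $S=\varnothing$, this collapses to $r_v\ge\tfrac{\cost(p)}{t}$, which is one of the two disjuncts in the lemma's conclusion. Otherwise the sum equals $r_{i^*}(v)$, so $r_{i^*}(v)\ge\tfrac{\cost(p)}{t}$, and by definition of $S$ the index $i=i^*$ satisfies $(\tfrac{u(p)t}{\cost(p)},p)>_t(\bpb(p_{i^*}),p_{i^*})$, giving the required index.

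For the backward direction, if $r_v\ge\tfrac{\cost(p)}{t}$, willingness is immediate from the leftover alone. Otherwise some $i$ satisfies $(\tfrac{u(p)t}{\cost(p)},p)>_t(\bpb(p_i),p_i)$ together with $r_i(v)\ge\tfrac{\cost(p)}{t}$; the first condition places $p_i$ in $S$, so by the suffix property every $p_j$ with $j\ge i$ lies in $S$, and consequently $r_i(v)=r_v+\sum_{j\ge i}x_{v,p_j}\le r_v+\sum_{p'\in S}x_{v,p'}$. Hence the willingness inequality holds. The only subtlety in the whole argument is the suffix observation, which follows directly from how $p_1,\ldots,p_w$ was defined; beyond that the proof is a straightforward unpacking of definitions and I do not foresee any real obstacle.
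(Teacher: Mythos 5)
Your proof is correct: the two ingredients you isolate --- that $S=\{p'\in W : (\bpb(p'),p')<_t(\frac{u(p)t}{\cost(p)},p)\}$ is a suffix $\{p_{i^*},\dots,p_w\}$ of the selection order (by transitivity of $<_t$ and the strict sorting of $p_1,\dots,p_w$), and the telescoping identity $r_{i^*}(v)=r_v+\sum_{p'\in S}x_{v,p'}$ --- are exactly what reduces the lemma to the uniform-utility definition of willingness, and both directions go through as you argue. The paper itself states this lemma as a ``simple observation'' and provides no proof, so your writeup supplies precisely the intended justification; there is no gap.
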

\begin{algorithm}

\LinesNumbered
	\SetAlgoNoLine
	\KwIn{$E=(N,P,\{A_i\}_{i\in N},b,\cost)$, equal shares solution $(W,X)$,
 project $p$\;  lists $L_1,\ldots, L_{w},L_{w+1}$ \tcp{defined in \Cref{subsec:genproofs}} }
	\KwOut{Minimum $d>0$ such $p$ certifies the instability of $(W,X)$ for $E(b+dn)$}		
            $d \leftarrow \infty$\;
            $\ell\leftarrow 1$\;
            $i\leftarrow w+1$\;
	\While{$\ell\leq |O_p(X)|$}{
	
			$i \leftarrow \min\{i\mid (\frac{u(p)}{\nicefrac{\cost(p)}{(\ell+|N_p(X)|}},p))>_{\textit{t}}(\bpb(p_i),p_i)\}\cup \{w+1\}$\; \label{def_i}
			$d\leftarrow \min\{d, \frac{\cost(p)}{\ell+|N_p(X)|}-L_i[|O_p(X)|-\ell]\}$\;
			$\ell \leftarrow \ell+1$\;

	}
        \Return $d$\;
	\caption{\textsc{GreedyProjectChange} (\textsc{GPC})  for uniform utilities}
	\label{alg:changegen}
	
	\end{algorithm}

With this auxilliary information in hand, for each project $p$ and each $t=|N_p|$ to $t=|N_p(X)|+1$ \Cref{alg:changegen} computes the budget increase $d$ needed so that at least $t$ voters would deviate and collectively pay for $p$. 

\begin{restatable}{lemma}{algchangegen}
    \Cref{alg:changegen} returns the minimum amount $d^*$ such that there exists a set of voters $V$ such that $(V,p)$ certifies the instability of $(W,X)$ for $E(b+d^*n)$.
\end{restatable}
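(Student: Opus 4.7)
The plan is to parametrize the search by $t := |N_p(X)| + \ell$, where $\ell \in \{1, \ldots, |O_p(X)|\}$ counts the ``new'' supporters drawn from $O_p(X)$, to compute the minimum budget increase $d_\ell$ for which some certifying $V$ of this size exists, and conclude that the algorithm's output equals $\min_\ell d_\ell = d^*$.

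First, I would argue that the optimal certifying set $V$ may be taken to satisfy $V \supseteq N_p(X)$, so that each $V$ is determined by $V \cap O_p(X)$. Definition \ref{def:stable} imposes conditions only on voters in $V \setminus N_p(X)$, and adding a voter from $N_p(X)$ to $V$ both reduces the per-voter price $\nicefrac{\cost(p)}{|V|}$ and (because $\nicefrac{u(p)t}{\cost(p)}$ grows with $t$) only enlarges the set $\{p' : (\bpb(p'), p') <_t (\nicefrac{u(p)t}{\cost(p)}, p)\}$ of ``less preferred'' projects appearing in the uniform-utility willingness condition; hence any voter in $V \setminus N_p(X)$ who was willing remains willing.

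Next, fixing $\ell$ (so $t = |N_p(X)| + \ell$), I would evaluate $d_\ell$ explicitly. Moving from $E(b)$ to $E(b + dn)$ leaves the payments $x_{v,\cdot}$ of the fixed solution $(W, X)$ unchanged and increases every voter's per-voter share by $d$, so the quantities $r_v$ and $r_i(v)$ each grow by exactly $d$, while every $\bpb(p_j) = \nicefrac{u(p_j) |N_{p_j}(X)|}{\cost(p_j)}$ is unaffected. Consequently the index $i^* := \min(\{i : (\nicefrac{u(p)t}{\cost(p)}, p) >_t (\bpb(p_i), p_i)\} \cup \{w+1\})$ is budget-independent. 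Since the sequence $(r_i(v))_i$ is non-increasing in $i$ (because $r_i(v) = r_{i+1}(v) + x_{v, p_i}$ with $x_{v, p_i} \geq 0$), the smallest valid $i$ maximizes $r_i(v)$, and Lemma \ref{lem:obs} collapses to: voter $v \in O_p(X)$ is willing to contribute $\nicefrac{\cost(p)}{t}$ in $E(b + dn)$ iff $r_{i^*}(v) + d \geq \nicefrac{\cost(p)}{t}$. To maximize the set of willing voters, we should select the $\ell$ voters in $O_p(X)$ with the largest $r_{i^*}$ values; the smallest of these is exactly $L_{i^*}[\,|O_p(X)| - \ell\,]$, yielding $d_\ell = \nicefrac{\cost(p)}{t} - L_{i^*}[\,|O_p(X)| - \ell\,]$, matching the update performed in the loop body.

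Since the loop sweeps $\ell$ from $1$ to $|O_p(X)|$ with $i^*$ recomputed afresh on line \ref{def_i} each iteration, and maintains a running minimum, the returned value is $\min_\ell d_\ell = d^*$. The step that will require the most care is the first: verifying that enlarging $V$ with voters from $N_p(X)$ cannot decrease any existing voter's willingness amount, despite the ``less preferred'' set being defined through a $t$-dependent threshold. This is resolved by the monotone direction of the threshold, namely that $\nicefrac{u(p)t}{\cost(p)}$ strictly increases with $t$, so the less-preferred set only gains elements and each willingness amount, a sum of non-negative payments plus the leftover budget, can only grow.
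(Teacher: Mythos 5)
Your proposal is correct and follows essentially the same route as the paper's proof: decompose $d^*$ as $\min_\ell d_\ell$ over the number $\ell$ of new contributors, and show that $d_\ell = \nicefrac{\cost(p)}{\ell+|N_p(X)|} - L_{i^*}[\,|O_p(X)|-\ell\,]$ is realized by the $\ell$ richest voters in $L_{i^*}$, with optimality following from an exchange argument over any competing set of the same size. You are in fact somewhat more careful than the paper, making explicit the WLOG step that the certifying set may be taken to contain $N_p(X)$, and the observations that a budget increase shifts every $r_i(v)$ by exactly $d$ while leaving all $\bpb$ values (hence the index $i^*$) unchanged, and that the monotonicity of $(r_i(v))_i$ collapses \Cref{lem:obs} to the single threshold $r_{i^*}(v)+d \geq \nicefrac{\cost(p)}{t}$.
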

\begin{proof} 
For $\ell=1,\ldots, |O_p(X)|$ let $d_{\ell}$ be the minimum amount such that there exists a set of voters $V_{\ell}$ of size $|N_p(X)|+\ell$ such that $(V_{\ell},p)$ certifies the instability of $(W,X)$ for $E(b+n d_{\ell})$. Clearly $d^*=\min\limits_{\ell \in [|O_p(X)|]} d_{\ell}$, so it suffices to prove that (1) $d_{\ell}=\frac{\cost(p)}{\ell+|N_p(X)|}-L_i[|O_p(X)|-\ell]$ where $i$ is the index at the end of the $\ell$th iteration of \Cref{alg:changegen}, and (2) we have identified a corresponding set $V'_{\ell}$ that certifies the instability of $(W,X)$ for $E(b+nd_{\ell})$.
Let $V'_{\ell}$ be the set of voters corresponding to the last $\ell$ entries of list $L_i$ where $i$ is defined as in \Cref{def_i} of \Cref{alg:changegen} for our value of $\ell$.
After an increase in budget by an amount of $\frac{\cost(p)}{\ell+|N_p(X)|}-L_i[|O_p(X)|-\ell]$ every voter in $V'_{\ell}$ is willing to contribute an amount $\frac{\cost(p)}{\ell |N_p(X)|}$ to $p$. Thus, $d_{\ell}\leq  \frac{\cost(p)}{\ell+|N_p(X)|}-L_i[|O_p(X)|-\ell]$. Observe that every voter is willing to use the money corresponding to their entry in in $L_i$ to contribute to $p$ with $\ell+|N_p(X)|$ or more voters by the definition of $i$ and similarly, by the definition of $i$ no voter is willing to give up support for a project $p_j$ with $j<i$. So any other set $V''_{\ell}$ of size $|N_p(X)|+\ell$ satisfied $\min_{v\in V''_{\ell}} r_i(v)\leq \min_{v\in V'_{\ell}}$. This implies voters in $V''_{\ell}$ need at least as much additional budget as voter $V'_{\ell}$, implying that $d_{\ell}\geq \frac{\cost(p)}{\ell+|N_p(X)|}-L_i[|O_p(X)|-\ell]$. This completes the proof.
\end{proof}

\begin{lemma}\label{lem:addopt_u_runtime}\Cref{alg:changegen} can be implemented with runtime $O(m+n)$.
\end{lemma}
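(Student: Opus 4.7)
The plan is to bound each operation in the main loop by amortized constant time and show that the loop runs $O(n)$ times, yielding $O(m+n)$ total. The outer loop executes at most $|O_p(X)| \le n$ times, and the array access $L_i[|O_p(X)|-\ell]$, the arithmetic to update $d$, and the increment of $\ell$ are each $O(1)$, so the only nontrivial step is maintaining the pointer $i$ defined in Line \ref{def_i}.

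The key observation is a monotonicity property. Since EES selects $p_1, \ldots, p_w$ in strictly $>_t$-decreasing order of $(\bpb(p_j), p_j)$ (with lexicographic tie-breaking via $\lhd$), for any threshold $(\tau,p)$ the set $\{j \in [w] : (\tau,p) >_t (\bpb(p_j), p_j)\}$ is a suffix $\{i, i+1, \ldots, w\}$ of $[w]$, possibly empty (in which case $i = w+1$). Moreover, as $\ell$ increases by one, the threshold $\tau_\ell := \frac{u(p)(\ell + |N_p(X)|)}{\cost(p)}$ strictly increases, so the suffix is monotonically non-shrinking across iterations; equivalently, the value of $i$ computed in Line \ref{def_i} is non-increasing in $\ell$.

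Exploiting this, I would implement Line \ref{def_i} by a single backward-moving pointer. Initialize $i = w+1$ before the loop; at iteration $\ell$, repeatedly test whether $(\tau_\ell, p) >_t (\bpb(p_{i-1}), p_{i-1})$ and, if so, decrement $i$, stopping when the test fails or when $i = 1$. The resulting $i$ is exactly the minimum prescribed in Line \ref{def_i}. Each decrement takes $O(1)$ time, and since $i$ only decreases and is bounded below by $1$, the total number of decrements summed over all iterations is at most $w \le m$. Together with the $O(1)$ non-pointer work per iteration and the $O(n)$ bound on the number of iterations, the total runtime is $O(m+n)$. (The preprocessing that builds the $L_i$'s is not part of this bound; it is absorbed into the $O(mn)$ setup step for \textsc{add-opt} described in the preceding text.)

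The main obstacle is the suffix-structure claim: it must be justified rigorously using the paper's tie-breaking convention, namely that $p_i$ being selected before $p_j$ by EES implies $(\bpb(p_i), p_i) >_t (\bpb(p_j), p_j)$, together with the fact that $>_t$ is a strict total order on $\mathbb{Q}_{\geq 0} \times P$. Once the suffix structure is established, the monotonicity in $\ell$ is immediate, and the standard amortized two-pointer analysis gives the claimed $O(m+n)$ bound.
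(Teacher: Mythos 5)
Your proof is correct and follows essentially the same route as the paper's: both bound the \textbf{while} loop by $O(n)$ iterations and then exploit the fact that the index $i$ computed in Line~\ref{def_i} is non-increasing in $\ell$ (since the threshold grows with $\ell$ and the projects $p_1,\ldots,p_w$ are sorted in $>_t$-decreasing order of $(\bpb(p_j),p_j)$), so a single backward-moving pointer initialized at $w+1$ contributes only $O(m)$ amortized work in total. Your explicit statement of the suffix structure is simply a more detailed articulation of the same monotonicity observation the paper uses.
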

\begin{proof}
Since $\ell$ increases by $1$ in every round, the while loop terminates in $O(n)$ rounds. Thus, we only need to justify that \Cref{def_i} can be implemented efficiently, so that the overall runtime does not exceed $O(m+n)$. 
The key observation is that the values of $i$ are non-increasing. Suppose that $\ell$ increases to $\ell_2>\ell$ so $t=|N_p(X)|+\ell$ increases to $t_2=|N_p(X)|+\ell_2$. If $(\frac{u(p)}{\frac{\cost(p)}{t}},p)>_{t}(\bpb(p_i),p_i)\}$ then also $(\frac{u(p)}{\frac{\cost(p)}{t_2}},p)>_{\textit{t}}(\bpb(p_i),p_i)\}$.

So since $(\bpb(p_i),p_i)>_t\bpb(p_{i+1},p_{i+1})$ for all $i=1,\ldots w-1$, it follows that 
$$
\min\left\{i\mid \left(\frac{u(p)}{\nicefrac{\cost(p)}{t}},p\right)>_{\textit{t}}(\bpb(p_i),p_i)\right\}\cup \{w+1\}\geq 
\min\left\{i\mid \left(\frac{u(p)}{\nicefrac{\cost(p)}{t_2}},p\right)>_{\textit{t}}(\bpb(p_i),p_i)\right\}\cup \{w+1\}.
$$ 
It follows that as $\ell$ increases, $i$ does not increase. So it suffices to simply decrease $i$ until the condition $(\frac{u(p)}{\frac{\cost(p)}{\ell +|N_p(X)|}},p)>_{\textit{t}}(\bpb(p_i),p_i)$ is satisfied. Since $i$ is initialized to $w+1=O(m)$, we conclude that \Cref{alg:changegen} runs in time $O(m+n)$.
\end{proof}

Analogous to \Cref{sec:cu}, we define \textsc{add-opt} for uniform utilities (\Cref{alg:genutil}) and show that it returns the minimum budget increase resulting in instability and runs in time $O(m^2n)$. We only state the theorems here and defer proofs to \Cref{app:delproofs}.
\begin{algorithm}
\LinesNumbered
	\SetAlgoNoLine
	\KwIn{$E = (N,P,\{A_i\}_{i\in N}, \cost,b)$, equal shares solution $(W,X)$ for $E$\;    
	$p_1,\ldots, p_{w}$ where $(\bpb(p_i),p_i)>_{t} (\bpb(p_{i+1},)p_{i+1})$,\
    $L_{w+1}$ = $[r_{v_1},\ldots,r_{v_{n}}]$ where $r_{v_i}\geq r_{v_{i+1}}$, $v_i\neq v_j, i<j$ and $r_{v}$ is $v$'s leftover budget in $(W,X)$\;}

	\KwOut{Minimum $d>0$ such that $(W,X)$ is unstable for $E(b+dn)$}
            $d$ = $+\infty$\;
            
    $L_{\ell}$ = $[r_{\ell}(v_{\ell, 1}),\ldots, r_{\ell}(v_{\ell, n})]$ where $r_{\ell}(v_{\ell, i})\geq r_{\ell}(v_{\ell, i+1})$, $v_{\ell,i}\neq v_{\ell,j}, i<j$ for each $p_\ell \in W$\; \tcp{Implementation discussed in \Cref{lem:lists}}

	\For{$p\in P$}{
	$d=\min(d,\text{GPC}(E,(W,X),p,L_1[O_p(X)],\ldots, L_{w+1}[O_p(X)])$\;} \label{line:copy}
        \Return $d$\;
	\caption{\textsc{add-opt}  for uniform utilities}
	\label{alg:genutil}
\end{algorithm}
\begin{restatable}[$\spadesuit$]{theorem}{thmgenruntime}\Cref{alg:genutil} can be implemented in time $O(m^2n)$.    
\end{restatable}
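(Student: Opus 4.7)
The plan is to bound the runtime of \Cref{alg:genutil} by separately accounting for three costs: (i) constructing the sorted lists $L_1,\dots,L_w$ from the given $L_{w+1}$, (ii) extracting, for each project $p\in P$, the restricted sorted sublists $L_\ell[O_p(X)]$, and (iii) the $m$ calls to \textsc{GPC}. Summing these will give the claimed $O(m^2 n)$ bound.

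For (i), I would argue that each $L_\ell$ can be produced from $L_{\ell+1}$ in $O(n)$ time, avoiding a naive $O(n\log n)$ re-sort. The crucial observation is that, under equal shares, every voter in $N_{p_\ell}(X)$ contributes the same amount $\cost(p_\ell)/|N_{p_\ell}(X)|$ towards $p_\ell$, so $r_\ell(v) = r_{\ell+1}(v) + \cost(p_\ell)/|N_{p_\ell}(X)|$ for $v\in N_{p_\ell}(X)$ and $r_\ell(v) = r_{\ell+1}(v)$ otherwise. A single linear pass over $L_{\ell+1}$ therefore suffices to split it into two sorted sublists (those in $N_{p_\ell}(X)$ and those not); the first is shifted by a fixed additive constant, which preserves its relative order, and the two sorted sublists are merged back in $O(n)$. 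This is exactly the split-shift-merge trick used for the base EES implementation in \Cref{sec:ees}. Iterating from $\ell = w$ down to $\ell = 1$ then yields all $w$ lists in $O(wn) = O(mn)$ time.

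For (ii), given a sorted list $L_\ell$ containing (value, voter) pairs, the restricted sublist $L_\ell[O_p(X)]$ can be materialized by a single $O(n)$ scan that keeps only those entries whose voter lies in $O_p(X)$; because the pass is in sorted order, the resulting sublist is itself sorted. For each project we perform $w+1 = O(m)$ such extractions, costing $O(mn)$ per project and $O(m^2 n)$ across all $m$ projects. For (iii), \Cref{lem:addopt_u_runtime} gives $O(m+n)$ per \textsc{GPC} invocation, contributing $O(m(m+n)) = O(m^2 + mn)$ in total. Summing (i)--(iii) gives $O(mn) + O(m^2 n) + O(m^2 + mn) = O(m^2 n)$, as required.

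The main obstacle will be (i): to hit the target bound, one cannot afford to re-sort each $L_\ell$ from scratch, and the argument must rest on the observation that the transition from $L_{\ell+1}$ to $L_\ell$ is a fixed additive shift restricted to the single voter subset $N_{p_\ell}(X)$, enabling the $O(n)$-per-list update. The other two components reduce to straightforward linear passes plus a black-box appeal to the GPC runtime lemma.
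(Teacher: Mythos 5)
Your proposal is correct and matches the paper's proof essentially step for step: your part (i) is exactly the split-shift-merge argument of \Cref{lem:lists} (merging $L_{\ell+1}[N_{p_\ell}(X)]+\cost(p_\ell)/|N_{p_\ell}(X)|$ with $L_{\ell+1}[O_{p_\ell}(X)]$ in $O(n)$ per list), your part (ii) corresponds to the paper's accounting of the $O(mn)$ cost per call for partially copying $L_1,\dots,L_{w+1}$ into the sublists restricted to $O_p(X)$, and your part (iii) is the same black-box appeal to \Cref{lem:addopt_u_runtime}. The final summation $O(mn)+O(m^2n)+O(m^2+mn)=O(m^2n)$ is exactly the paper's bound, so there is nothing to add.
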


\begin{restatable}[$\spadesuit$]{theorem}{thmgenutil}
    \Cref{alg:genutil} returns the minimum budget $b^*>b$ such that $EES(E^*)\neq (W,X)$  where $E^* = (N,P,\{A_i\}_{i\in N}, b^*, \cost)$
\end{restatable}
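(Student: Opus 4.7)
The plan is to mirror the structure of the cardinal-utilities proof of Theorem \ref{gnbtheorem}, adapting it to the generalized notion of ``willing to contribute'' from \Cref{lem:willing}. Let $d$ denote the value returned by \Cref{alg:genutil} and set $d^\star = (b^\star - b)/n$. I will argue $d = d^\star$ by showing both inequalities.

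The upper bound $d \geq d^\star$ is the easy direction. By the correctness lemma for \Cref{alg:changegen}, the value $d$ is realized by some project $p^\dagger$ together with a set $V$ such that $(p^\dagger, V)$ certifies the instability of $(W,X)$ for $E(b+nd)$. By \Cref{prop:uniformstable}, EES always returns stable outcomes, hence $\EES(E(b+nd)) \neq (W,X)$, and therefore $b+nd \geq b^\star$.

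For the lower bound $d \leq d^\star$, I compare the executions of EES on $E(b)$ and $E(b+nd^\star)$ and let $\ell$ be the first iteration at which they differ. Let $p$ be the project selected by $\EES(E(b+nd^\star))$ in iteration $\ell$, write $V_p = N_p(X^\star)$ where $(W^\star,X^\star) = \EES(E(b+nd^\star))$, and set $\pi = \cost(p)/|V_p|$. The goal is to show that $(p, V_p)$ certifies the instability of $(W,X)$ for budget $b+nd^\star$. This suffices, because then the call to \textsc{GreedyProjectChange} on $p$ returns some value at most $d^\star$, so the minimum over projects taken by \Cref{alg:genutil} gives $d \leq d^\star$. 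For voters in $V_p \cap N_p(X)$, I argue — exactly as in the cardinal case — that the $V_p$-funders of $p$ in the $b$-execution form a strict subset of $V_p$, so each such voter pays strictly more than $\pi$ in $X$. For voters $v \in V_p \setminus N_p(X)$, I must verify the uniform-utility willingness condition: that $v$'s leftover budget $r_v$ in $(W,X)$ plus the money she spends in $X$ on projects $p'$ with $(\bpb(p'), p') <_t (u(p)|V_p|/\cost(p), p)$ is at least $\pi - d^\star$.

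The main obstacle is the last point: under uniform utilities, a deviating voter may need to withdraw support from several projects rather than a single leximax project. To handle this, I would track $v$'s budget trajectory across the two executions. Since the executions agree through iteration $\ell-1$, $v$ has remaining budget at least $\pi - d^\star$ in the $b$-execution at that point. All subsequent projects $p'$ that $v$ funds in $\EES(E(b))$ must satisfy $(\bpb(p'), p') <_t (u(p)|V_p|/\cost(p), p)$, since otherwise $\EES(E(b+nd^\star))$ — whose voters have weakly more budget at every corresponding step — would have selected $p'$ over $p$ in iteration $\ell$, contradicting the choice of $p$. Therefore every portion of $v$'s post-$(\ell-1)$ spending in the $b$-execution consists either of leftover budget or of payments to projects satisfying the ``less preferred'' condition of \Cref{lem:willing}, so the total qualifies and the willingness condition holds. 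Combining this with the previously established runtime bound of \Cref{thmgenruntime} completes the proof.
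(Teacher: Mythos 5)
Your proposal is correct and follows essentially the same route as the paper's proof: both directions hinge on the stability characterization (upper bound via \Cref{prop:uniformstable} and the correctness of \Cref{alg:changegen}; lower bound via comparing the two EES executions at their first divergent iteration and showing $(p,V_p)$ certifies instability using the generalized willingness condition of \Cref{lem:willing}/\Cref{lem:obs}). If anything, your tracking of the voter's post-divergence spending trajectory is spelled out more carefully than in the paper's appendix argument, which asserts the corresponding $r_i(v)\geq \nicefrac{\cost(p)}{|V_p|}$ bound without elaboration.
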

\subsection{Lower bound on the number of distinct outcomes of EES}

Since the algorithms in the previous section aim to find the next budget at which EES produces a different outcome, and given that for a sufficiently large budget all projects will be selected\footnote{without loss of generality, we assume that every project is approved by at least one voter}, 
a natural question arises: How many distinct outcomes are there? In other words, how large can the set 
\[
\{EES(E(b), u) \mid b > 0\}
\] 
be as a function of the instance size for uniform utilities \(u\)?

We are particularly interested in determining whether this size can be bounded by a polynomial in the size of the instance. For cardinal utilities, we leave this question as an open problem. However, for cost utilities, we answer this question in the negative by presenting an instance with exponentially many different outcomes relative to the size of the instance.

\begin{restatable}[$\spadesuit$]{theorem}{expinstance}\label{thm:expinstance}
	There exists an instance $E$ of size $O(m^3)$ and budgets $b_1<b_2<\ldots<b_{2^m}$ such that for $(W_i,X_i)= \EES(E(b_i),\cost)$ it holds that $W_i\neq W_j$ for any $i\neq j$, $1\leq i,j \leq 2^m$.
\end{restatable}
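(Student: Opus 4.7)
The plan is to construct a single instance on $m$ projects and $m+1$ voters whose EES outcomes trace out an $m$-bit binary counter as the budget grows. Concretely, take voters $N = \{v_1, \dots, v_{m+1}\}$ and, for each $i \in [m]$, a project $p_i$ with $\cost(p_i) = (i+1) \cdot 2^{i-1}$ and supporter set $N_{p_i} = \{v_1, \dots, v_{i+1}\}$. Set $b_k = (m+1)(k-1)$ for $k \in [2^m]$, so the per-voter budget is $\beta_k = k-1$. Under cost utilities the bang-per-buck of a pair $(p_i, V)$ equals $|V|$; hence $\bpb(p_i)$ is maximised at $i+1$ by taking full support, and $p_i$ is affordable in this way iff $\beta \geq \cost(p_i)/(i+1) = 2^{i-1}$.

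I will prove by induction on $m$ the stronger claim that at per-voter budget $\beta_k = k-1$, EES selects exactly $W_k = \{p_i : \text{the } (i-1)\text{-th bit of } k-1 \text{ equals } 1\}$ (with bit $0$ the least significant); since distinct $k$'s give distinct binary expansions, the $W_k$'s are pairwise distinct. The base case $m = 1$ is immediate. For the inductive step, the key observation is that for every $i < m$ one has $\bpb(p_i) \leq m < m+1 = \bpb(p_m)$, so whenever $p_m$ is affordable with full support, EES selects $(p_m, N_{p_m})$ before any other project.

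If $\beta < 2^{m-1}$, no subset of $N_{p_m}$ can afford $p_m$, so $p_m$ is never chosen; moreover $v_{m+1}$ approves no other project, so EES on $\{p_1, \dots, p_{m-1}\}$ behaves identically to its execution on the $(m-1)$-sub-instance at the same per-voter budget $\beta$, and induction delivers the binary expansion of $k-1$ (with leading bit $0$). If instead $\beta \geq 2^{m-1}$, EES selects $(p_m, N_{p_m})$ first, subtracting exactly $2^{m-1}$ from every one of $v_1, \dots, v_m$'s budgets; the remainder of the execution is then an EES run on the $(m-1)$-sub-instance with per-voter budget $\beta - 2^{m-1} \in [0, 2^{m-1})$, and induction yields the low $m-1$ bits of $k-1$. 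For instance size: each cost fits in $O(m)$ bits, the ballots contain $\sum_i (i+1) = O(m^2)$ approvals, so the description length is $O(m^3)$.

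The main obstacle is guaranteeing the decoupling required by the induction. Two design choices make it go through: the auxiliary voter $v_{m+1}$ ensures $\bpb(p_m) = m+1$ strictly dominates every competitor's achievable bang-per-buck, so $p_m$ is always the first project chosen whenever affordable and never interleaves with lower-indexed selections; and the choice $\cost(p_m) = (m+1) \cdot 2^{m-1}$ forces a \emph{uniform} charge of $2^{m-1}$ on every voter relevant to the sub-instance, producing the clean shift by $2^{m-1}$ that the recursion requires. The technical core is checking that no alternative pair $(p_m, V')$ with $V' \subsetneq N_{p_m}$ is ever preferred, and that no $p_i$ with $i < m$ is ever selected ahead of $p_m$ when the latter is affordable.
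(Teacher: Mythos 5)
Your proof is correct, but your construction is genuinely different from --- and leaner than --- the paper's. The paper also encodes budgets in binary, but it works with $2m$ projects and $n=2m^2+m+m^3$ voters: each ``bit'' project $p_j$ (cost roughly $2^j$, scaled by $1/n$) is paired with an auxiliary drain project $a_j$ whose sole purpose is to soak up the leftover budget of the few voters in $V$ who do not approve $p_j$, and the approval sets are carefully arranged so that every voter in $V$ misses at most one $p_j$; the induction must then track the selection of the $a_j$'s to argue that all voters in $V$ end up with zero leftover budget, so no skipped bit project can ever be bought later. You achieve the same decoupling structurally rather than via drains: with nested ballots $N_{p_i}=\{v_1,\dots,v_{i+1}\}$ and $\cost(p_i)=(i+1)2^{i-1}$, every selected project charges \emph{all} supporters of every smaller project equally, so budget uniformity across the relevant voters is preserved automatically, and under cost utilities the bang per buck of a pair $(p_i,V)$ is just $|V|$, so the full-support pair of the largest affordable project is the unique maximizer (no tie-breaking analysis needed). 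The one check your sketch leaves implicit is also a one-liner in your setting: the per-head price of any $t$-voter coalition for $p_i$ is $(i+1)2^{i-1}/t\ge 2^{i-1}$ with equality iff $t=i+1$, so (i) a partial coalition is feasible only when full support is, and EES then prefers full support, and (ii) a project skipped at uniform budget $\beta<2^{i-1}$ stays infeasible forever, even for the never-charged richer voters $v_{i+2},\dots$, since budgets only decrease. What each approach buys: yours gives a strictly smaller witness ($m+1$ voters, $m$ projects, description length $O(m^2\log m)\subseteq O(m^3)$, comfortably within the theorem's bound) and a cleaner self-similar induction on the per-voter budget $\beta-2^{m-1}$; the paper's bulkier construction keeps all bit projects approved by nearly the whole electorate, which makes the bang-per-buck comparisons between $p_j$'s and $a_j$'s explicit ($2m^2+j$ versus $m^2+m-j$) but at the cost of $\Theta(m^3)$ voters and a more delicate bookkeeping argument.
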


\begin{proof}
We consider the budgets $b_i=\sum_{j=1}^m 2^{j-1} d_{i,j}$ where $d_{i,m}\ldots d_{i,2} d_{i,1}$ is the binary expansion of $i$ for $i=1,\ldots, 2^m$.
We construct $E(b)=(N,P,\{A_i\}_{i\in N}, \cost, b)$ where
$N=V\cup \cup_{i=1}^m D_i$ is a set of $n=2m^2+m+m^3$ voters and the set of projects is $P=\{p_1,\ldots,p_m,a_1,\ldots,a_m\}$. We set $\cost(p_i)=2^i \left(\frac{2m^2+i}{n}\right) <2^{i+1}$ for $i\in[m]$. We set the price of $a_i$ to $\cost(a_i)=2^i \frac{m-i+m^2}{n}$ to $i\in[m]$.
The set of voters $D_i$ for each $i\in [m]$ has size $m^2$ and each voter in $D_i$ approves only $a_i$. The set of voters $V$ has size $2m^2+m$. For each $i\in [m]$, the project $p_i$ is approved by exactly $2m^2+i$ voters among $V,$ and the approvals are distributed in such a way that every voter in $V$ does not approve at most one project $p_i$. This is possible because each project $p_i$ is not approved by $2m^2+m-(2m^2+i)=m-i$ voters from $V$ which amounts to a total of $\sum_{i=0}^{m-1}i=\frac{(m-1)m}{2}$ pairs $(v,p_i)$ such that $v\in V$ does not approve $p_i, i\in [m]$. So we can make sure that less than $m^2$ agents among $V$ do not approve one project $p_i$, $i\in [m]$. To complete the approval sets, a voter in $V$ who does not approve $p_i$, does approve $a_i$.

We claim that $(W_i,X_i)=EES(E(b_i))$ satisfies $W_i\cap P=\{p_j\mid d_{i,j}=1\}$.
Note that if all its $2m^2+j$ supporters contributed equally, the \pvp{} of $p_j$ is $\frac{\cost(p_j)}{2m^2+j}=\frac{2^j}{n}$ and its bang per buck is $2m^2+i$. Suppose $d_{i,j_1},\ldots, d_{i,j_k}$, where $j_{\ell}> j_{\ell+1}$, are all equal to $1$ and $d_{i,j}$ for $j\notin \{j_1,\ldots, j_k\}$ is equal to $0$.
We claim EES selects $p_{j_1},\ldots, p_{j_k}$ in this order, shared exactly by all the respective projects supporters and
then proceeds to select the projects $a_{j_1},\dots,a_{j_k}$ in some order among the set of projects $a_1,\ldots, a_m$, resulting in all voters in $V$ having run out of money.  it potentially selects further projects among $a_1,\ldots, a_m$ and terminates.

We prove the claim by induction. Consider project $p_m$ and suppose first that $d_{i,m}=1$. We claim that $p_m$ is the first project to be selected and is fully paid by all of its supporters. First of all $p_m$ can be afforded by its supporters since $b_i\geq 2^m$ and $\frac{\cost(p_m)}{2m^2+m}=\frac{2^m}{n} $ and each voter has budget $\frac{b_i}{n}\geq \frac{2^m}{n}$. Indeed, each project $a_i$ is supported by $m^2$ voters and so can achieve a \bpb{} of at most $m^2<2m^2+m$, where $2m^2+m$ is the \bpb{} if $p_m$ is paid for by all of its supporters. Similarly every project $p_j$, $j<m$ has smaller $\bpb$ (namely $2m^2+j$) even if every agents contributes.
Now suppose $d_{i,m}=0$. In this case $b_i\leq 2^m-1$ and so $\frac{cost(p_m)}{2m^2+m}=\frac{2^m}{n}>\frac{b_i}{n}$ and so $p_m$ cannot be afforded (even if all of its supporters contributed).

For the inductive step,  consider project $p_j\notin W$, $j<m$, and assume that for all $\ell$ with $m\geq \ell >j$ it holds that 
\begin{enumerate}
\item 
if $d_{i,\ell}=1$, $p_{\ell}$ has been selected by EES and is paid for by all its supporters, 
\item 
if $d_{i,\ell}=0$, $p_\ell$  has not been selected by EES.
\end{enumerate}
Furthermore, we assume no project $p_{\ell}$ with $\ell\leq j$ has been selected and all projects $a_1,\ldots, a_m$ are either not affordable or affordable at a bang per buck at most $m^2+m\leq 2m^2$.
 First suppose that $d_{i,j}=1$, we will show that in this case $p_j$ can be paid for equally by all its $2m^2+j$ supporters, and since it has the largest bang per buck among all the affordable projects, is the next in line to be selected. The supporters of $p_j$ have each spent at most $\frac{\frac{2^{m}d_m+\ldots+2^{j+1}}{2m^2}d_{j+1}}{n}$ on projects $p_{j+1},\ldots, p_m$ and in particular have at least $\frac{2^{j}}{n}$ leftover budget per voter. This is precisely the price per voter if all the $2m^2+i$ supporters of $p_i$ pay for $p_i$ together as $\frac{\cost(p_j)}{2m^2+j}=\frac{2^j}{n}$.\\
Now suppose $d_{i,j}=0$. All except less than $m^2$ voters from $V$ have spent exactly $\frac{\frac{2^{m}d_{i,m}+\ldots+2^{j+1}}{2m^2}d_{i,j+1}}{n}$. These voters therefore have a leftover budget of less than $\frac{2^j}{n}$, implying that even if every voter contributed towards $p_j$, they would not have enough leftover budget.
So the largest bang per buck for $p_j$ we can obtain is less than $m^2$. 
Since $d_{i,m},\ldots d_{i,2} d_{i,1}$ is the binary expansion of $i$ it holds that $d_j=1$ for some $j\in [m]$.
The corresponding project $a_j$ is affordable at a bang per buck $m-j+m^2>m^2$ and so would be selected before $p_j$. Furthermore, any for any $j$ with $1\leq \ell<j$ if $d_{\ell}=1$, then the corresponding bang per buck is $2m^2+\ell>2m^2>m^2+m\geq m^2+m-j$, such a project is selected before $p_i$ and before any $a_{\ell}$, $\ell \in [m]$.
This shows that indeed the first projects to be selected are exactly $p_{j_1},\ldots,p_{j_k}$.
It remains to show that no $p_j$, $j\notin \{j_1,\ldots,j_k\}$ is selected subsequently. 
Suppose $d_{i,j}=1$ and so $p_j$ was selected. There are $m-j$ voters in $V$ who did not pay for $p_j$ and have \textit{exactly} $\frac{2^j}{n}$ leftover budget (since by construction every voter in $V$ does not approve at most one project in $\{p_1,\ldots, p_m\}$.
These voters all approve $a_j$ and $a_j$ can be bought at a bang per buck of $m^2+m-j$ at a per voter cost of exactly $\frac{\cost(p_j)}{m^2+m-j}=\frac{2^j}{n}$
since every supporter of $a_j$ has leftover budget at least $\frac{2^j}{n}$. Any project $p_{\ell}$ with $d_{\ell}= 0$ we previously argued has a bang per buck of less than $m^2$, so all affordable projects $a_j$ will be prioritized over affordable projects $p_{\ell}$. It follows that EES selects each $a_{j}$ with $j\in \{j_1,\ldots, j_k\}$. After this, no voter $V$ has a leftover budget as either they approve all projects $p_{j_1},\ldots p_{j_k}$ and spent exactly $\frac{b_i}{n}=\frac{d_{i,j_1}2^{j_1}+\ldots+d_{i,j_k}2^{j_k}}{n}$ on them or they approve all but one project $p_{j\ell}$ and spent exactly  $\frac{b_i}{n}-\frac{2^{j_{\ell}}}{n}=\frac{d_{i,j_1}2^{j_1}+\ldots+d_{i,j_k}2^{j_k}}{n}-\frac{2^{j_{\ell}}}{n}$ on projects $p_{j_1},\ldots p_{j_k}$
and the remaining $\frac{2^{j_{\ell}}}{n}$ budget on $a_{j_{\ell}}$. So none of the supporters of $p_{\ell}$ for $d_{\ell}=0$ has any leftover money. This completes the proof.
\end{proof}

\section{Empirical Evaluation}\label{sec:emp}
%

The goal of this section is to compare MES and EES (with and without suitable completion heuristics) on real-life data.
%
To this end, we execute both of these methods on over 250 real-world participatory budgeting instances, and analyze both the number of iterations and the ability of each method to find a good virtual budget. 

\paragraph{Datasets}{ All our experiments were conducted on real-world data from {\tt
Pabulib}, the Participatory Budgeting Library \cite{faliszewski2023participatorybudgetingdatatools}. {\tt
Pabulib} contains detailed information on over 300 participatory budgeting elections that took place between 2017 and 2023, of which we analyze 250; this selection was made due to time limit of 24 hours to complete our most computationally expensive experiments.  
For an overview of the dataset's distribution over votes, budget size and number of projects, we refer the reader to \Cref{fig:all_figures} in \Cref{app:experiments}.}

\paragraph{Implementation}{We use the {\tt
pabutools} Python library  \citep{faliszewski2023participatorybudgetingdatatools} to calculate MES outcomes.
To monitor the number of calls to MES, we implement custom versions of the completion methods for MES.
Similarly, we implement custom Python code for EES and all completion heuristics defined in this section. 
once the paper is accepted.
The source code for our implementation is available at \url{https://github.com/psherman2023/Scalable_Proportional_PB/tree/master}.

\subsection{Empirical Spending Efficiency: MES vs EES}\label{sec:se}

The key measure that we use to evaluate the performance of aggregation rules for participatory budgeting elections is their {\em spending efficiency}, 
i.e., the proportion of the budget they utilize. 
\begin{definition}
Given an election $E(b)$ and an outcome $W$, the 
{\em spending efficiency} of $W$ is defined as 
$\frac{1}{b}\cdot \sum_{p \in W} \cost(p)$.
The {\em spending efficiency} of an aggregation rule $\mathcal R$ on an election $E(b)$ is the spending efficiency of ${\mathcal R}(E(b))$. 
\end{definition}

 Although it is possible to construct examples where EES uses a larger proportion of the actual budget, it is natural to expect that, in the absence of completion heuristics, on most instances MES has a higher spending efficiency than EES:
 enforcing exact equal sharing (and not using
 agents’ leftover budgets) is likely to result in a smaller set of projects. Our experiments (see \Cref{fig:no_completion} and \Cref{fig:no_completion_cost} in the appendix) confirm that this is indeed the case. 
 
 However, 
 it is less clear what happens if one extends both of these methods with a completion heuristic. As a baseline, we execute both MES and EES with the standard {\sc add-one} completion heuristic. This heuristic executes the underlying rule with budgets $b, b+n, b+2n, \dots$ until either all projects are selected or the next increase would result in overspending. 
 
Our experiments 
on 250 Pabulib instances (\Cref{fig:comparison_ees_mes}) paint a positive picture for EES: with the {\sc add-one} completion heuristic 
in over 77\% of cases for cost utilities and in over 85\% of cases for cardinal utilities the spending efficiency of EES is at least as high as that of MES. Moreover, both for cardinal and for cost utilities, EES has a higher spending efficiency than MES on more than 10\% of the instances. 

     
\begin{figure*}[ht]
    \centering
    \begin{subfigure}{0.48\textwidth}
        \centering
        \includegraphics[width=\textwidth, height=6.25cm, trim=0 0 0 0, clip]{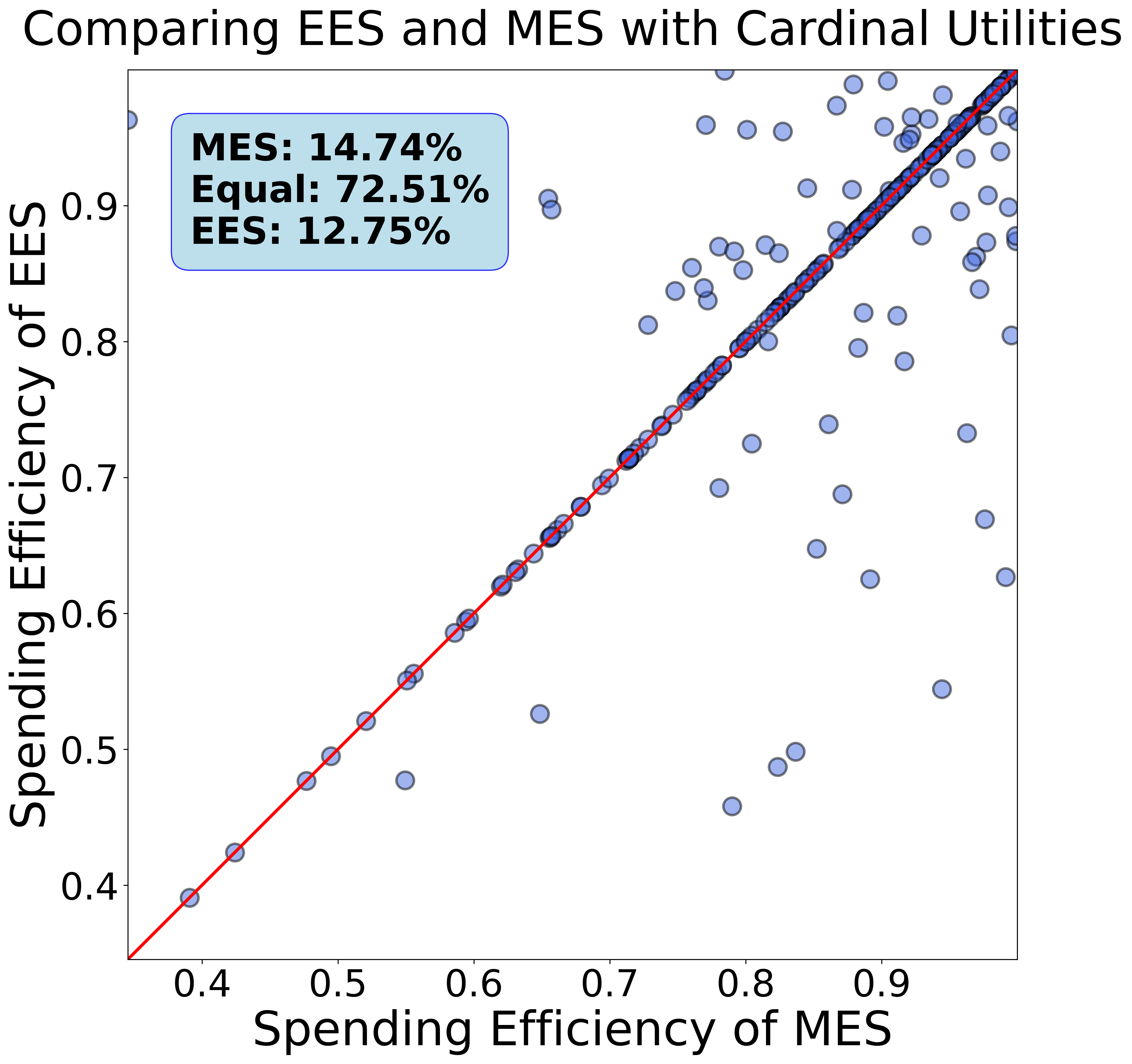}
        \caption{Cardinal utilities.}
    \end{subfigure}
    \hfill
    \begin{subfigure}{0.48\textwidth}
        \centering
        \includegraphics[width=\textwidth, height=6.25cm, trim=0 0 0 0, clip]{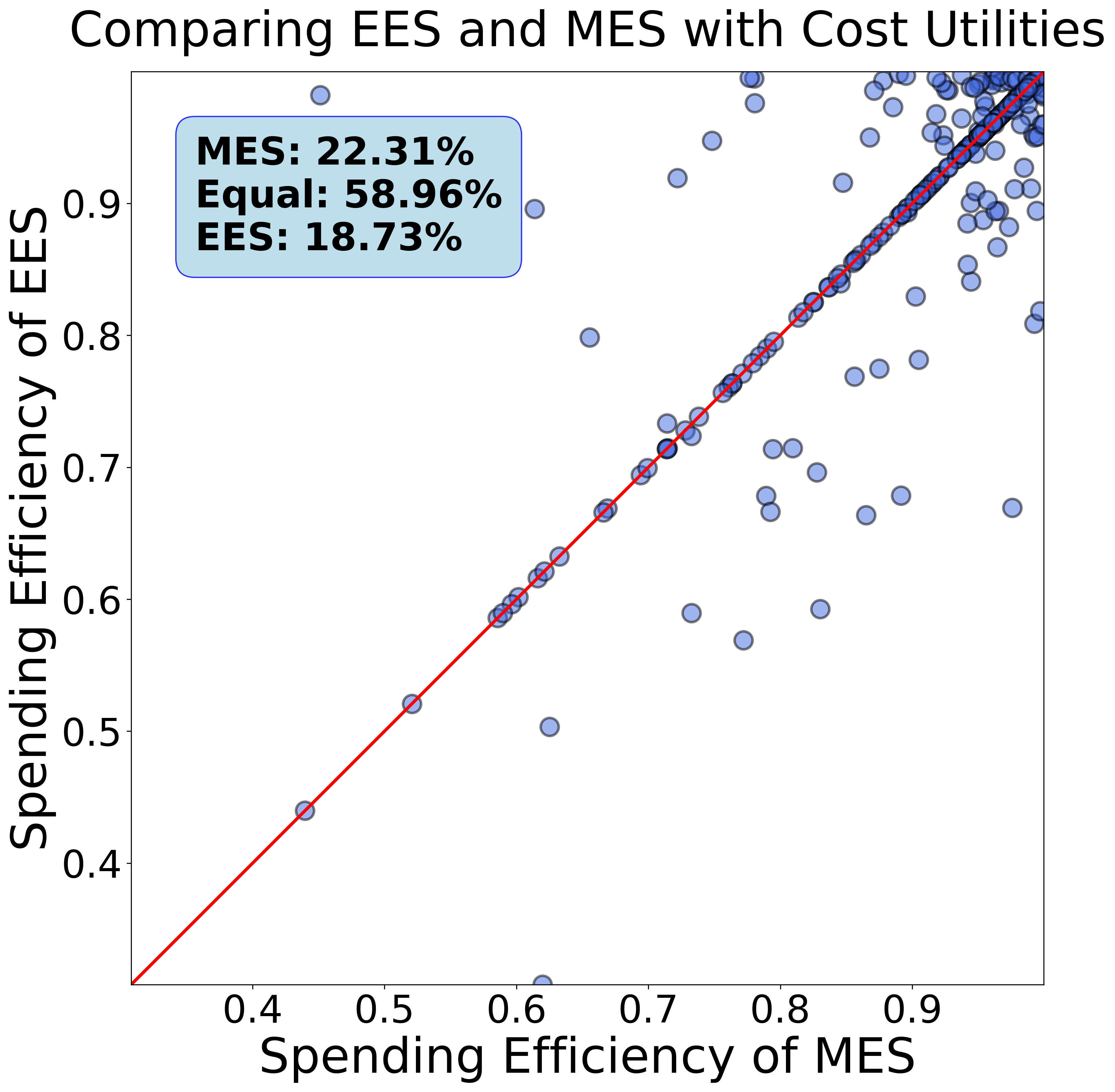}
        \caption{Cost utilities.}
        \label{fig:comparison_cost}
    \end{subfigure}
    \caption{Spending efficiency of MES and EES with {\sc add-one} heuristic. Each point in the scatter point represents a {\tt Pabulib} data set.
    \label{fig:comparison_ees_mes}}
\end{figure*}


\subsection{Heuristics for EES}
Our primary motivation for introducing EES is that it admits a more sophisticated completion heuristic, namely, {\sc add-opt}. Recall that, given a solution $(W, X)$ for $E(b)$, {\sc add-opt} identifies the smallest value of $d$ such that $\EES(E(b+nd))\neq (W, X)$. Crucially, 
this heuristic is based on reinterpreting the EES outcomes as outcomes that are stable in the sense of \Cref{def:stable}; it is not clear if MES outcomes can be interpreted in this way, and, as a consequence, we cannot use {\sc add-opt} with MES. Indeed, for MES it is not known if the problem of finding the smallest budget increase that changes the outcome admits a polynomial-time (let alone a linear-time) algorithm. 

When using EES with {\sc add-opt}, we start by setting $b^{(1)}=b$, and compute $(W^{(1)}, X^{(1)})=\EES(E(b^{(1)}))$. Then in each iteration $i$ we compute $d^{(i)}$ by running {\sc add-opt} on $E(b^{(i)})$ 
and $(W^{(i)}, X^{(i)})$, and set $b^{(i+1)}=b^{(i)}+n\cdot d^{(i)}$, $(W^{(i+1)}, X^{(i+1)})=\EES(E(b^{(i+1)}))$. Just like with {\sc add-one}, we repeat this procedure until the actual budget is exhausted or the next budget increment results in overspending. We also consider a complete version of this method EES + {\sc add-opt (C)}, where we increase the budget using {\sc add-opt} until  
all projects are selected, i.e., $W^{(i)}=P$; then, among the outcomes $W^{(1)}, \dots, W^{(i)}$ we select one that has the highest spending efficiency among all outcomes that are feasible for the original election $E(b)$.
We define a complete version of MES with {\sc add-one}
(denoted by MES+{\sc add-one (C)}) in a similar way.

Further, leveraging \textsc{add-opt}, we define a new completion method for EES, which we call \textsc{add-opt-skip}. This method modifies the \textsc{add-opt} heuristic in two key ways. First, given an outcome of EES, we invoke \textsc{GreedyProjectChange} (\Cref{alg:genutil}) \textit{only for projects not currently included} in the outcome. Second, this process is repeated until \textit{all projects are considered for inclusion} at least once. It then returns the feasible outcome with the highest spending efficiency found.

We evaluate all completion methods based on two criteria. The first is spending efficiency (as defined in \Cref{sec:se}). The second is the number of calls to the computationally expensive base method (EES or MES) required by each completion method. All experiments are run under two assumptions: (1) cardinal utilities and (2) cost utilities.

\paragraph{Findings}
Our results are summarized in Tables~\ref{tab:card} and~\ref{tab:costs}, and in \Cref{fig:efficiency}.
In both tables, the first three columns refer to the number of iterations, and the last three columns refer to the spending efficiency.

For {\sc add-opt}, the mean per-voter budget increment size across our dataset is $37.3$ units for cost utilities and $35$ for cardinal utilities. The median of these budget increments is $6.4$ for cost utilities and $4.6$ for cardinal utilities. These values are greater than $1$, 
which means that typically {\sc add-opt} considers substantially fewer budgets than {\sc add-one}, while also guaranteeing the identification of a budget that maximizes the spending efficiency within the tested range.

\begin{table}[ht!]
\centering \small
\caption{Comparison results: cardinal utilities. \label{tab:card}}
\begin{tabular}{lcccccc}
\hline
Method & Avg & Med & Std & Avg & Med & Std \\
 & Ex. & Ex. & Ex. & Eff. & Eff. & Eff. \\
\hline
MES + {\sc add-one} & 535.4 & 393.0 & 433.0 & 0.855 & 0.890 & 0.124 \\
MES + {\sc add-one} (C) & 2888.7 & 1996.0 & 3132.4 & 0.862 & 0.896 & 0.123 \\
EES + {\sc add-opt} & 279.6 & 100.0 & 356.3 & 0.848 & 0.888 & 0.130 \\
EES + {\sc add-opt} (C) & 625.8 & 237.0 & 794.9 & 0.854 & 0.892 & 0.131 \\
EES + {\sc add-opt-skip} & 27.9 & 17.0 & 27.3 & 0.853 & 0.890 & 0.130 \\
{\sc max} & 563.3 & 423.0 & 425.5 & 0.871 & 0.906 & 0.119 \\
\hline
\end{tabular}
\end{table}

\begin{table}[ht!]
\centering \small
\caption{Comparison results: cost utilities. \label{tab:costs}}
\begin{tabular}{lcccccc}
\hline
Method & Avg & Med & Std & Avg & Med & Std \\
 & Ex. & Ex. & Ex. & Eff. & Eff. & Eff. \\
\hline
MES + {\sc add-one} & 465.6 & 346.0 & 431.4 & 0.900 & 0.944 & 0.110 \\
MES + {\sc add-one} (C) & 2894.9 & 2033.0 & 3125.8 & 0.902 & 0.945 & 0.109 \\
EES + {\sc add-opt} & 432.7 & 106.0 & 751.2 & 0.881 & 0.944 & 0.140 \\
EES + {\sc add-opt} (C) & 1263.6 & 360.0 & 1812.9 & 0.882 & 0.945 & 0.140 \\
EES + {\sc add-opt-skip}  & 12.4 & 10.0 & 7.3 & 0.855 & 0.903 & 0.138 \\
{\sc max} & 478.0 & 357.0 & 428.9 & 0.909 & 0.950 & 0.103 \\
\hline
\end{tabular}
\end{table}
Interestingly, we observe that in some iterations \textsc{add-opt} returns a per-voter increase of less than $1$.
This means that \textsc{add-one} may skip possible allocations, and thus is not guaranteed to find the budget that results in the most spending-efficient outcome, even if that budget lies within the tested range. Indeed, in our dataset we find over $10$ such instances, demonstrating that this is not only theoretically possible, but something that occurs in realistic PB elections. In contrast, using \textsc{add-opt} enables us to consider \textit{every} distinct allocation within our tested range. 

\begin{figure*}[t]
\centering
\begin{subfigure}[t]{0.48\textwidth}
 \centering
\includegraphics[width=0.95\textwidth, trim=0 0 0 0]{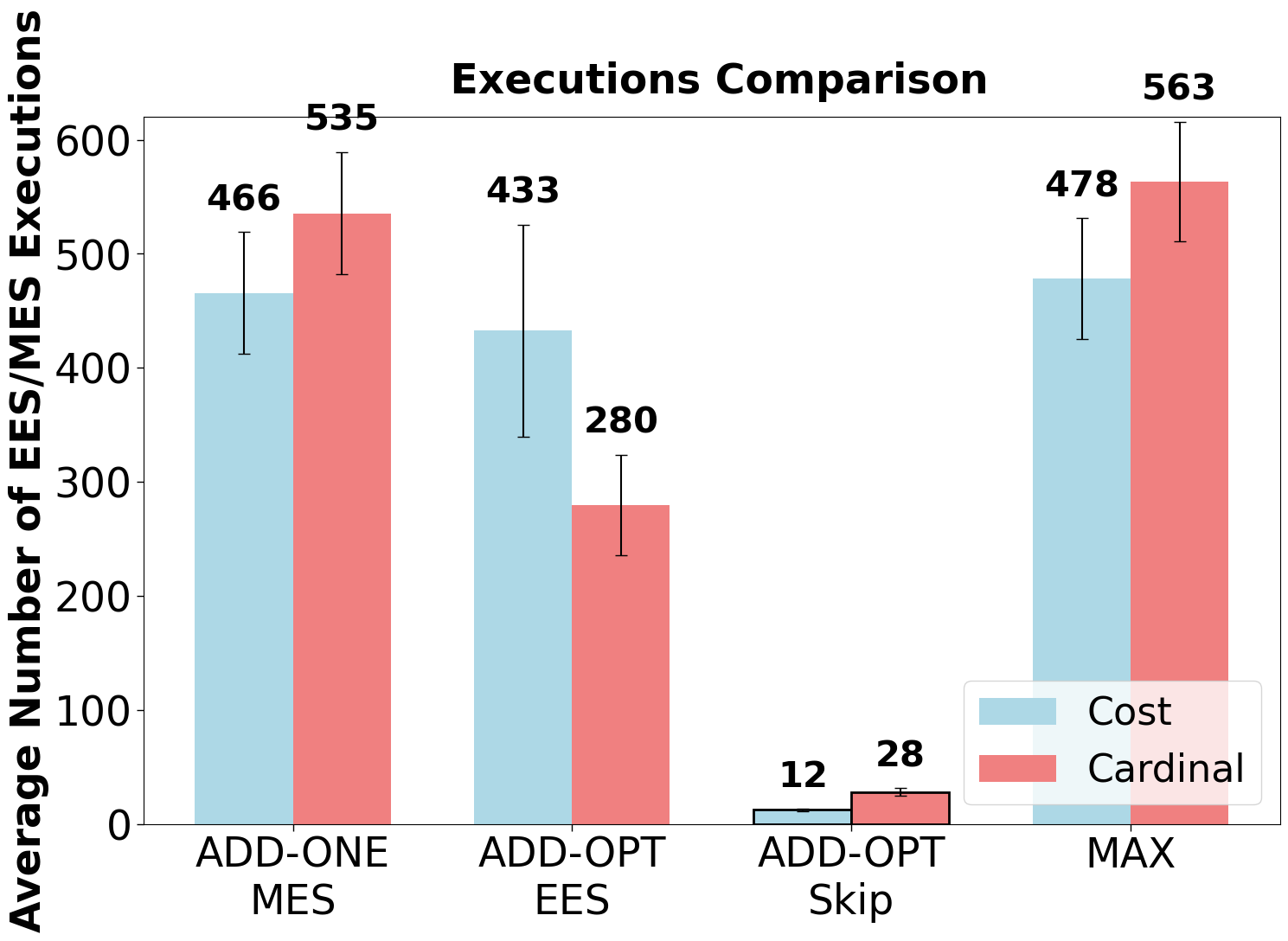}
 \captionsetup{width=.8\linewidth}
\caption{
EES vs. MES number of iterations 
}\label{fig:budget_jump}
    \end{subfigure}
    \hfill 
    \begin{subfigure}[t]{0.48\textwidth}
    \centering
        \includegraphics[width=0.98\textwidth, trim=0 0 0 0]{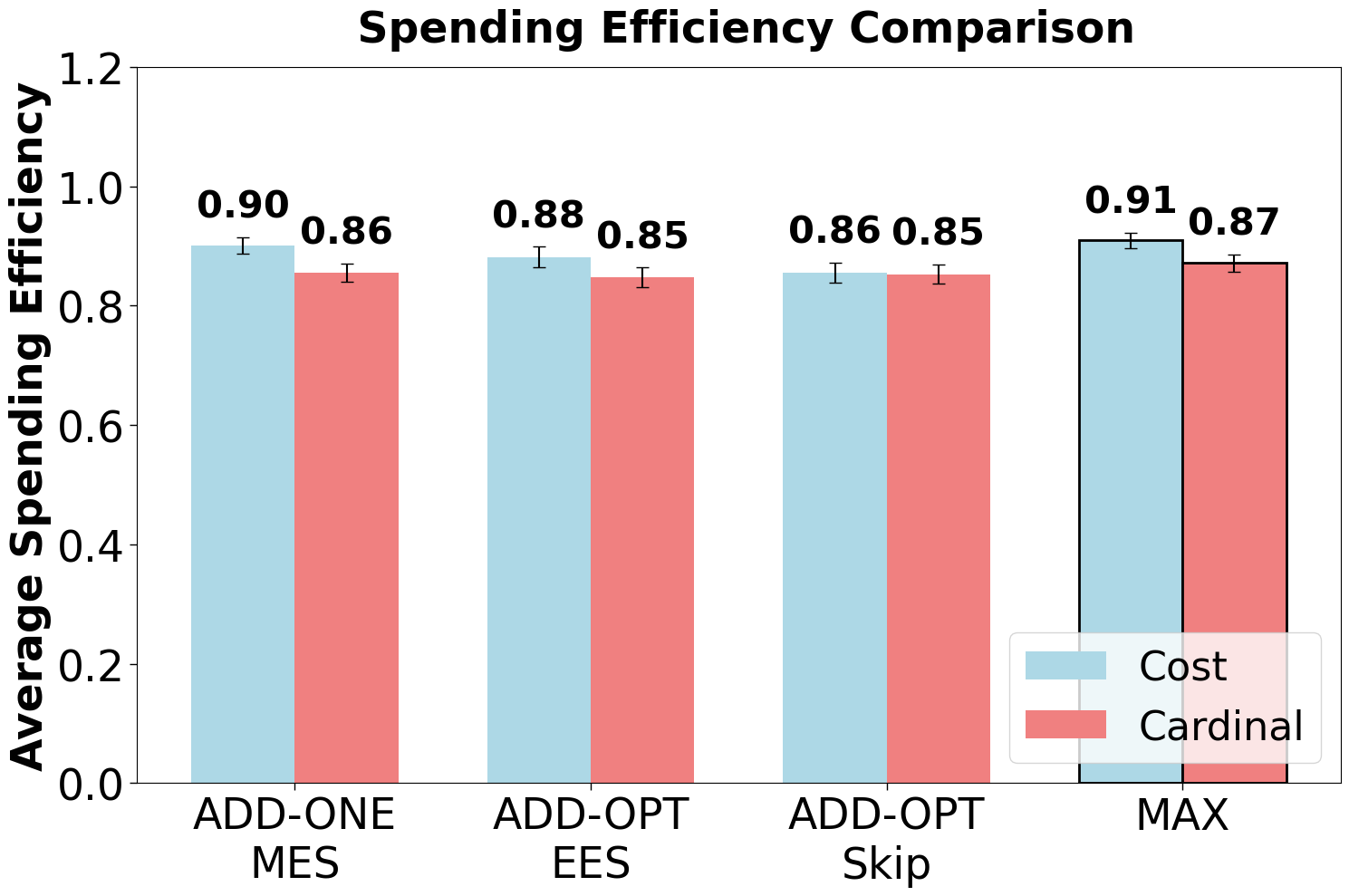}
        \captionsetup{width=.8\linewidth}
        \caption{EES vs. MES spending efficiency 
        \label{fig:efficiency}}
    \end{subfigure}
 \caption{{\sc max} is the result of running MES + {\sc add-one} and EES + {\sc add-opt-skip} and taking the result with the higher spending efficiency.}
\end{figure*}

 These experiments highlight the advantages of \textsc{add-opt-skip}.
Below are our key findings:

\begin{enumerate}
    \item \textbf{EES with \textsc{add-opt-skip} requires \textbf{an order of magnitude fewer calls} to the base method than MES with \textsc{add-one}:}
    \begin{itemize}
        \item For cardinal utilities, the average number of calls drops from 535 to just 28.
        \item For cost utilities, the average number of calls decreases from 466 to only 12.
        \item Despite this, EES with \textsc{add-opt-skip} provides comparable spending efficiency: $0.85$ for cardinal utilities (vs.\ $0.86$ for MES with {\sc add-one}) and $0.86$ for cost utilities (vs.\ $0.9$ for MES with {\sc add-one}).  
    \end{itemize}

    \item \textbf{EES with \textsc{add-opt-skip} often outperforms MES in spending efficiency:}
    \begin{itemize}
        \item In 85\% of datasets, EES with \textsc{add-opt-skip} achieves spending efficiency that is at least as high as that of MES with {\sc add-one}, with strictly higher efficiency in 16\% of cases for cardinal utilities. For cost utilities, its spending efficiency is at least as high as that of MES with {\sc add-one} on 55\% of the datasets and strictly higher on 8\% of the datasets.
    \end{itemize}

    \item \textbf{High spending efficiency on non-monotone instances:}
    \begin{itemize}
        \item In some real-world instances such as the one in \Cref{fig:polish_example}, the optimal virtual budget (in terms of spending efficiency) is larger than the smallest virtual budget that causes overspending. We identify $14$ such instances for cardinal utilities and $3$ for cost utilities. Heuristics that terminate as soon as overspending occurs perform poorly on such instances. \textsc{add-opt-skip}, on the other hand, is able to explore the space of virtual budgets in a more comprehensive fashion, avoiding these worst-case scenarios, and demonstrates on average 10\% higher spending efficiency in these cases (see \Cref{fig:non_mono}).
    \end{itemize}

\end{enumerate}
For \textsc{add-opt}, the observed benefits are less pronounced. While it reduces the number of calls to EES compared to {\sc add-one}, one needs to execute \Cref{alg:genutil} (which has a runtime comparable to that of EES) for every EES run, leading to minimal computational savings. 

\paragraph{Recommendations}
Our experimental results suggest that EES+{\sc add-opt-skip} achieves comparable spending efficiency to MES+\textsc{add-one} while (1) using orders of magnitude fewer calls to EES and \Cref{alg:genutil}, and
(2) avoiding worst-case scenarios, such as the one illustrated in \Cref{fig:polish_example}.
These advantages make EES+\textsc{add-opt-skip} particularly suitable for real-world use in cities, as well as in computational experiments on synthetic data, where many repetitions are necessary for statistical significance.
Alternatively, one can explore a hybrid approach, which runs both MES+{\sc add-one} and EES+\textsc{add-opt-skip}, as it incurs a negligible computational overhead relative to MES+{\sc add-one} (see  \Cref{fig:efficiency}).

\section{Conclusions}
The Method of Equal Shares is the state-of-the-art proportional algorithm for participatory budgeting, which, however, suffers from underspending
As designing a better algorithm for participatory budgeting is challenging, 
we can mitigate the issue of underspending
(while maintaining all beneficial properties of MES) by identifying a virtual budget for which MES spends the maximum possible fraction of the true budget.
However, this problem appears to be hard, so in practice the arguably arbitrary and inefficient {\sc add-one} heuristic is used.

Our work presents a systematic and computationally-efficient solution to this problem for EES, which is a simplification of MES. We propose the {\sc add-opt} algorithm for uniform utilities, which solves the problem of finding the minimum per-voter budget increase for which either a different winning set is selected, or more voters pay for a project, as opposed to arbitrarily incrementing each voter's budget by \$1. Importantly, the running time of this algorithm is linear in the number of voters $n$, which tends to be large in practice.
The {\sc add-opt} algorithm inspires the {\sc add-opt-skip} heuristic, which only considers not yet selected projects. This heuristic is extremely computationally efficient, and therefore can be run until all projects are selected.  As a result, in practice EES with this heuristic utilizes a comparable proportion of the budget to MES with {\sc add-one}, while avoiding severe underutilization in non-monotonic examples such as that of \Cref{fig:polish_example}. Moreover,
running EES with {\sc add-opt-skip} in parallel to MES with {\sc add-one} and taking the output with the higher spending efficiency offers an increase in utilization across approval and cost utilities, as well as avoids worst-case examples, as seen in \Cref{fig:polish_example}, for the price of, on average, just $12$ and $28$ additional executions of EES (\Cref{fig:budget_jump}).

Throughout this paper, we focus on budget utilization. However, a similar methodology could be applied to select amongst EES outcomes based on any desirable property. Any such process would benefit greatly from only having to consider a reduced number of budget increments, particularly in the realm of experimental work, where large numbers of instances may be required in order to have high statistical confidence in claimed results. The importance of such work becomes clear when one considers the real-world impact that even a single additional project can have on the lives of the voters, especially as participatory budgeting grows in size and scale. 
Further, while we show that the number of distinct outcomes as one varies the budget may be exponential in the instance size, the speed-up due to our efficient heuristic makes the identification of all outcomes feasible in practice.


\paragraph{Open Problems}{
We showed that for cost utilities the number of distinct outcome returned by EES for different budgets can be exponential in the size of the instance.
For cardinal utilities, this question remains open: is the dependency exponential, or can it be bounded by a polynomial in the size of the instance?
More generally, a challenging open problem is to pin down the complexity of directly finding a virtual budget for which EES (or MES) finds a feasible solution that spends the maximum fraction of the true budget. As a starting point, one may consider the problem of deciding whether there exists a virtual budget for which EES (or MES) spend exactly the entire true budget.}
\section*{Acknowledgments}
Sonja Kraiczy was supported by an EPSRC studentship. Isaac Robinson was supported by a Rhodes scholarship.
Edith Elkind was supported by an EPSRC grant EP/X038548/. 

\bibliography{arxivversion}

\appendix
\onecolumn

\section{Proofs Omitted from the Main Text}\label{app:delproofs}
\ejr*
\begin{proof}
Let $W$ be the outcome selected by Exact Equal Shares on instance $(N,P,(A_i)_{i\in N},b, \cost)$.
Let $S$ be a $T$-cohesive group for $T\subseteq P$. Let $Y\subseteq T$ be the set of projects in $Y$ paid for by less than $|S|$ voters (this includes being paid by no voters). If the set $Y$ is empty, we are done since every voter $i\in S$ has utility $u_i(W)$ for the outcome satisfying $u_i(W)\geq u(T)$.  So suppose set $Y$ is non-empty.
Let $y^* \in \argmax_{y\in Y} \frac{u(y)}{\cost(y)}$. There must be some voter $i\in S$ such that the budget $z_i$ not being used to pay for projects at bang per buck at least $\frac{u(y^*)|S|}{\cost(y^*)}$ (this includes leftover budget) satisfies the inequality $z_i<\frac{\cost(y^*)}{|S|}$, as otherwise the voters in $S$ could jointly pay for a project from set $Y$.
Now voter $i$ may spend some of her money on projects $T\setminus Y$, each such project $P$ it pays for at most $\frac{\cost(p)}{|S|}$.
So on projects in $W\setminus T$ with bang per back at least $\frac{u(y^*)|S|}{\cost(y^*)}$,
$i$ spends at least $\frac{b}{n}-\sum_{p\in T\setminus Y} \frac{\cost(p)}{|S|}-z_i$.
So her utility for the set $W\setminus T$
can be lower bounded as follows{\allowdisplaybreaks
\begin{align*}u_i(W\setminus T)&\geq  \frac{u(y^*)|S|}{\cost(y^*)}\left(\frac{b}{N}-\sum_{p\in T\setminus Y} \frac{\cost(p)}{|S|}-z_i\right)
\\ &\geq \frac{u(y^*)}{\cost(y*)}\cdot\left(\cost(T)-\sum_{p\in T\setminus Y} \cost(p)\right) - \frac{|S|z_i\cdot u(y^*)}{\cost(y*)}
\\&> \frac{u(y^*)}{\cost(y*)}\cdot \cost(Y)-|S|\frac{\cost(y^*)}{|S|}\frac{u(y^*)}{\cost(y*)}
\\& \geq\sum_{y\in Y}\cost(y)\frac{u(y^*)}{\cost(y^*)}-u(y^*)\\ &\geq \sum_{y\in Y}\cost(y)\frac{u(y)}{\cost(y)}-u(y^*)\stepcounter{equation}\stepcounter{equation}\tag{\theequation}\label{eq1}
\\& =\sum_{y\in Y}u(y) -u(y^*),\end{align*}}
where the line \ref{eq1} follows since $y^*$ gives the largest value of $\frac{u(y)}{\cost(y)}$ among all projects $y\in Y$. Overall, voter $i$ has utility at least 
\begin{align*}u_i(W\setminus T)+u_i(T\setminus Y)>\\u_i(Y)+u_i(T\setminus Y)-u(y^*)=u(T)-u(y^*)\end{align*}
for projects in $W$ that are paid for by at least $|S|$ people, implying that after including $y^*$ we get \begin{align*}u(W\cup \{y^*\})\geq u(W\setminus T)+u(T\setminus Y)+u(y^*)> u(T),\end{align*} as desired.
\end{proof}

\lemwilling*
\begin{proof}
Let $v\in V\setminus N_p(X)$ has $r_v\geq \frac{\cost(p)}{|V|}$ or $(\frac{\cost(p)}{|V|},p)<_{\textit{lex}}c_v$. In the latter case, the voter spends at least $\frac{\cost(p)}{|V|}$ on a less preferred project $p'$.  So the sum of her leftover budget and the budget she spends on less preferred projects is at least $\frac{\cost(p)}{|V|}$, as desired.

For the other direction of the claim, suppose  now voter $v$ has $t_v\geq \frac{\cost(p)}{|V|}$ where $t_v$ is the combined total of $r_v$ and the money spent on projects $p'$ in
$\{p'\mid (\bpb(p'),p')<_t (\frac{u(p)t}{cost(p)},p)\}$. If the latter set is empty, then $r_v\geq \frac{\cost(p)}{|V|}$.
If it is non-empty, then such a project $p'$ has \begin{align*}&BpB(p')\leq \frac{u(p)|V|}{\cost(p)} \iff \frac{u(p')|N_{p'}(X)|}{\cost(p')}\leq \frac{u(p)|V|}{\cost(p)} \iff \\&\frac{|N_{p'}(X)|}{\cost(p')}\leq \frac{|V|}{\cost(p)}\iff \frac{\cost(p)}{|V|}\leq \frac{\cost(p)}{|N_{p'}(X)|}.
\end{align*}
So it follows that $(\frac{cost(p)}{|V|},p)<_{\textit{lex}}(\frac{\cost(p)}{|N_{p'}(X)|},p')$ implying in particular that $(\frac{cost(p)}{|V|},p)<_{\textit{lex}}c_v$. So $r_v\geq \frac{cost(p)}{|V|}$ or $(\frac{cost(p)}{|V|},p)<_{\textit{lex}}c_v$ hold, implying that $v$ is willing to contribute $\frac{\cost(p)}{|V|}$ to $p$.\end{proof}

\propstablecard*
\begin{proof}
This follows directly from \Cref{lem:willing} and \Cref{prop:uniformstable} (proved below).
\end{proof}

\propstable*
\begin{proof}Let $E$ be an election and let $(W,X)=EES(E)$.
We can trivially modify EES to return the selected projects in $W$ in the order they were selected, i.e. in order of non-increasing bang per buck (with lexicographic tie-breaking). We will denote this sequence as $p_1,p_2,\ldots, p_w$ where $w=|W|$. Suppose for the sake of contradiction that $(W, X)$ is unstable, as certified by a pair $(p,V)$. Then
for every $v\in V$ we have $r_i(v)\geq  \nicefrac{\cost(p)}{|V|}$ where $i$ is the smallest index for which
$(\BpB(p_i),p_i)<_t(\frac{u(p)|V|}{\cost(p)},p)$. Since $(W,X)$ is unstable, we have that $i$ is well-defined. 
Now consider the project selected in the $i$th iteration of EES in which $p_i$ is selected. By the definition of EES the project $p$ is affordable by voters $V$ in this round
since by the choice of $i$, $r_i(v)\geq \nicefrac{\cost(p)}{|V|}$. Furthermore, since $(\BpB(p_i),p_i)<_t(\frac{u(p)|V|}{\cost(p)},p)$  holds, project $p$ has higher priority than $p_i$, and so would be selected by EES instead. This contradicts that EES returns $(W,X)$ and the project ordering projects $p_1,\ldots, p_{w}$. We conclude that $(W,X)$ is stable.
\end{proof}
We now show how to compute $L_1,\ldots,L_w$ used in \Cref{alg:changegen} and computed in \Cref{alg:genutil} given  $L_{w+1}$ and $p_1,\ldots,p_{w}$ using dynamic programming.
\begin{restatable}{lemma}{lemlists}\label{lem:lists}
Suppose we have $W$ given in order $p_1,\ldots, p_{w}$
such that $(\bpb(p_i),p_i)>_{\textit{t}} (\bpb(p_{i+1}),p_{i+1})$
as well as $L_{w+1}$. Then we can compute $L_1,\ldots, L_w$ in time $O(mn)$.
\end{restatable}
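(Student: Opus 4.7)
The approach is a direct adaptation of the merge step used in the fast implementation of EES (described at the end of Section~\ref{sec:ees}). The key observation is that because $(W,X)$ is an equal-shares solution, for every $p_i \in W$ every voter in $N_{p_i}(X)$ pays exactly $\delta_i := \cost(p_i)/|N_{p_i}(X)|$, so
\[
r_i(v) \;=\; r_{i+1}(v) + \delta_i \cdot \mathbb{I}[v \in N_{p_i}(X)].
\]
Thus passing from $L_{i+1}$ to $L_i$ simply amounts to increasing the entries corresponding to voters in $N_{p_i}(X)$ by the constant $\delta_i$, leaving the remaining entries unchanged.

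\textbf{Key steps.} First, I would maintain each $L_i$ as a sorted list whose entries are tagged with the voter identity. Given $L_{i+1}$, I would compute $L_i$ by a three-pass procedure: (1) in a single linear pass through $L_{i+1}$, split it into two sorted sublists $L_{i+1}^{\mathrm{pay}}$ (entries for voters in $N_{p_i}(X)$) and $L_{i+1}^{\mathrm{no}}$ (entries for voters in $N \setminus N_{p_i}(X)$); each sublist inherits the sort order from $L_{i+1}$; (2) add the constant $\delta_i$ to every entry of $L_{i+1}^{\mathrm{pay}}$, which preserves its sorted order; (3) merge the two sorted lists $L_{i+1}^{\mathrm{pay}} + \delta_i$ and $L_{i+1}^{\mathrm{no}}$ into a single sorted list $L_i$ using a standard merge in $O(n)$ time. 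This is exactly the same data-structural trick as in the fast EES implementation, only here it is applied in reverse, going from $L_{w+1}$ (the leftover-budget list, which is given as input) down to $L_1$.

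\textbf{Complexity.} Each transition $L_{i+1} \mapsto L_i$ takes $O(n)$ time and we do this for $i = w, w-1, \ldots, 1$ with $w \leq m$, giving total time $O(mn)$. Membership queries $v \in N_{p_i}(X)$ can be resolved in $O(1)$ after an $O(n)$ preprocessing per project (building a bit-array indicator of $N_{p_i}(X)$), which fits within the $O(mn)$ budget.

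\textbf{Main obstacle.} There is no real obstacle; the content of the lemma is essentially bookkeeping. The only subtlety worth emphasising in the write-up is that the initial list $L_{w+1}$ really is the sorted leftover-budget list (so the induction has a valid base case), and that the sort order is preserved under adding a uniform constant to a subset of entries, which is what legitimises the merge step and keeps the per-level cost linear rather than $O(n \log n)$.
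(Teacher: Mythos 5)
Your proposal is correct and follows essentially the same argument as the paper: both exploit that equal-shares payments shift all payers' entries by the same constant $\delta_i=\cost(p_i)/|N_{p_i}(X)|$, so $L_i$ is obtained from $L_{i+1}$ by a split--shift--merge in $O(n)$ time, giving $O(mn)$ over all $w\le m$ projects. Your write-up is if anything slightly more careful than the paper's (e.g., spelling out the bit-array membership test and writing the non-payers as $N\setminus N_{p_i}(X)$ rather than $O_{p_i}(X)$), but the content is identical.
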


\begin{proof}
$L_i$ will contain the values $r_i(v)$ sorted in non-decreasing order. We note that either a voter contributes to project $p_i$ or she does not, and in the former case, every such voter contributes an equal amount by the definition of exact equal shares. So to obtain $L_i$ from $L_{i+1}$ it suffices to merge the sorted lists $L_{i+1}[N_{p_i}(X)]+\frac{\cost(p_i)}{|N_{p_i}(X)|}$ (voters who pay for $p_i$)
and $L_{i+1}[O_{p_i}(X)]$ (voters who do not pay for $p_i$) in time $O(n)$.
Since we create $w=O(m)$ lists this way, the overall runtime is $O(mn)$.
\end{proof}

\thmgenruntime*
\begin{proof}
By \Cref{lem:lists} lists $L_1,\ldots L_{w}$ can be computed in time $O(mn)$ from the ordering $p_1,\ldots,p_{w}$ and $L_{w+1}$.
For the remainder of \Cref{alg:genutil}, we execute \Cref{alg:changegen} $m$ times which by \Cref{lem:addopt_u_runtime} can be implemented in time $O(m+n)$. However, when calling GreedyProjectChange in \Cref{line:copy}, we partially copy the lists $L_1,\ldots, L_{w+1}$ to obtain the sublists $L_1(O_p(x))\ldots L_{w+1}(O_p(X))$ which takes time $O(mn)$, resulting in an overall runtime of $O(m^2n)$. This completes the proof.
\end{proof}

\thmgenutil*
\begin{proof}The proof proceeds in analogy to the proof of \Cref{gnbtheorem} with minor differences. Let $(W^*,X^*)=EES(E(b+nd^*) )$.
Since $(W,X)
\neq (W^*,X^*)$ we show that there exists $(p,V_p)$ that certifies the instability of $(W,X)$ for budget $b^*$, i.e. after an increase of $\frac{b^*-b}{n}$ per voter from budget $b$;
indeed, there exists a first iteration, say iteration $q$, in which EES selects a project $p$ paid for by voters $V_p$ for $E(b+nd^*)$, such that in the $q$th iteration for $E(b)$ for budget $b$ the same does not happen (more precisely, the algorithm may terminate before iteration $q$ or it may not select $p$ in iteration $q$, or it may be pair for by a different set of voters).
For budget $b$, either $p$ is never selected or it is eventually selected. 
In the first case this means that for budget $b^*$ at the end of iteration $q-1$, every voter $v\in V_p$ has $r_i(v) \geq \frac{\cost(p)}{|V_p|}$ for some $i$ with $BpB(p_i),p_i)<_t(\frac{u(p) |V_p|}{\cost(p)},p)$ which by \Cref{lem:obs} implies that $(V_p,p)$ certifies the instability of $(W,X)$ for budget $b^*$, as claimed.
If instead $p$ is (eventually) selected for budget $b$, it will be paid for by a strict subset of the voters $V_p$; indeed if there was be a voter $v\notin V_p$ paying for $p$, then this voter would also pay for $p$ in EES run on $E(b^*)$. Clearly $p$ cannot be paid for by all of $V_p$, as then EES would select $p$ paid for by $V_p$ in iteration $q$ for both elections $E$ and $E^*$, contrary to our assumption. 
However, as before, every voter $v\in V_p$ has $r_i(v) \geq \frac{\cost(p)}{|V_p|}$ for some $i$ with $BpB(p_i),p_i)<_t(\frac{u(p) |V_p|}{\cost(p)},p)$ which by \Cref{lem:obs} implies that $(V_p,p)$ certifies the instability of $(W,X)$ for budget $b^*$, as claimed.
So since an increase of $\frac{b^*-b}{n}$ per voter results in $p$ certifying the instability of $E$, by \Cref{thm:opt}, \Cref{alg:one} for $p$  will return an amount $d\leq\frac{b^*-b}{n}$, so that the output $b'$ of \Cref{alg:two} is at most $b^*$. Furthermore, for $b'=b+dn$ where $d$ is computed by \Cref{alg:two}, there exists $(p',V_p')$ that certifies instability of $(W,X)$. Since EES returns stable outcomes, this implies that for budget $b'$, EES does not return $(W,X)$ i.e. $EES(E(b'))\neq (W,X)$ and so by our definition of $b^*$ it follows that $b^*\leq b'$.
This concludes the proof.
\end{proof}

\section{Further Experimental Results}\label{app:experiments}

\paragraph{Efficiency on Non-Monotonic Instances}
\Cref{fig:non_mono} examines cases where the optimal spending efficiency is achieved after the point where the true budget is first overspent.

\begin{figure}[ht!]
    \centering
    \includegraphics[width=0.99\linewidth]{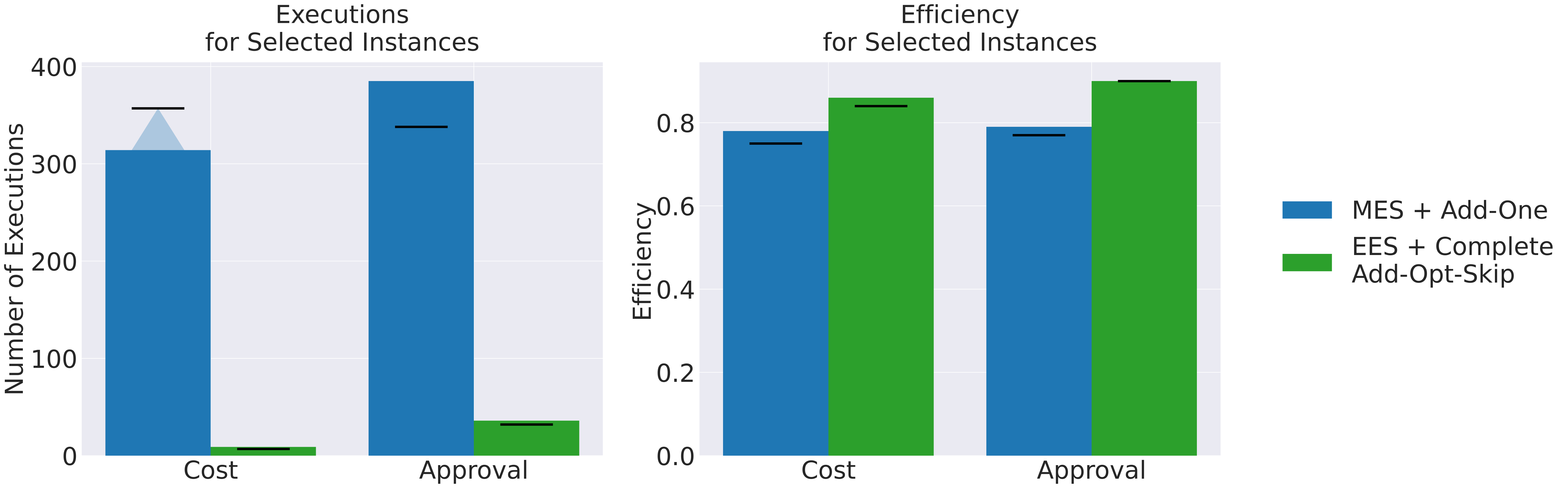}
    \caption{Graph showing executions and spending efficiency for identified cases where the optimal virtual budget occurs after the point at which the true budget is first overspent. In these cases, EES with {\sc add-opt-skip} achieves much higher spending efficiency than MES with {\sc add-one}, increasing from an average of 78\% to 86\% for the 3 identified instances with cost utilities and from 78\% to 90\% for the 14 identified instances with cardinal utilities.}
    \label{fig:non_mono}
\end{figure}



\begin{figure*}[ht!]
\begin{minipage}[t]{0.48\textwidth}
 \centering
\includegraphics[width=\textwidth]{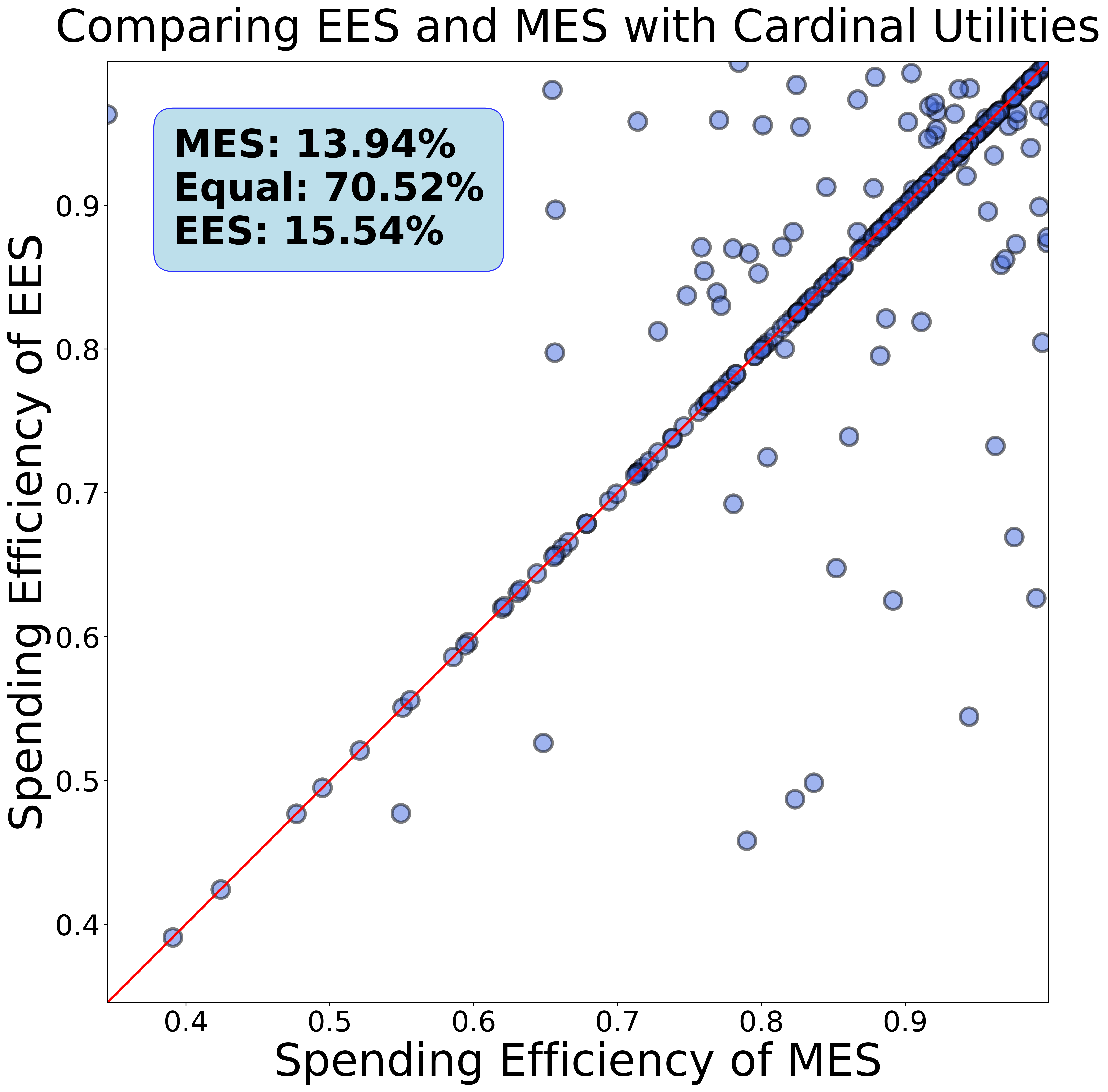}
 \captionsetup{width=.8\linewidth}
\caption{Spending efficiency of EES with {\sc add-opt-skip} vs. MES with {\sc add-one}: cardinal utilities.}
\label{fig:budget_jump2}
\end{minipage}
\hfill
\begin{minipage}[t]{0.48\textwidth}
\centering
\includegraphics[width=\textwidth]{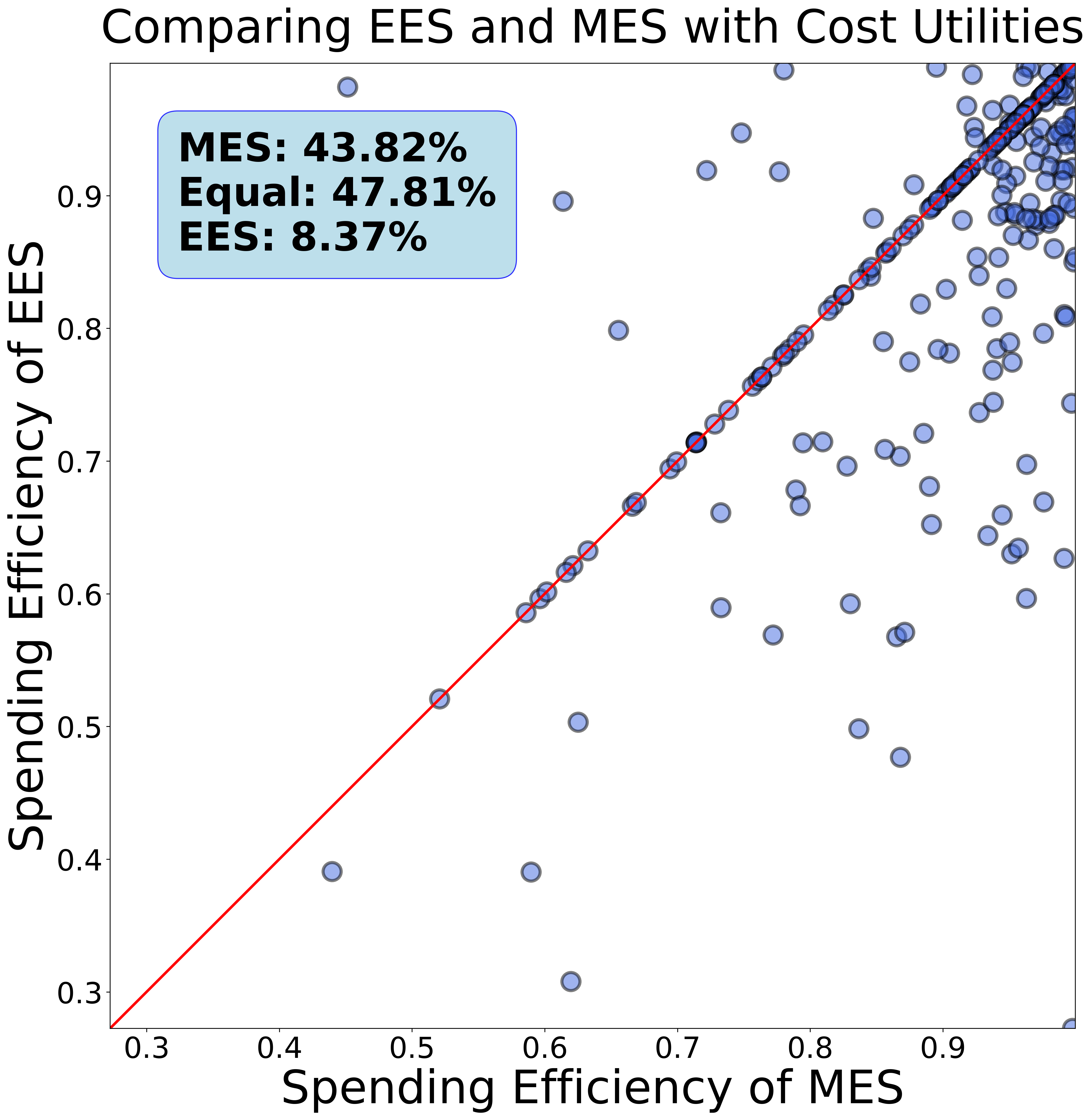}
\captionsetup{width=.8\linewidth}
\caption{Spending efficiency of EES with {\sc add-opt-skip} vs. MES with {\sc add-one}: cost utilities.}
\label{fig:mes_vs_skip}
\end{minipage}
\end{figure*}

\paragraph{Dataset Analysis}
\Cref{fig:all_figures} presents key characteristics of our dataset, including distributions of voters, projects, and budgets.

\begin{figure*}[ht!]
    \centering
    \begin{minipage}[t]{\textwidth}
        \begin{subfigure}[b]{0.49\textwidth}
            \centering
            \includegraphics[width=\textwidth]{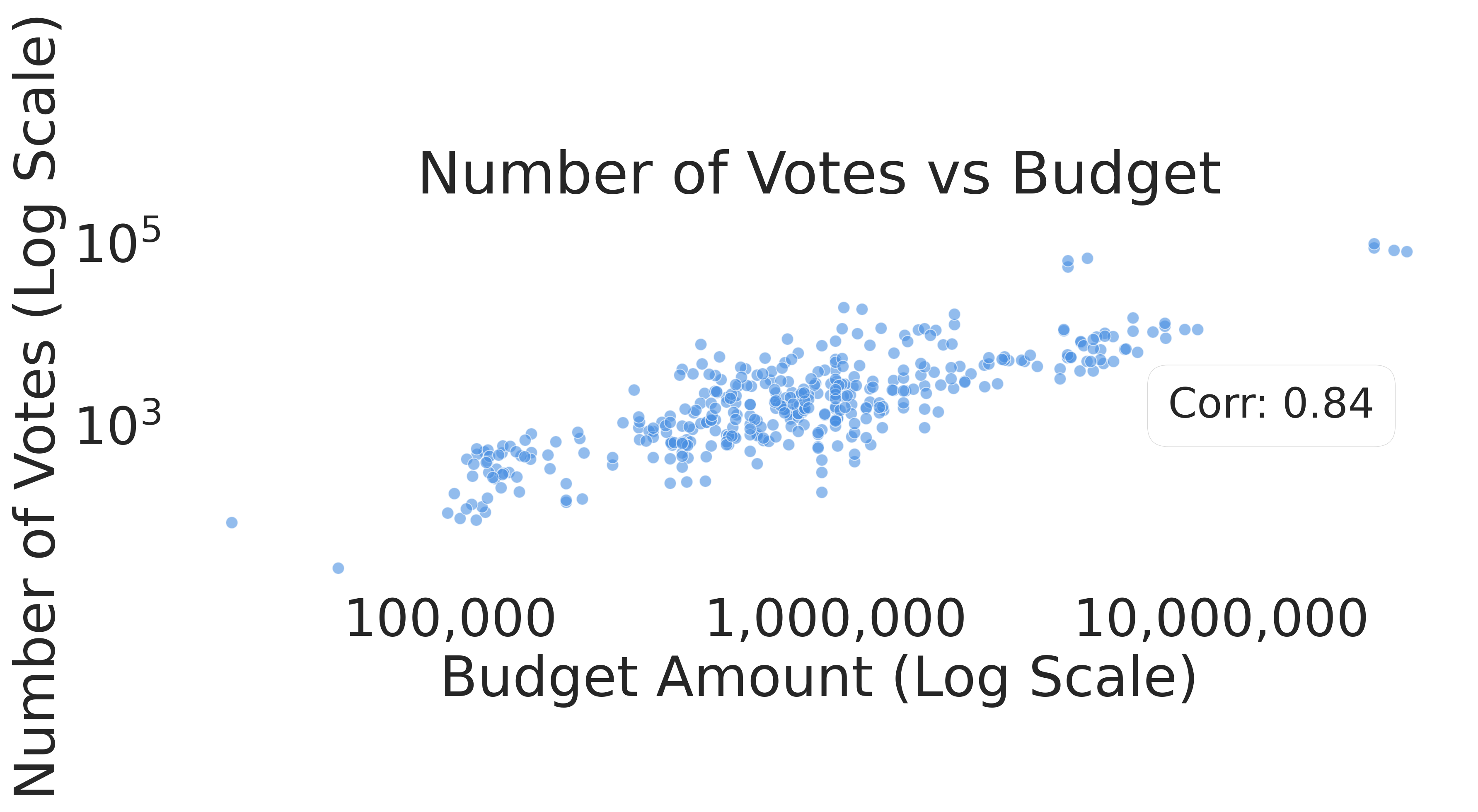}
            \caption{Number of voters versus budget amount in the selected Pabulib instances.}
            \label{fig:figure1}
        \end{subfigure}
        \hfill
        \begin{subfigure}[b]{0.49\textwidth}
            \centering
            \includegraphics[width=\textwidth]{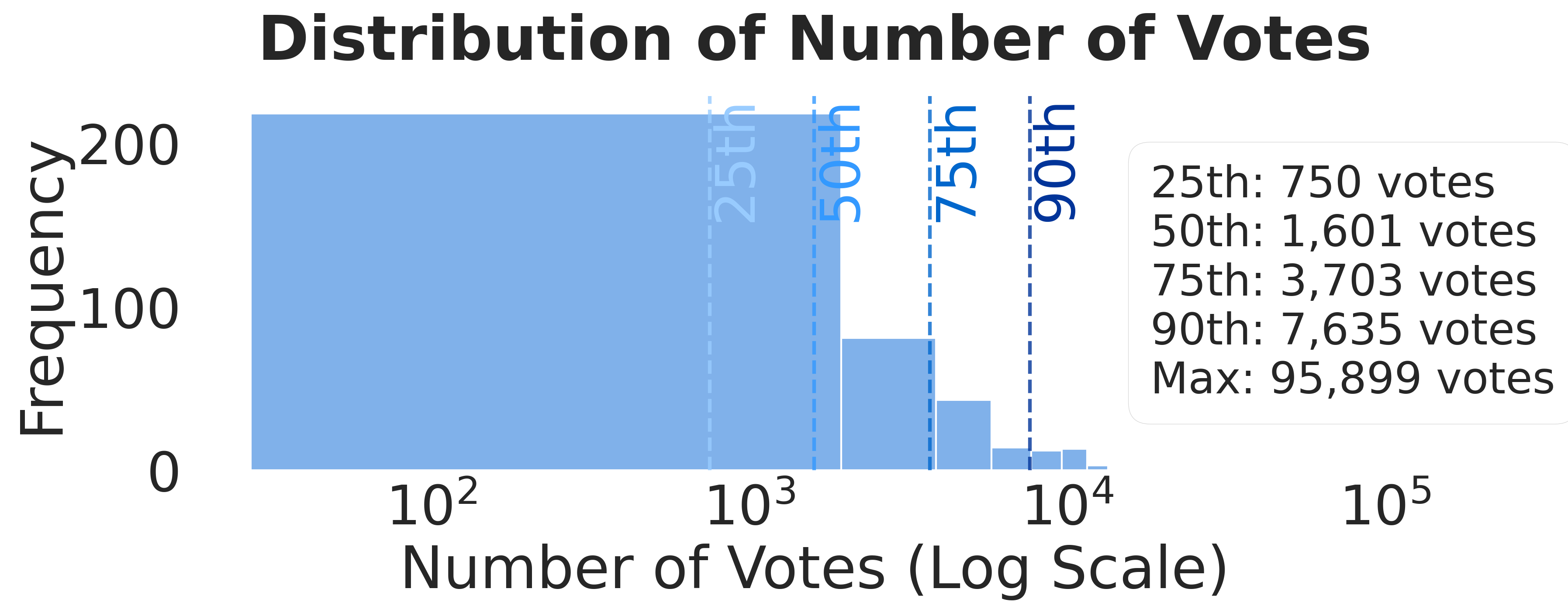}
            \caption{Histogram of the number of voters across all selected Pabulib instances.}
            \label{fig:figure2}
        \end{subfigure}
        
        \vspace{2.5em}
        
        \begin{subfigure}[b]{0.49\textwidth}
            \centering
            \includegraphics[width=\textwidth]{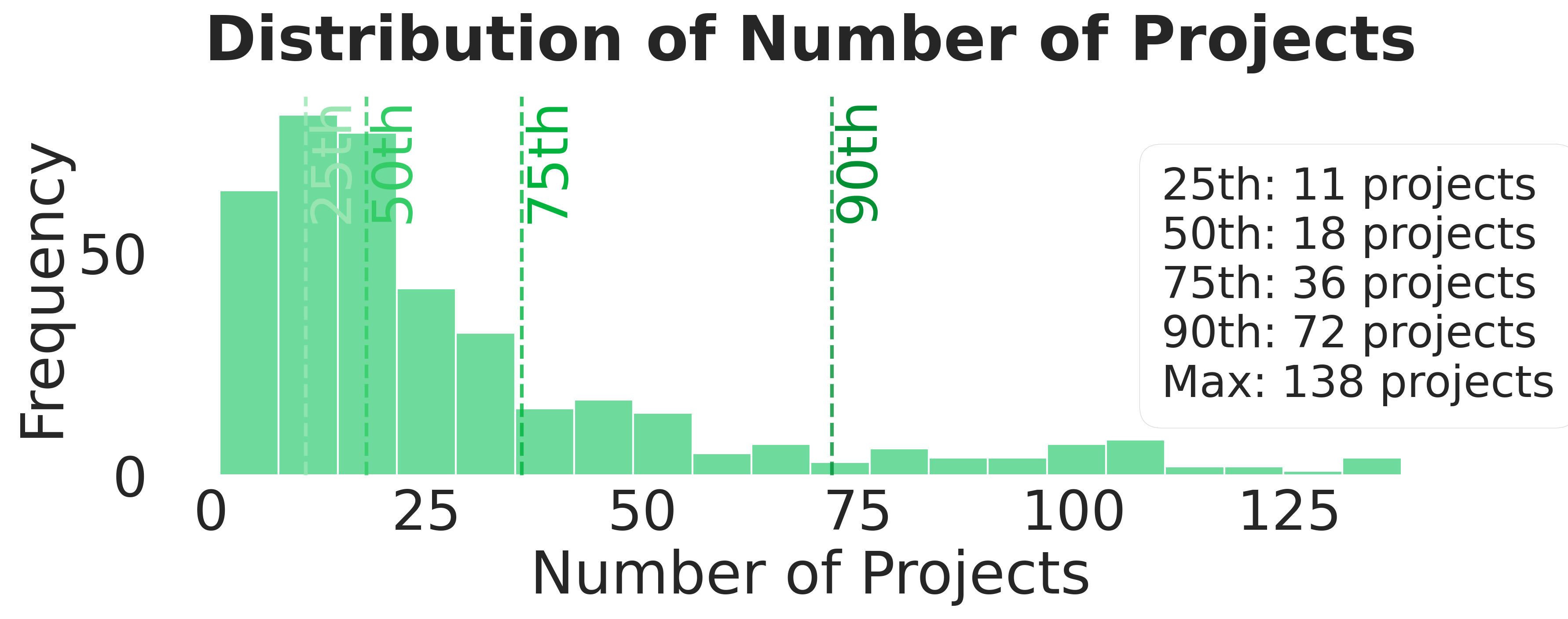}
            \caption{Histogram of the number of projects across all selected Pabulib instances.}
            \label{fig:figure3}
        \end{subfigure}
        \hfill
        \begin{subfigure}[b]{0.49\textwidth}
            \centering
            \includegraphics[width=\textwidth]{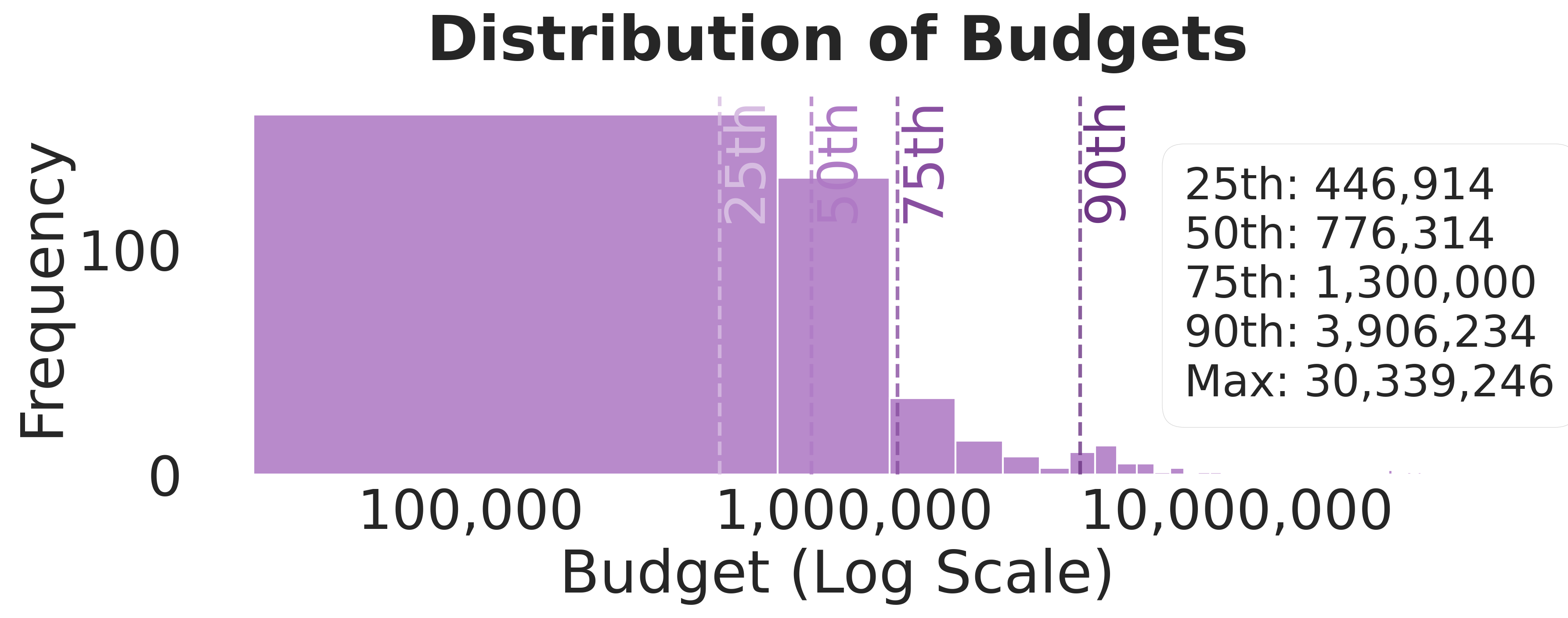}
            \caption{Histogram of the budget size across all selected Pabulib instances.}
            \label{fig:figure4}
        \end{subfigure}
        
        \vspace{1.5em}
        \captionsetup{width=.95\linewidth}
        \caption{Dataset characteristics showing the distribution of voters, projects, and budgets across all analyzed Pabulib instances.}
        \label{fig:all_figures}
    \end{minipage}
\end{figure*}


\begin{figure*}[ht!]
\begin{minipage}[t]{0.48\textwidth}
\centering
\includegraphics[width=\textwidth]{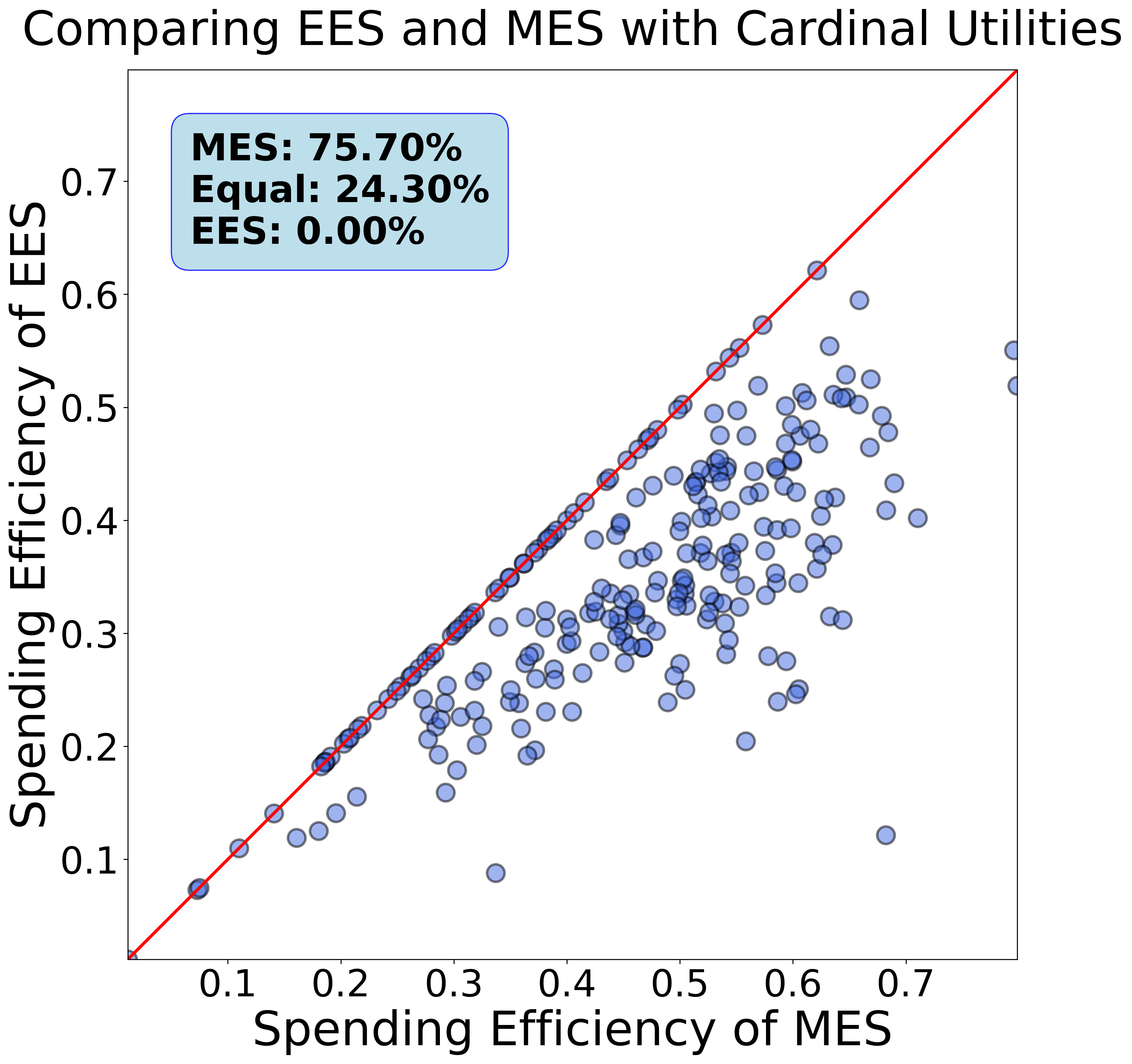}
\captionsetup{width=.8\linewidth}
\caption{A comparison of the spending efficiency of MES vs. EES without a completion method: cardinal utilities.}
\label{fig:no_completion}
\end{minipage}
\hfill
\begin{minipage}[t]{0.48\textwidth}
\centering
\includegraphics[width=\textwidth]{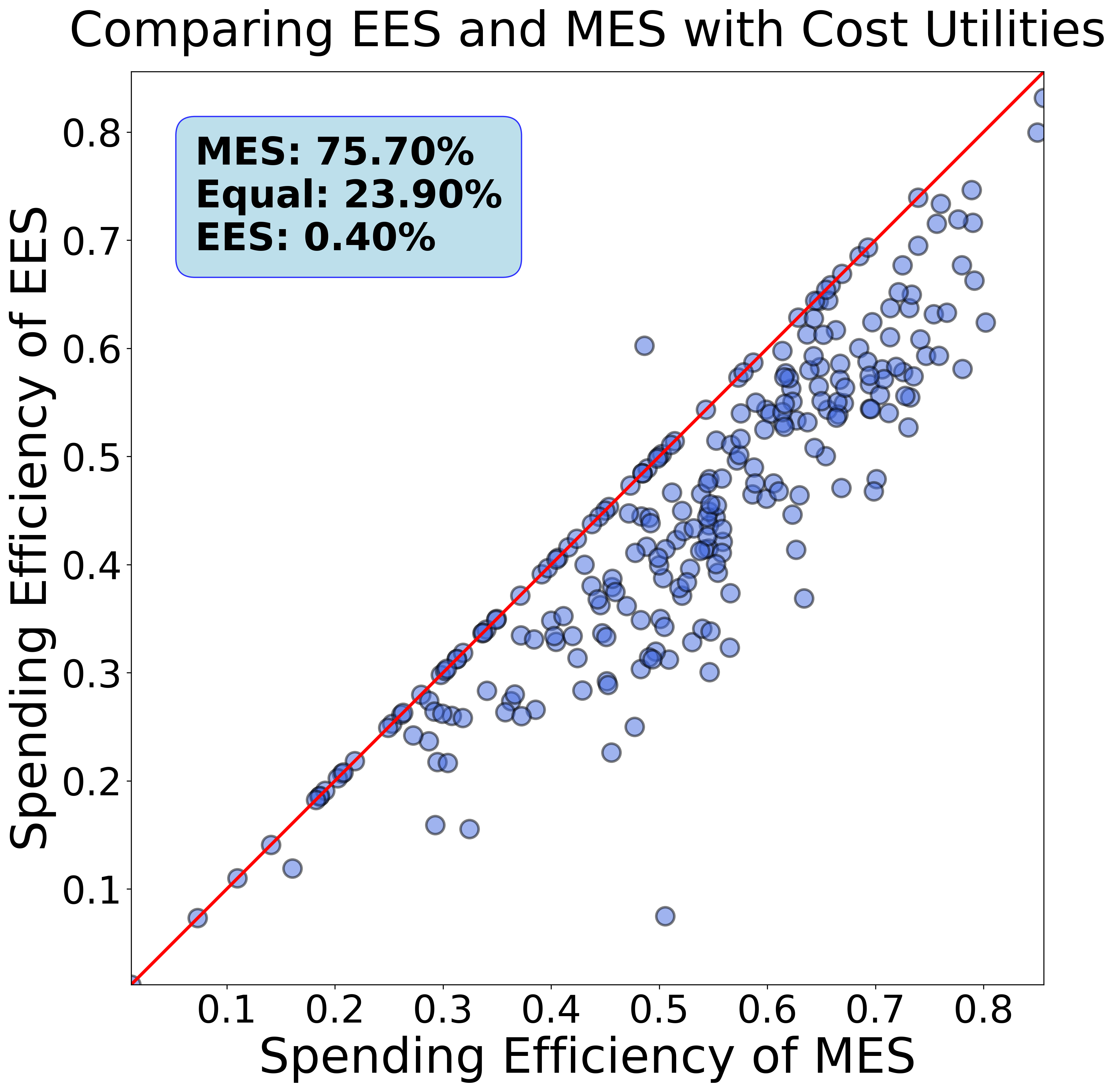}
\captionsetup{width=.8\linewidth}
\caption{A comparison of the spending efficiency of MES vs. EES without a completion method: cost utilities.}
\label{fig:no_completion_cost}
\end{minipage}
\end{figure*}

\paragraph{Case Study: Stare Implementation}
\Cref{fig:enter-label} examines a specific implementation from Stare, Poland, demonstrating how project selection changes with budget allocation.

\begin{figure}[ht!]
    \centering
    \includegraphics[width=0.99\linewidth]{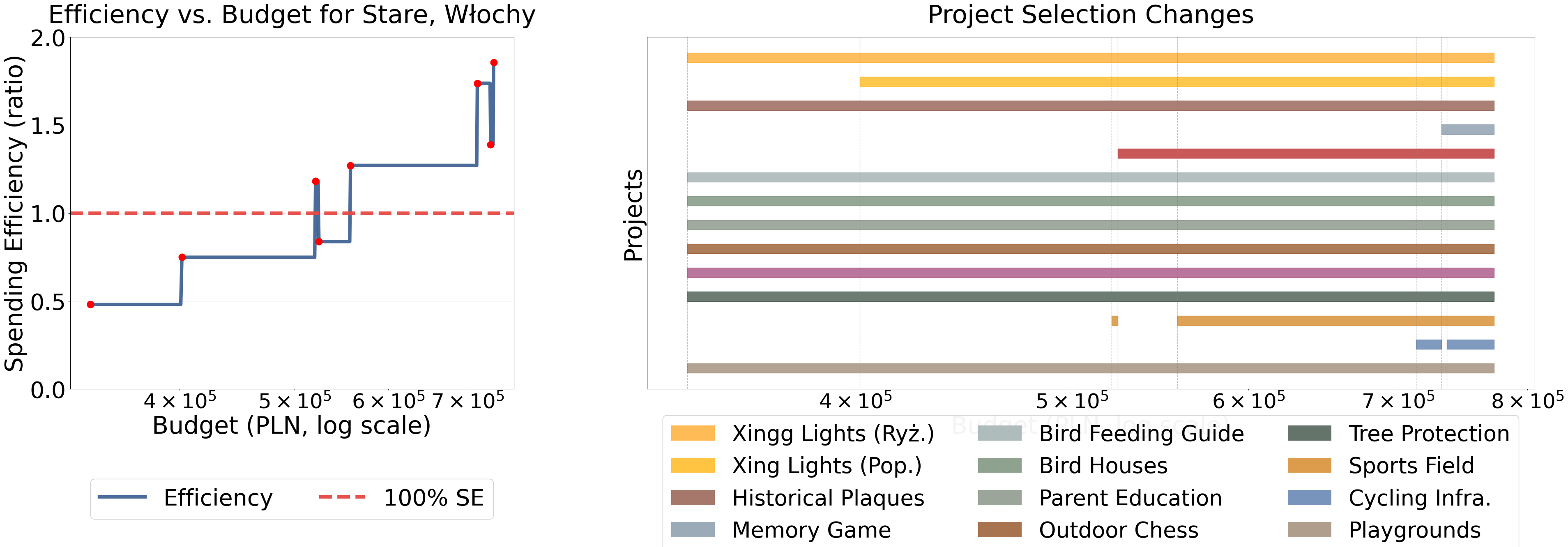}
    \caption{Graph showing which projects are implemented for a given budget for an instance from Stare, Poland using {\sc add-one} until all projects are selected for the first time. The left shows the spending efficiency for the given virtual budget, the right shows the specific projects implemented for that budget.}
    \label{fig:enter-label}
\end{figure}

\end{document}